\newtheorem{theorem}{Theorem}[section]
\newtheorem{proposition}[theorem]{Proposition}
\newtheorem{corollary}[theorem]{Corollary}
\newtheorem{lemma}[theorem]{Lemma}
\newtheorem{claim}[theorem]{Claim}
\newtheorem{question}[theorem]{Question}
\newtheorem{conjecture}[theorem]{Conjecture}
\theoremstyle{definition}
\newtheorem{definition}[theorem]{Definition}
\theoremstyle{remark}
\newtheorem{remark}[theorem]{Remark}
\newcommand{\R}{\mathbb{R}}
\newcommand{\Z}{\mathbb{Z}}
\newcommand{\N}{{\mathbb N}}
\newcommand{\F}{\mathcal{F}}
\newcommand{\G}{\mathcal{G}}
\newcommand{\pr}{\mathrm{pr}}
\newcommand{\heading}[1]{\medskip\par\noindent{\bf #1}}
\newcommand{\BA}{\mathbf{A}}
\newcommand{\BB}{\mathbf{B}}
\newcommand{\BU}{\mathbf{U}}
\newcommand{\BV}{\mathbf{V}}
\newcommand{\BX}{\mathbf{X}}
\newcommand{\red}[1]{{\color{red}#1}}
\newcommand{\blue}[1]{{\color{blue}#1}}
\definecolor{orange}{rgb}{0.9,0.45,0}
\newcommand{\oldlemma}[1]{}
\DeclareMathOperator{\conv}{conv}
\DeclareMathOperator{\interior}{Int}
\DeclareMathOperator{\st}{st}
\DeclareMathOperator{\sd}{sd}
\DeclareMathOperator{\cone}{Con}
\begin{document}

\title{Nerves of good covers are algorithmically unrecognizable}

\author{Martin Tancer\thanks{Department
of Applied Mathematics,
Charles University in Prague, Malostransk\'{e} n\'{a}m.
25, 118~00~~Praha~1
\emph{and}
Institutionen f\"{o}r matematik, Kungliga Tekniska H\"{o}gskolan, 100~44
Stockholm. 
Partially supported by the ERC Advanced Grant No.
267165 and by the Center of Excellence -- Inst.\ for Theor.\
Comp.\ Sci., Prague (project P202/12/G061 of GA~\v{C}R).
}
\and 
Dmitry Tonkonog\thanks{
Department of Differential Geometry and Applications, Faculty of Mechanics and Mathematics,
Moscow State University, GSP-1, Leninskie Gory, Moscow 119991, Russia,
\emph{and}
Delone Laboratory of Discrete and
Computational Geometry, Yaroslavl State University, 
Sovetskaya str.~14, Yaroslavl 150000, Russia. 
Partially supported by the Russian Government project 11.G34.31.0053,
RFBR grant 12-01-00748-a,
and by the Dobrushin Scholarship at the Independent University of Moscow.
}
}

\maketitle

\begin{abstract}
A \emph{good cover} in $\R^d$ is a collection of open contractible sets in
$\R^d$ such that the intersection of any subcollection is either contractible or
empty. Motivated by an analogy with convex sets, intersection patterns of good
covers were studied intensively. Our main result is that
intersection patterns of good covers are algorithmically unrecognizable.

More precisely, the intersection pattern of a good cover can be stored
in a simplicial complex called \emph{nerve} which records which subfamilies of
the good cover intersect. A simplicial complex is \emph{topologically
$d$-representable} if it is isomorphic to the nerve of a good cover in $\R^d$.
We prove that it is algorithmically undecidable
whether a given simplicial complex is
 topologically $d$-representable 
for any fixed $d \geq 5$. 
The result remains also valid if we replace
good covers with acyclic covers or with covers by open $d$-balls. 

As an auxiliary result we prove that if a simplicial complex is PL embeddable
into $\R^d$ then it is topologically $d$-representable. We also supply this
result with showing that if a ``sufficiently fine'' subdivision of a $k$-dimensional complex is $d$-representable and $k \leq \frac{2d-3}3$, then the complex is PL embeddable into $\R^d$.

\end{abstract}

\section{Introduction}
Many results in discrete geometry are devoted to studying intersection patterns
of convex sets. A pioneering result in this respect is the Helly
theorem~\cite{helly23}. It states that whenever $C_1, \dots, C_n$
are convex sets in $\R^d$, $n \geq d+1$, such that the intersection of any
$d+1$ of these sets is nonempty, then the intersection of all sets is
nonempty. Many results of similar flavor are known and the interested reader is
referred to the survey paper~\cite{tancer11surveyarxiv} for more details.

\heading{Nerves and $d$-representable complexes.}
For a collection of sets,
its intersection pattern 
can be 
encoded into a combinatorial object that
is called
the nerve of the collection.

Consider a collection of sets $\F=\{F_1, \dots,
F_n\}$. 
The \emph{nerve} of $\F$ is the simplicial
complex\footnote{We briefly
recall simplicial complexes
and related definitions in
section~\ref{s:pl}.

} whose $k$-dimensional
faces are the subcollections $\{F_{i_1},\ldots,F_{i_k}\}$ of $\F$ 
such that $F_{i_1}\cap\ldots\cap F_{i_k}\neq\emptyset$.
In particular, the nerve of $\F$
has $n$ vertices
$F_1, \dots, F_n$ (provided that $F_i$ are nonempty).

\begin{definition}
A \emph{convex cover} in $\R^d$
is a finite collection of
open convex sets $U_1,\ldots,U_n\subset \R^d$.
\end{definition}

\begin{remark}
Note that we do not require
$\bigcup_{i=1}^n U_i=\R^d$.
The word `cover' should not be misleading.
\end{remark}
 
\begin{definition}
A simplicial complex is \emph{$d$-representable} if it is
isomorphic 
to the nerve of a convex cover in $\R^d$. 
\end{definition}

\heading{Topological $d$-representability.}
The following
generalization of a 
convex cover is rather well-known
in topology.

\begin{definition}
A \emph{good cover} in $\R^d$ is a finite collection of
open sets $U_1,\ldots,U_n$ in $\R^d$ such that 
the intersection
$U_{i_1}\cap\ldots\cap U_{i_k}$
of any (nonempty) subcollection
$\{U_{i_1},\ldots,U_{i_k}\}$ is either empty 
or contractible. (In particular,
$U_i$ are contractible.)
\end{definition}

\begin{remark}
A convex cover 
is a good cover.
\end{remark}

\begin{definition}
\label{d:top_rep}
A simplicial complex is 
\emph{topologically $d$-representable}\footnote{
Topological $d$-representability was first introduced
in~\cite{tancer12counterex}. However, the definition was slightly different; see 
Definition~\ref{d:rep_balls}
.
}
if it is
isomorphic
to the nerve of a good cover in $\R^d$. 
\end{definition}

Classifying 
intersection patterns of 
convex (or good) covers in $\R^d$ is
equivalent to 
classifying $d$-representable 
(resp.~topologically $d$-representable) complexes.

Intersection patterns (formally, nerves)
of good covers 
inherit many properties of 
intersection patterns of
convex covers. For example, the Helly theorem
was generalized to good covers again by Helly~\cite{helly30}.
Another example is the well-known Nerve theorem,
see Theorem~\ref{t:nerve} below. (Probably, this theorem is the main
reason that makes good covers easier to study than arbitrary
collections of non-convex sets in $\R^d$.) Other examples include topological
versions of various Helly-type theorems~\cite{alon-kalai-matousek-meshulam02,
kalai-meshulam05,kalai-meshulam08}.

The main result of our paper, Theorem~\ref{t:unrecog},
is in the opposite spirit.
We show that from the general algorithmic viewpoint,
intersection patterns of good covers
behave differently (in fact, much worse) than
intersection patterns of convex covers.

First, recall the following theorem.

\begin{theorem}[\cite{wegner67}, see also~\cite{tancer11surveyarxiv}]
\label{t:recog_convex}
It is algorithmically \emph{decidable} whether a given simplicial complex is
$d$-representable. (There is actually a PSPACE algorithm for recognition
$d$-representable simplicial complexes.)
\end{theorem}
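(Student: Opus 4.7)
The plan is to prove the theorem in two steps: first a structural lemma that reduces arbitrary $d$-representations to polytopal ones of bounded combinatorial complexity, and then a translation of the resulting problem into the first-order theory of the reals.

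\textbf{Step 1 (polytope reduction).} The main structural lemma to establish is: if a simplicial complex $K$ on $n$ vertices is $d$-representable, then it is already representable by open convex polytopes, each bounded by at most $N=N(n,d)$ halfspaces, for some explicit computable function $N$. To prove this, I would start with an arbitrary representation $U_1,\dots,U_n$ and manufacture a polytopal substitute as follows. For every face $\sigma$ of $K$, fix a witness point $p_\sigma\in\bigcap_{v\in\sigma}U_v$. Let $V_i$ be a small open neighbourhood of $\conv\{p_\sigma:i\in\sigma\in K\}$. By construction $\bigcap_{v\in\tau}V_v$ contains $p_\tau$ for every face $\tau$, so nonemptiness is immediate on faces. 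The delicate part is preserving emptiness on non-faces: the convex hulls can ``stick out'' of the original $U_v$'s, so for each minimal non-face $\tau$ one must additionally trim the $V_i$'s using finitely many affine hyperplanes witnessing the separation of $\bigcap_{v\in\tau}U_v=\emptyset$. A compactness/separation argument shows that a number of hyperplanes depending only on $n$ and $d$ suffices.

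\textbf{Step 2 (first-order reduction).} Given the bound $N$, one encodes $d$-representability of $K$ as a sentence $\Phi_K$ in the first-order language of ordered fields, whose truth over $\R$ is equivalent to $K$ being $d$-representable. The outer existential quantifiers range over the $O(Nnd)$ real coordinates describing, for each vertex of $K$, a polytope $P_v\subset\R^d$ given by at most $N$ strict linear inequalities. For each face $\sigma\in K$ there is an inner existentially quantified block of $d$ variables asserting that some point lies in $\bigcap_{v\in\sigma}P_v$; for each non-face $\tau\notin K$ there is an inner universally quantified block of $d$ variables asserting that no point lies in $\bigcap_{v\in\tau}P_v$. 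The resulting sentence has quantifier alternation depth only two and total length polynomial in $|K|$. The Tarski--Seidenberg theorem immediately gives decidability, while the algorithms of Canny and Renegar for the first-order theory of the reals with a bounded number of quantifier alternations and polynomially many variables per block yield a PSPACE decision procedure.

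\textbf{Main obstacle.} The bulk of the work lies in the structural lemma of Step~1: one needs not only to bound the number of halfspaces required but also to guarantee that the perturbations converting arbitrary convex open sets into controlled polytopes simultaneously preserve nonemptiness of $\bigcap_{v\in\sigma}V_v$ for every face $\sigma$ and emptiness of $\bigcap_{v\in\tau}V_v$ for every non-face $\tau$. Once this combinatorial taming of convex representations is available, Step~2 is a routine application of quantifier elimination over the reals.
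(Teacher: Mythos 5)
The paper does not prove this theorem; it is quoted from Wegner with the survey \cite{tancer11surveyarxiv} as a secondary reference, so there is no in-paper argument to compare against. Your two-step strategy (reduce to representations of bounded algebraic complexity, then invoke the first-order theory of the reals) is exactly the standard route, but both steps as written have problems.

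In Step 1 the ``delicate part'' you identify does not exist, and the fix you sketch for it is the real gap. If $p_\sigma\in\bigcap_{v\in\sigma}U_v$ for every face $\sigma$, then every $p_\sigma$ with $i\in\sigma$ lies in $U_i$, and since $U_i$ is \emph{convex} we get $\conv\{p_\sigma: i\in\sigma\in K\}\subseteq U_i$ outright; the hulls cannot ``stick out.'' Hence for a non-face $\tau$ one has $\bigcap_{v\in\tau}V_v\subseteq\bigcap_{v\in\tau}U_v=\emptyset$ (after shrinking the open thickening so that each $V_i$ stays inside the open set $U_i$, which is possible because the hull is compact), and no trimming hyperplanes are needed. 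This is fortunate, because the trimming step you propose is unjustified: cutting the $V_i$ by hyperplanes to kill a non-face intersection while simultaneously preserving the nonemptiness of \emph{all} face intersections is not obviously possible, and the appeal to ``a compactness/separation argument'' with a bound $N(n,d)$ is an assertion, not a proof. Deleting that step and using the convexity observation repairs Step 1; the resulting sets are (neighborhoods of) convex hulls of at most $2^{n-1}$ witness points, which is the bounded-complexity family you need.

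In Step 2, decidability indeed follows from Tarski--Seidenberg, but your PSPACE claim is not supported by the results you cite. Canny's PSPACE algorithm is for the \emph{existential} theory of the reals; for sentences with an $\exists\forall$ alternation the known algorithms (Renegar, Basu--Pollack--Roy) give singly exponential \emph{time}, which does not yield PSPACE. To land in PSPACE you should make the sentence purely existential: quantify the witness points $p_\sigma$ (faces are then handled for free), restrict the emptiness conditions to \emph{minimal} non-faces $\tau$ --- after first checking combinatorially, via Helly, that every minimal non-face has at most $d+1$ vertices, which also keeps the sentence polynomial in $|K|$ --- and replace each universal emptiness condition by an existential infeasibility certificate for the corresponding linear system (Farkas/LP duality for the intersection of the polytopes $\conv\{p_\sigma: i\in\sigma\}$, $i\in\tau$). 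With these two corrections the argument is the standard proof of Wegner's theorem.
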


As we will now show, the situation with
topological $d$-representability is
completely different.
The main goal of our paper is to prove the following result.

\begin{theorem}[main result]
\label{t:unrecog}
For each $d\ge 5$, it is algorithmically \emph{undecidable} whether a given simplicial complex is
topologically $d$-representable.
\end{theorem}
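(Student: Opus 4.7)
The plan is to prove Theorem~\ref{t:unrecog} by a Karp reduction from the algorithmically undecidable problem of deciding PL embeddability of a $k$-dimensional simplicial complex into $\R^d$, undecidable for suitable $(k,d)$ with $d\geq 5$ by a result of Matou\v{s}ek--Tancer--Wagner. The bridge between the two problems is supplied by the two auxiliary results announced in the abstract, which together show that in the metastable range $k\leq (2d-3)/3$ a complex $K$ PL-embeds in $\R^d$ if and only if a sufficiently fine subdivision $K'$ of $K$ is topologically $d$-representable.

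The reduction itself is clean: given an instance $K$ of the PL embedding problem of dimension $k\leq(2d-3)/3$, we output the iterated barycentric subdivision $K'=\sd^N K$, where $N=N(k,d)$ is computably chosen to satisfy the fineness requirement of the backward auxiliary result. If $K$ PL-embeds in $\R^d$, so does $K'$, and the forward auxiliary result yields that $K'$ is topologically $d$-representable. Conversely, if $K'$ is topologically $d$-representable, the backward auxiliary result yields a PL embedding $K\hookrightarrow\R^d$. Thus the two problems are Karp-equivalent in this range, and undecidability of PL embeddability transfers.

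For the forward auxiliary result, the proposed construction is: given a PL embedding $f\colon|K|\hookrightarrow\R^d$, for each vertex $v\in K$ define $U_v$ as a thin open $\R^d$-neighborhood of $f(\st_{\sd K}(v))$ contained in a regular neighborhood of $f(|K|)$. With thickness chosen uniformly, $U_{v_1}\cap\dots\cap U_{v_m}$ deformation retracts onto $f(\st_{\sd K}(v_1)\cap\dots\cap\st_{\sd K}(v_m))$, which is either empty or a nonempty open star in the second barycentric subdivision of $K$, hence contractible. The collection $\{U_v\}$ is then a good cover whose nerve is isomorphic to $K$.

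The main obstacle is the backward auxiliary result: reconstructing a PL embedding of $K$ from a good cover of $\R^d$ whose nerve is a fine subdivision $K'$. The plan here is to invoke the Nerve theorem to identify the homotopy type of $|K'|$ with that of $\bigcup_v U_v\subset\R^d$, produce by simplicial approximation a PL map $g\colon|K'|\to\R^d$ that respects the cover (sending $\st(v)$ into $U_v$), and, in the metastable range $k\leq (2d-3)/3$, perturb $g$ to a PL embedding by general position and the Haefliger--Weber theorem. The PL homeomorphism between $|K'|$ and $|K|$ then transports this to an embedding of $K$. The technical heart will be showing that the deleted-product obstruction for $g$ vanishes by virtue of the good-cover structure, so that Haefliger--Weber applies to produce an actual embedding rather than merely an approximately-embedded map.
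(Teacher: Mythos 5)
Your reduction has a fatal dimensional mismatch. The two auxiliary results you invoke only combine in the metastable range $k\le\frac{2d-3}{3}$, but the undecidability of PL embeddability (Matou\v{s}ek--Tancer--Wagner) is known only for $(d-1)$-dimensional (or $d$-dimensional) complexes in $\R^d$, and $d-1\le\frac{2d-3}{3}$ holds for no $d\ge 1$. Worse, inside the metastable range embeddability is not merely ``not known to be undecidable'': for $k\le d/2$ it is decidable in polynomial time via the van Kampen obstruction, and it is plausibly decidable throughout the metastable range --- indeed, combining the two auxiliary results there yields an \emph{algorithm} for recognizing topological $d$-representability of subdivisions (this is exactly Corollary~\ref{c:baryc_decidable}), the opposite of what you need. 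So there is no source instance to reduce from, and the ``backward'' implication is unavailable precisely in the dimensions ($k=d-1$ or $k=d$) where an undecidable embeddability problem exists. The paper explicitly flags this: it could not prove the converse of Theorem~\ref{t:emb_rep} for the relevant $d$-dimensional complexes, and states that the metastable-range result is ``unfortunately not sufficient for our main result.''

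The actual proof must therefore avoid the backward implication altogether. It reduces from Novikov's theorem (undecidability of recognizing the PL $d$-sphere among homology $d$-spheres, $d\ge 5$): one removes a facet from $\Sigma_i$ to get $B_i$, attaches a combinatorial collar $\Gamma$ (a crosspolytope minus two opposite facets) to get $C_i$, and then proves \emph{directly} that if $\pi_1(\Sigma_i)\neq 0$ then $C_i$ is not topologically $d$-representable (Proposition~\ref{p:fund}). That argument is not an embeddability argument: it compactifies $\R^d$ to $S^d$, uses Alexander duality (\v{C}ech cohomology, Lemma~\ref{l:two_components}) to control the components of the complement of the union $\BV$ of cells over the outer collar vertices, exploits the collar structure to show $\BU\cup\BX$ covers the inner component, and derives a contradiction with Seifert--van Kampen since $S^d$ would then decompose as a union of two sets with simply connected intersection, forcing $\pi_1(\BU\cup\BX)\cong\pi_1(B_i)$ to be trivial. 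Your sketch of the forward direction (neighborhoods of stars in the barycentric subdivision) is essentially the paper's Theorem~\ref{t:emb_rep}, but without a substitute for Proposition~\ref{p:fund} --- some intrinsic obstruction to representability of the non-sphere complexes --- your proof cannot be completed.
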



\heading{Algorithmically undecidable problems.}
A decision problem is the following question: given a finite
input string $s$ (over a finite alphabet), decide whether $s$ satisfies a
certain property $P$. Roughly speaking, a decision computational problem is
algorithmically undecidable if there is no algorithm that would solve this
problem for every string $s$. More precisely, there is no Turing machine solving
this problem. However, the reader is not assumed to have background in Turing
machines, since the details about Turing machines are hidden in reduction of
our problem to famous Novikov's theorem (see Theorem~\ref{t:novikov} below). 

Undecidable problems naturally appear in algebra. For instance the
word problem for
groups or semigroups is known to be undecidable~\cite{post47, novikov55}. 
These problems also reflect in topology. For example it is
algorithmically undecidable whether the fundamental group of a given complex is
trivial since the word problem reduces to triviality of the fundamental
group. Another example is the above-mentioned Novikov's theorem. 
Briefly, it states that for each $d\ge 5$ it is algorithmically undecidable
whether a given simplicial complex is homeomorphic to the $d$-sphere.

In combinatorial geometry, undecidable problems are not so frequent (in the
authors' opinion; depending on how broadly combinatorial
geometry is considered). We should mention Wang's tiling problem proved undecidable by
Berger~\cite{berger66} as an example. Our problem is actually on 
the borderline area between  topology and combinatorial geometry. 
We hope that our approach could
have consequences in other problems in combinatorial geometry.

\heading{Other types of covers.}
\begin{definition}
\label{d:rep_balls}
Let us call
a simplicial complex
{\it topologically $d$-representable by balls}
if it is a nerve of a good cover
in $\R^d$ such that the intersection of any (nonempty) subcollection,
unless it is empty,
is not only contractible, but even
homeomorphic to the open $d$-ball.
\end{definition}

In~\cite{tancer12counterex},
topological $d$-representability by balls
was introduced as 
topological $d$-representability (in order to get a stronger result with this
definition).
Definition~\ref{d:top_rep} of
topological $d$-representability that we use in this paper is probably
more standard in the literature,
see, e.g.,~\cite{alon-kalai-matousek-meshulam02, kalai-meshulam08}. (These papers do
not define topological $d$-representability; however, they actually prove some
properties of topologically $d$-representable complexes.)
For completeness we prove the following
modification of Theorem~\ref{t:unrecog}.
\begin{theorem}
\label{t:unrecog_balls}
For each $d\ge 5$, it is algorithmically undecidable whether a given simplicial complex is
topologically $d$-representable by balls.
\end{theorem}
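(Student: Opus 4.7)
The plan is to follow the proof of Theorem~\ref{t:unrecog} essentially verbatim, modifying only the auxiliary PL-embedding lemma so that the good cover it produces consists of open $d$-balls with ball-like intersections. The reduction goes from the same source (a PL-embeddability problem ultimately traced back to Novikov's theorem), producing for each input complex $K$ a simplicial complex $K'$ whose topological $d$-representability by balls we then wish to decide.

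The easy direction follows from the inclusion of properties: every open $d$-ball is contractible, so topological $d$-representability by balls implies topological $d$-representability. Thus, in every negative instance of the reduction used for Theorem~\ref{t:unrecog}, where $K'$ fails to be topologically $d$-representable, $K'$ fails a fortiori to be topologically $d$-representable by balls. This direction transfers at no cost.

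For the other direction, I would strengthen the auxiliary lemma stated in the abstract to the assertion: \emph{if $K$ is PL embeddable into $\R^d$, then $K$ is topologically $d$-representable by balls}. Fix a PL embedding $|K|\hookrightarrow \R^d$, pass to a second derived subdivision $K''$, and for each vertex $v$ of $K''$ define $U_v$ to be a small open regular neighborhood of the closed star $\overline{\st}(v)$ in $\R^d$. Standard PL regular neighborhood theory guarantees that each $U_v$ is an open $d$-ball; the second derived subdivision is used precisely to ensure that any intersection $U_{v_1}\cap\cdots\cap U_{v_k}$ deformation retracts onto a regular neighborhood of $\overline{\st}(v_1)\cap\cdots\cap\overline{\st}(v_k)$, which is either empty (when $\{v_1,\dots,v_k\}$ is not a face of $K''$) or the closed star of a face of $K''$, itself a PL $d$-ball. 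Consequently each nonempty intersection is an open $d$-ball, and the nerve of $\{U_v\}$ is isomorphic to $K''$ (which equals $K$ up to subdivision, after which one patches the construction in the same way as in the proof of Theorem~\ref{t:unrecog}).

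The main obstacle is the combinatorial-to-geometric matching between the nerve of the $U_v$ and the face structure of $K''$: one must choose neighborhoods small enough, using a sufficiently fine derived subdivision, so that $U_{v_1}\cap\cdots\cap U_{v_k}$ does not pick up contributions from geometrically nearby but combinatorially unrelated pieces of $|K''|$, and also so that subdivision does not destroy the reduction (this is where the subdivision portion of the abstract's auxiliary result enters). Once this technical PL regular-neighborhood input is in place, the rest of the argument of Theorem~\ref{t:unrecog} goes through unchanged and yields Theorem~\ref{t:unrecog_balls}.
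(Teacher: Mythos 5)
Your high-level strategy is the paper's: the negative instances transfer for free because representability by balls implies representability, and the positive direction is handled by strengthening the embeddability lemma so that it outputs a cover by balls (this is literally how the paper states Theorem~\ref{t:emb_rep}, so both Theorems~\ref{t:unrecog} and~\ref{t:unrecog_balls} are proved simultaneously). However, your execution of that lemma has genuine gaps. First, you index the cover by the vertices of the subdivision $K''$, so the nerve you obtain is (at best) $K''$, not $K$; this does not show that the given complex $C_i$ is representable by balls, and the proposed ``patch by subdivision'' is not available, because the negative direction (Proposition~\ref{p:fund}) is proved for $C_i$ itself using its specific collar structure and is not known for subdivisions of $C_i$. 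The correct indexing is one set per vertex $v$ of the \emph{original} complex $K$, built from $\st(v,\sd K)$: by Claim~\ref{c:nerve}, $\bigcap_{v\in S}\st(v,\sd K)$ is nonempty exactly when $S\in K$, and is then a cone with apex the barycenter of $S$.

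Second, the asserted behaviour of intersections of closed stars is false as stated: for vertices of a complex $L$, $\overline{\st}(v_1)\cap\cdots\cap\overline{\st}(v_k)$ need be neither empty when $\{v_1,\dots,v_k\}\notin L$ nor equal to the closed star of that face when it is a face. For instance, in $\sd\Delta^2$ the vertices corresponding to the faces $\{a\}$ and $\{b,c\}$ are non-adjacent, yet both closed stars contain the barycenter of $\{a,b,c\}$; so taking stars of \emph{all} vertices of a subdivision destroys the nerve. (This is exactly why the paper takes stars only of the original vertices of $K$ inside $\sd K$.) Third, ``$U_{v_1}\cap\cdots\cap U_{v_k}$ deformation retracts onto a regular neighborhood of the intersection of stars'' is not enough to conclude that the intersection \emph{is} an open $d$-ball—that would only give contractibility, i.e., a good cover. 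One needs the intersection of the neighborhoods to literally equal the neighborhood of the intersection; the paper arranges this via derived neighborhoods (Corollary~\ref{c:nbhood_intersect}) and then invokes collapsibility of cones plus Lemma~\ref{l:collapse_nbhood} to get PL $d$-balls. With these three corrections your argument becomes the paper's proof.
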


Let us also state 
a similar undecidability theorem for another version of covers: acyclic covers.

\begin{definition}
\label{d:acyclic}
An \emph{acyclic cover} in $\R^d$ is a finite collection of
open sets $U_1,\ldots,U_n$ in $\R^d$ such that 
the intersection
$U_{i_1}\cap\ldots\cap U_{i_k}$
of any (nonempty) subcollection
$\{U_{i_1},\ldots,U_{i_k}\}$ is acyclic (i.e., is
empty or has homology of a ball). 
Let us call
a simplicial complex
{\it $d$-representable by an acyclic cover}
if it is a nerve of an acyclic cover in $\R^d$.
\end{definition}

\begin{theorem}
\label{t:unrecog_acyclic}
For each $d\ge 5$, it is algorithmically undecidable whether a given simplicial complex is
topologically $d$-representable by an acyclic cover.
\end{theorem}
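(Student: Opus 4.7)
The plan is to recycle the reduction used to prove Theorem~\ref{t:unrecog} and verify that it detects acyclic covers as well. First, note that every good cover is in particular an acyclic cover, since contractible sets are acyclic; hence the class of topologically $d$-representable complexes is contained in the class of complexes $d$-representable by an acyclic cover. Consequently, the ``if'' direction of any reduction $K\mapsto K'$ from an undecidable problem to topological $d$-representability gives the corresponding direction for the acyclic version for free.

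For the reverse implication, I would revisit the proof of Theorem~\ref{t:unrecog} and verify that the obstruction used to deduce the source condition on $K$ from topological $d$-representability of $K'$ is of a purely homological nature. The central tool here is the homological version of the Nerve Theorem, which states that if $\mathcal U=\{U_1,\dots,U_n\}$ is an acyclic cover of $\bigcup_i U_i$, then the nerve of $\mathcal U$ has the same reduced singular homology as $\bigcup_i U_i$. Since $\bigcup_i U_i$ is an open subset of $\R^d$, its homology vanishes in dimensions $\geq d$, and one obtains the same bounds on Betti numbers (and on the Leray number, which controls missing homology in links) of $K'$ as in the good-cover case. These are exactly the invariants used in the main reduction to recover the topological property $P$ of $K$.

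The hard part is therefore a verification: trace every step in the proof of Theorem~\ref{t:unrecog} where contractibility of intersections is used, and replace it with its acyclic analogue via the homological Nerve Theorem or a Mayer--Vietoris spectral sequence argument. Since all the topological constraints on the nerve of a good cover in $\R^d$ exploited in the main reduction are already phrased in homological language, this verification is expected to be routine; once carried out, essentially the same reduction simultaneously establishes Theorems~\ref{t:unrecog}, \ref{t:unrecog_balls}, and~\ref{t:unrecog_acyclic}, which is presumably why the three statements are grouped together.
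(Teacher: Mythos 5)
The easy direction of your argument is fine: a good cover is an acyclic cover, so the PL-sphere case of the reduction carries over verbatim. The gap is in your claim that ``all the topological constraints on the nerve of a good cover exploited in the main reduction are already phrased in homological language.'' They are not, and they cannot be: the source of undecidability in Novikov's theorem is the fundamental group, and every $\Sigma_i$ is a homology sphere, so homology alone cannot distinguish the two cases. Concretely, the last step of the proof of Proposition~\ref{p:fund} applies the Seifert--van Kampen theorem to $L=\BU\cup\BX$ and $M=\BU\cup\BV\cup V_Y$ and needs to know that $\pi_1(L)$ is nontrivial; for a good cover this follows because $L$ is homotopy equivalent to $B_i$ by the homotopy Nerve Theorem. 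For an acyclic cover the homological Nerve Theorem (Leray) gives no control whatsoever over $\pi_1(\BU\cup\BX)$, so this step collapses. The paper repairs it with a genuinely new argument (Claim~\ref{c:pi1_notsurjective} and Lemma~\ref{l:lift_loop}): the projection $\pr_{\BU\cup\BX}\colon\Delta(\{U_i\})\to\BU\cup\BX$ is a homotopy equivalence for \emph{any} open cover, and loops in the nerve can be lifted to $\Delta(\{U_i\})$ using only connectedness of the cover elements; this suffices to show the inclusion-induced map $\pi_1(L\cap M)\to\pi_1(L)$ is not surjective, which is what Seifert--van Kampen actually needs.

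A second, smaller issue: even the homological steps are not routine. Claims~\ref{c:vx} and~\ref{c:notcover} require the isomorphisms $H_k(\text{nerve})\cong H_k(\text{union})$ to be \emph{natural} with respect to passing to a subcover, and Leray's theorem as usually stated does not hand you this commutative square. The paper proves the needed functoriality (Claim~\ref{c:funct_acyclic}) by an explicit construction (Lemma~\ref{l:ind_acyclic}) that thickens an acyclic cover in $\R^d$ into a good cover in a higher-dimensional Euclidean space with the same nerve and homology-preserving inclusions, and only then invokes the good-cover machinery. Your appeal to a ``Mayer--Vietoris spectral sequence argument'' points in a workable direction but is exactly the nontrivial content you would have to supply; it is not a verification. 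Finally, the invariants you name (Betti numbers, Leray number) are not the ones the main reduction uses; the actual obstruction is the Alexander-duality/collar/van Kampen argument of Section~\ref{s:fund}.
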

Suggestion to investigate also acyclic covers is by Roman
Karasev~\cite{karasev:pc12}. 
There are two reasons why this theorem is worth stating. First, nerves of
acyclic covers (or nerves of families of limited homological complexity) 
are widely investigated since many Helly-type theorems are also valid in this
case; see~\cite{colindeverdiere-ginot-goaoc12, hell05arxiv, kalai-meshulam05}.
Second, acyclic covers behave better than good covers from the algorithmic
viewpoint. If we have a combinatorially defined collection of open sets in
$\R^d$ (say, given as interiors of  polyhedra), then there is no algorithm
deciding whether the cover is good (because of Novikov's theorem), but there is an
algorithm deciding whether the cover is acyclic (
computing homology is
algorithmic). This remark shows that unrecognizability of nerves of good covers
stated in Theorem~\ref{t:unrecog} is not much related to unrecognizability of
good covers themselves.

Although
$d$-representability by balls implies
topological $d$-representability which implies
$d$-representability by acyclic covers, 
there are no a priori implications between
Theorems~\ref{t:unrecog},~\ref{t:unrecog_balls} and~\ref{t:unrecog_acyclic}.
On the other hand, with our approach the proofs are similar. We prove Theorems~\ref{t:unrecog} and
Theorem~\ref{t:unrecog_balls} simultaneously in Section~\ref{sec:planofproof},
postponing the proofs of key results to Sections~\ref{s:etr},~\ref{s:fund}. We
prove Theorem~\ref{t:unrecog_acyclic} only in Appendix~\ref{app_acyclic} because it involves additional technical details.

\section{Preliminaries}
\label{s:pl}
In this section, we quickly recall some basic definitions
and notations mostly concerning simplicial complexes.
A reader new to the topic might also want to
see more substantial literature~\cite{matousek03, hatcher01,
munkres84, rourke-sanderson72}. We recommend to consult preliminaries only
if the need arises.

\heading{Integers.} For an integer $n$ the symbol $[n]$ denotes the set
$\{1, 2, \dots, n\}$. 

\heading{Abstract simplicial complexes.}
Let $V$ be a finite set.
A collection $K$ of subsets of $V$ is a simplicial
complex if, together with each $\alpha\in K$,
we have $\beta\in K$ for every $\beta\subset \alpha$.
Any $\sigma\in K$ with $\# \sigma=k+1$ is called
a $k$-dimensional simplex (or {\it face}) of $K$ (by $\# \sigma$ we mean the
number of elements of $\sigma$).
The set $V$ is the set of \emph{vertices} of $K$. Usually we denote it by
$V(K)$.

Let $U \subset V$. The induced subcomplex of $K$ on $U$ is given by 
$K[U] := \{\sigma \in K\colon \sigma \subseteq U\}$.

Let $K,L$ be two simplicial complexes.
A map $f\colon V(K)\to V(L)$ is a {\it simplicial map} if $f(\alpha) \in L$ for
every $\alpha \in K$.
Two complexes $K,L$ are said to be {\it isomorphic}
if there is a bijective simplicial map $V(K)\to V(L)$.

\heading{Geometric realizations.} 
We work a priori with abstract
simplicial complexes. However, sometimes it is
more convenient to work with geometrical
realizations of abstract simplicial complexes.
Given an abstract simplicial complex $K$, we chose a map $f \colon V(K) \to
\R^m$ for sufficiently large $m$.
Assume that $f$ satisfy the
following properties:
\begin{itemize}
\item The set $f(\alpha)$ is affinely independent for every $\alpha \in K$; and
\item the convex hulls satisfy the relation $\conv(f(\alpha)) \cap \conv(f(\beta)) = \conv(f(\alpha \cap \beta))$
for every $\alpha, \beta \in K$.
\end{itemize}
If $m$ is large enough, then such an $f$ exists. For example, a map
 sending vertices of $K$ injectively to the vertices of a (geometric) simplex
in $\R^m$ is a suitable choice.

For a face $\alpha \in K$ we have the \emph{geometric realization} of
this face
$$
|\alpha| := \conv\{f(v)\colon v \in \alpha\}.
$$
We also have the \emph{geometric
realization} of any subcomplex $X\subset K$ given by
$$
|X| := \bigcup_{\alpha \in X}|\alpha|.
$$ 
Every complex $K$ has a geometric realization $|K|$ and any two geometric
realizations of a given complex are homeomorphic.
We will assume that every complex has a fixed
geometric realization although, in some cases, we keep the right to
determine the particular choice.


If there is no risk of confusing the reader, we 
write $X$ instead of $|X|$ for a subcomplex $X$ of $K$. 
For example, if we
say that complexes $K$ and $L$ are homeomorphic, we actually mean that $|K|$
and $|L|$ are homeomorphic.

\heading{Subdivisions.}
Let $K,K'$ be simplicial complexes.
We say that $K'$ is a subdivision of $K$
if $|K|=|K'|$ and for each face $\sigma'\in K'$
there is $\sigma\in K$ such that $|\sigma'|\subseteq |\sigma|$. Note that
this definition a priori depends on the choice of the geometric realizations.
However, this is not a problem for us if we fix a realization for every complex
as we mentioned above.



\heading{PL maps and embeddings.}
Let $K,L$ be two simplicial complexes.
A continuous map $|K|\to |L|$
is called {\it PL (piecewise-linear)}
if it is linear on the simplices of
a subdivision $K'$ of $K$.
Then, by~\cite[2.14]{rourke-sanderson72}, there is a subdivision
$L'$ of $L$ such that
$f$ maps any simplex of $K'$ to a simplex of $L'$
and thus induces a simplicial map $V(K')\to V(L')$.

A PL map which is a homeomorphism is called a \emph{PL homeomorphism}.

A \emph{PL $d$-ball} is a simplicial complex PL homeomorphic to the $d$-simplex
$\Delta^d$. A \emph{PL $d$-sphere} is a simplicial complex PL homeomorphic to
the boundary of the $d$-simplex $\partial \Delta^d$. We remark that for $d$
large enough there are known examples of simplicial complexes homeomorphic to
the $d$-ball (resp. $d$-sphere) but which are not a PL $d$-ball (resp. PL
$d$-sphere). 

A \emph{PL embedding} of a simplicial complex $K$ into $\R^d$
is an injective map $|K|\to \R^d$ that is linear on
the faces of $K'$ where $K'$ is some subdivision of $K$. A PL $d$-ball always
PL embeds into $\R^d$ since the $d$-simplex PL embeds into $\R^d$. When we
remove a simplex of maximum dimension from a PL $d$-sphere we obtain a PL
$d$-ball~\cite[Corollary 3.13]{rourke-sanderson72}.\footnote{Note that balls
and spheres in the statement of Corollary 3.13 in~\cite{rourke-sanderson72} are
a priori assumed PL.}

\heading{The Nerve Theorem.}
We need the following version of the Nerve Theorem. The homotopy version
is usually attributed to Borsuk~\cite{borsuk48}. We use the formulation from
Hatcher's book~\cite{hatcher01}.  

\begin{theorem}[{\cite[4G.3]{hatcher01}}]
\label{t:nerve}
If $\mathcal U$ is a collection of open sets in a paracompact space $X$ such
that $\bigcup \mathcal U = X$ and every nonempty intersection of finitely many sets in $\mathcal U$ is contractible, 
then $X$ is homotopy equivalent to the nerve of $\mathcal U$.
\end{theorem}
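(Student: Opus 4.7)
The plan is to carry out the classical Mayer--Vietoris blowup (homotopy colimit) argument: I would construct an auxiliary space $B$ that maps simultaneously onto $X$ and onto $|N(\CU)|$ via two projections $\pi_X$ and $\pi_N$, and then show that both projections are homotopy equivalences. Composing one with a homotopy inverse of the other then yields the desired homotopy equivalence $X \simeq |N(\CU)|$.

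For the blowup itself, abbreviate $U_\sigma := \bigcap_{v \in \sigma} U_v$ for each face $\sigma$ of $N := N(\CU)$, and set
\[
B \;:=\; \Bigl(\,\bigsqcup_{\sigma \in N} |\sigma| \times U_\sigma\,\Bigr)\Big/ \!\sim,
\]
gluing $(t,x) \in |\tau| \times U_\sigma$ to its image in $|\sigma| \times U_\sigma$ whenever $\tau \subseteq \sigma$ (so that $|\tau|\subseteq|\sigma|$ and $U_\sigma \subseteq U_\tau$). Projection to the second factor defines $\pi_X\colon B \to X$, and projection to the first factor defines $\pi_N\colon B \to |N|$.

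For the projection $\pi_X$, I would observe that the fibre over $x \in X$ is the geometric realization of the induced subcomplex $N[\{U_i : x \in U_i\}]$, which is a simplex and hence contractible. Using paracompactness of $X$ together with $\bigcup \CU = X$, I would pick a locally finite partition of unity $\{\phi_i\}$ subordinate to $\CU$ and define a section $s(x) := \bigl(\sum_i \phi_i(x)\, v_i,\, x\bigr)$, where $v_i \in |N|$ is the vertex corresponding to $U_i$ (local finiteness makes the sum well defined, and the simplex $\{v_i : x \in U_i\}$ really is a face of $N$). Then $\pi_X \circ s = \id_X$, and $s \circ \pi_X \simeq \id_B$ via the straight-line homotopy performed inside each simplicial fibre, which stays in $B$ thanks to the gluings.

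For $\pi_N$, the fibre over a point in the relative interior of $|\sigma|$ is $U_\sigma$, contractible by hypothesis. I would induct on the skeleta of $N$: the preimage $\pi_N^{-1}(|N^{(k)}|)$ is obtained from $\pi_N^{-1}(|N^{(k-1)}|)$ by attaching, for each $k$-face $\sigma$, a block $|\sigma|\times U_\sigma$ along $\partial|\sigma|\times U_\sigma$, and the contractibility of $U_\sigma$ makes the resulting inclusion a homotopy equivalence onto the additional cell $|\sigma|$. Equivalently, one may read $B \to |N|$ as the homotopy colimit of the diagram $\sigma \mapsto U_\sigma$ over the face poset of $N$; since every $U_\sigma$ is contractible, this homotopy colimit is homotopy equivalent to the homotopy colimit of the constant point-valued diagram, which is $|N|$ itself.

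I expect the main obstacle to be the argument for $\pi_X$. Paracompactness of $X$ is genuinely needed to extract a locally finite partition of unity; without it the section $s$ is not continuous. Verifying continuity of $s$ and of the fibrewise straight-line homotopy with respect to the quotient topology on $B$ is the technically delicate step. The argument for $\pi_N$ is more mechanical once the skeletal induction is set up, but one must verify that the attaching maps $\partial|\sigma|\times U_\sigma \hookrightarrow |\sigma|\times U_\sigma$ are cofibrations so that collapsing each contractible $U_\sigma$ translates into a genuine homotopy equivalence.
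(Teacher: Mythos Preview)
The paper does not prove this theorem; it is quoted as \cite[4G.3]{hatcher01} and used as a black box. Your proposal is correct and is precisely the argument in Hatcher: the blowup space you call $B$ is exactly the space $\Delta(\{U_i\})$ that the paper itself introduces later (in the proof of Claim~\ref{c:funct} and in the appendix, Lemma~\ref{l:delta_homotopy}), with your $\pi_X$ and $\pi_N$ being the paper's $\pr_{\bigcup_i U_i}$ and $\pr_N$, so there is nothing to add.
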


For further use we recall that any subset of $\R^d$ or $S^d$ is a
paracompact space.

\heading{Homology balls and homology spheres.}
A \emph{homology $d$-sphere} is a (topological) $d$-manifold with the same singular
homology as the $d$-sphere. Similarly, a \emph{homology $d$-ball} is a $d$-manifold with
boundary which has the same singular homology as the $d$-ball.

\heading{Alexander duality and \v{C}ech cohomology.}
As a supplementary tool we also need Alexander duality. Roughly speaking,
Alexander duality relates the cohomology of a ``nice'' closed subset $K$ of $S^d$
with the homology of $S^d \setminus K$. 
If we do not know whether $K$ is
``nice'' (which will be our case), then the ordinary cohomology must be replaced
with \emph{\v{C}ech cohomology}. In order to define \v{C}ech cohomology, we
would need too many preliminaries. Therefore we rather prefer to use it as a
``black box'' while referring to the literature for statements we need.

Here is a version of Alexander duality we need~\cite[Theorem 5.7]{prasolov07}:
\begin{theorem}[Alexander duality]
\label{t:ad}
If $A \subsetneq S^d$ is a closed set, then
$$
\check{\tilde{H}}^k(A) \cong \tilde{H}_{d-k-1}(S^d \setminus A)
$$
for $0 \leq k \leq n-1$. Here $\check{\tilde{H}}^*$ stands for reduced \v{C}ech
cohomology and $\tilde{H}_*$ stands for reduced singular homology.
\end{theorem}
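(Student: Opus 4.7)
The plan is to prove Alexander duality in two stages: a direct geometric argument when $A$ is polyhedral, and a limit passage to handle an arbitrary closed $A$.

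First, suppose that $A$ is the underlying space of a finite simplicial subcomplex of some triangulation $T$ of $S^d$. For such $A$, \v{C}ech cohomology coincides with singular cohomology, so the target reduces to the classical statement $\tilde{H}^k(A) \cong \tilde{H}_{d-k-1}(S^d \setminus A)$. I would pass to the barycentric subdivision $\sd T$ and form the \emph{dual block complex} $D(A)$ consisting of closed stars in $\sd T$ of barycenters of simplices of $T$ not lying in $A$. A standard straight-line deformation shows that $S^d \setminus A$ deformation retracts onto $D(A)$. The block dual to a $k$-simplex of $A$ is a $(d-k)$-cell, and the Poincar\'e duality pairing on $S^d$ matches the cellular boundary in $D(A)$ with the simplicial coboundary in $A$ up to a shift by $d-1$. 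This produces the desired isomorphism in the polyhedral case.

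Second, for a general closed $A \subsetneq S^d$, I would approximate $A$ by a nested sequence of compact polyhedral neighborhoods
$$
N_1 \supset N_2 \supset \cdots \supset A, \qquad \bigcap_i N_i = A,
$$
obtained by taking simplicial stars of $A$ in iterated barycentric subdivisions of a fixed triangulation of $S^d$. By the continuity property that characterizes \v{C}ech cohomology on compact metric spaces,
$$
\check{\tilde{H}}^k(A) \cong \varinjlim_i \tilde{H}^k(N_i).
$$
Meanwhile, $S^d \setminus A = \bigcup_i (S^d \setminus N_i)$ is an increasing union of open sets, and since singular homology commutes with such direct limits,
$$
\tilde{H}_{d-k-1}(S^d \setminus A) \cong \varinjlim_i \tilde{H}_{d-k-1}(S^d \setminus N_i).
$$
Applying the polyhedral case to each $N_i$ and passing to the direct limit yields the theorem.

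The main obstacle is verifying naturality of the polyhedral duality isomorphism with respect to the inclusions $N_{i+1} \hookrightarrow N_i$ and $S^d \setminus N_i \hookrightarrow S^d \setminus N_{i+1}$, since these inclusions need not respect the chosen dual block decompositions. I would overcome this either by choosing the triangulations so that each refines the previous (so the block decompositions nest), or, more cleanly, by replacing the combinatorial construction with the intrinsic cap product with a fundamental class of $S^d$; the latter is canonical and therefore automatically natural under inclusion, which is what makes the direct-limit step go through.
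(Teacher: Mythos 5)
The paper does not prove this statement at all: Theorem~\ref{t:ad} is imported verbatim from the literature (Prasolov, Theorem~5.7) and used as a black box, so there is no in-paper argument to compare yours against. Judged on its own, your two-stage outline is the standard and correct proof: the polyhedral case via a dual block decomposition of a triangulated $S^d$, followed by tautness/continuity of \v{C}ech cohomology ($\check{\tilde H}^k(A)\cong\varinjlim \tilde H^k(N_i)$ for compact $A$ in a manifold) together with the fact that singular homology commutes with increasing unions of open sets; and your remark that the cap product with the fundamental class is the clean way to get naturality for the limit passage is exactly how the textbook proofs (Spanier, Prasolov) handle it. Two small points of bookkeeping in stage one: you should pass to a subdivision making $A$ a \emph{full} subcomplex before claiming that $S^d\setminus A$ deformation retracts onto the complementary full subcomplex of $\sd T$; and since that retract is the union of blocks dual to simplices \emph{not} in $A$, the block-chain/cochain pairing first identifies $\tilde H_{d-k-1}(S^d\setminus A)$ with the relative group $\tilde H^{k+1}(S^d,A)$, and one further application of the long exact sequence of the pair (using that $\tilde H^*(S^d)$ is concentrated in degree $d$) is needed to land on $\tilde H^{k}(A)$. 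Neither point is a genuine gap, just detail to be supplied. (Also note the statement's range ``$0\le k\le n-1$'' should read $d-1$; that typo is the paper's, not yours.)
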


\begin{lemma}[{\cite[exercise 3, p. 254]{eilenberg-steenrod52}}]
\label{l:es}
Let $X \subseteq S^d$. Then the (non-reduced) \v{C}ech cohomology group 
$\check{H}^0(X)$ is isomorphic to the group of continuous functions $X \to \Z$
where $\Z$ is equipped with discrete topology.
\end{lemma}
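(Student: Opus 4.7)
The plan is to unwind the definition of \v{C}ech cohomology and show directly that both sides are naturally described by the same object, namely locally constant $\Z$-valued functions on $X$. Recall that, by definition, $\check{H}^0(X) = \varinjlim_{\mathcal U} H^0(N(\mathcal U); \Z)$, where the direct limit is taken over open covers $\mathcal U$ of $X$ ordered by refinement and $N(\mathcal U)$ is the nerve. Also recall that a continuous function $X \to \Z$ (with $\Z$ discrete) is exactly the same thing as a locally constant integer-valued function on $X$, since the preimage of each integer must be open.

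First I would construct a natural map $\Phi_{\mathcal U}\colon H^0(N(\mathcal U);\Z) \to C(X,\Z)$ for each open cover $\mathcal U$. An element of $H^0(N(\mathcal U);\Z)$ is a function $c\colon \mathcal U \to \Z$ that is constant on each connected component of the $1$-skeleton of $N(\mathcal U)$; in other words, $c(U) = c(V)$ whenever $U \cap V \neq \emptyset$. Define $\Phi_{\mathcal U}(c)(x) := c(U)$ for any $U \in \mathcal U$ containing $x$; the cocycle condition ensures that this is independent of the choice of $U$, and $\Phi_{\mathcal U}(c)$ is locally constant because it takes the value $c(U)$ on all of $U$. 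The maps $\Phi_{\mathcal U}$ are clearly compatible with refinements, so they induce a homomorphism $\Phi\colon \check{H}^0(X) \to C(X,\Z)$.

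Next I would verify that $\Phi$ is an isomorphism. For surjectivity, given a continuous $f\colon X \to \Z$, consider the open cover $\mathcal U_f := \{f^{-1}(n) : n \in f(X)\}$ of $X$. Distinct members of $\mathcal U_f$ are disjoint, so the cocycle $c(f^{-1}(n)) := n$ is well-defined on $N(\mathcal U_f)$ and $\Phi_{\mathcal U_f}(c) = f$. For injectivity, suppose $c$ is a cocycle on $N(\mathcal U)$ with $\Phi_{\mathcal U}(c) = 0$; then for each $U \in \mathcal U$, every $x \in U$ satisfies $c(U) = 0$, so $c$ is already zero on $N(\mathcal U)$. Hence $\Phi$ is an isomorphism.

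I do not expect serious obstacles here; the only thing that requires a bit of care is the direct-limit formalism and the fact that cofinality of $\mathcal U_f$-type covers is not needed because surjectivity is already witnessed at the single cover $\mathcal U_f$. If one wished to avoid invoking \v{C}ech cohomology from scratch, one could alternatively cite the standard fact that $\check{H}^0$ of any topological space equals the group of locally constant $\Z$-valued functions (a statement that appears in Eilenberg--Steenrod and in later textbooks on sheaf/\v{C}ech cohomology), and then merely observe that for $X$ with the subspace topology from $S^d$, locally constant maps to discrete $\Z$ are the same as continuous maps to discrete $\Z$.
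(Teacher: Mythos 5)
Your argument is correct. Note that the paper does not prove this lemma at all --- it explicitly treats \v{C}ech cohomology as a ``black box'' and cites the statement as an exercise in Eilenberg--Steenrod --- so there is no in-paper proof to compare against; what you have supplied is the standard direct computation from the definition $\check{H}^0(X)=\varinjlim_{\mathcal U}H^0(N(\mathcal U);\Z)$. The key points are all in place: a degree-zero cocycle on $N(\mathcal U)$ is a $\Z$-valued function on the (nonempty) members of $\mathcal U$ that agrees on any two overlapping members, so it descends to a well-defined function on $X$ that is constant on each $U\in\mathcal U$ and hence continuous into discrete $\Z$; compatibility with refinements induces $\Phi$ on the limit; the cover by the fibres $f^{-1}(n)$ (pairwise disjoint, hence a nerve with no edges) witnesses surjectivity already at a single stage; and a cocycle with $\Phi_{\mathcal U}(c)=0$ vanishes identically at stage $\mathcal U$, giving injectivity since $B^0=0$ in non-reduced cohomology. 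Two very minor remarks: the hypothesis $X\subseteq S^d$ is never used (the statement holds for an arbitrary space, consistent with the cited exercise), and for completeness one should observe that the refinement-induced map on $H^0$ is independent of the chosen projection $\lambda\colon\mathcal V\to\mathcal U$ --- in degree $0$ this is immediate, since for a cocycle $c$ and two projections $\lambda,\lambda'$ one has $\lambda(V)\cap\lambda'(V)\supseteq V\neq\emptyset$, whence $c(\lambda(V))=c(\lambda'(V))$.
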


For clarity, the following lemma summarizes all consequences of Alexander
duality we will need.

\begin{lemma}
\label{l:two_components}
Let $M$ and $N$ be two open subsets of $S^d$, $d \geq 2$. If $M$ is a homology $(d-1)$-sphere and $H_{d-1}(N) = 0$, then
\begin{enumerate}
\item[(a)]
$S^d \setminus M$ contains exactly two components;
\item[(b)]
$S^d \setminus N$ is connected; and
\item[(c)]
$H_{d-1}(M \cup C) = 0$ where $C$ is any of the components of $S^d \setminus
M$.
\end{enumerate}
\end{lemma}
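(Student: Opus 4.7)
The plan is to deduce everything from Alexander duality (Theorem~\ref{t:ad}) applied to the closed complements $S^d\setminus M$ and $S^d\setminus N$, using Lemma~\ref{l:es} as a dictionary between $\check H^0$ and the component structure. Note both complements are compact Hausdorff as closed subspaces of $S^d$, so their clopen partitions correspond to connected components.

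For part (a), I would set $A=S^d\setminus M$ and apply Theorem~\ref{t:ad} with $k=0$ to obtain
$$\check{\tilde{H}}^0(S^d\setminus M)\cong \tilde H_{d-1}(M)\cong \Z,$$
where I first check $A\subsetneq S^d$: since $H_{d-1}(S^d)=0$ for $d\ge 2$, the hypothesis $H_{d-1}(M)=\Z$ forces $M\neq S^d$. By Lemma~\ref{l:es}, the group of continuous functions $S^d\setminus M\to \Z$ is then $\Z\oplus\Z$ (adding back the constants). Since continuous $\Z$-valued functions on a compact Hausdorff space are locally constant with value determined by the component, a rank-$2$ group of such functions forces exactly two components. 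Part (b) is the same argument, applied to $A=S^d\setminus N$: Alexander duality yields $\check{\tilde H}^0(S^d\setminus N)\cong \tilde H_{d-1}(N)=0$, so only constant $\Z$-valued functions exist, hence $S^d\setminus N$ is connected (the degenerate case $N=S^d$ is vacuous).

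For part (c), I would use (a) to write $S^d\setminus M=C\sqcup D$ for the two components, so that $M\cup C=S^d\setminus D$. Since $D$ is a component of the compact Hausdorff set $S^d\setminus M$, it is closed and connected; it is also nonempty and a proper subset of $S^d$. Applying Theorem~\ref{t:ad} to $A=D$ with $k=0$ gives
$$\tilde H_{d-1}(M\cup C)=\tilde H_{d-1}(S^d\setminus D)\cong \check{\tilde H}^0(D)=0,$$
the last equality because $D$ is connected and nonempty.

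The arguments are quite short once one sets them up, so the only real subtlety I anticipate is the translation between $\check H^0$ and component count: one must verify that Lemma~\ref{l:es} together with compact Hausdorffness is enough to promote the group-theoretic conclusion $\check{H}^0=\Z^n$ into an actual statement about connected components. I would make this precise by noting that any continuous $f\colon S^d\setminus M\to \Z$ is locally constant, so its fibers partition the space into clopen pieces; in a compact Hausdorff space this partition refines the component decomposition, and conversely components are intersections of clopens, so the rank of $\check H^0$ coincides with the number of components when finite. Modulo this bookkeeping, the lemma is a direct unpacking of Alexander duality in degree zero.
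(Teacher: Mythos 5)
Your proof is correct and follows essentially the same route as the paper: Alexander duality in cohomological degree $0$ applied to the closed sets $S^d\setminus M$, $S^d\setminus N$, and (for part (c)) the complementary component $D$, with Lemma~\ref{l:es} used to translate the rank of $\check{H}^0$ into a count of connected components. The only cosmetic difference is that you phrase the component/quasi-component bookkeeping via compact Hausdorffness, whereas the paper argues directly by splitting a disconnected clopen piece again; both are valid.
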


\begin{proof}
We prove the part (a) first.
Let $A := S^d \setminus M$. Then $\check{\tilde{H}}^0(A) \cong \Z$ by Alexander
duality (Theorem~\ref{t:ad}), and therefore $\check{H}^{0}(A) \cong \Z \oplus
\check{\tilde{H}}^0(A) \cong \Z^2$. 

If $A$ contained three or more components then it could be partitioned into three
disjoint clopen (closed and open) sets $A_1$, $A_2$ and $A_3$ (disconnected $A$ can be partitioned
into two clopen sets and then at least one of these sets can be partitioned again).
Functions $A \to \Z$ constant on each of these clopen sets would be continuous.
Therefore $\Z^3$ would be a subgroup of $\check{H}^{0}(A)$ due to
Lemma~\ref{l:es}. This contradicts $\check{H}^{0}(A) \cong \Z^2$.

Similarly, if $A$ were connected, then $\check{H}^{0}(A) \cong \Z$, since every
continuous function $A \to \Z$ would be constant. This contradicts
$\check{H}^{0}(A) \cong \Z^2$ again.

Therefore part (a) is proved. Part (b) is analogous to (a), using
$\check{\tilde{H}}^0(S^d \setminus N) \cong \Z$ which follows from the
Alexander duality (reduced and non-reduced homology groups coincide in
dimension $d-1$, since $d \geq 2$).
It remains to prove (c). 

Let $C'$ be the
second component of $A = S^d \setminus M$. Note that both $C$ and $C'$ are
closed in $S^d$ since $A$ is closed in $S^d$ and the number of components of
$A$ is finite, namely two.
Using Lemma~\ref{l:es} again, we
derive $\check{H}^{0}(C') \cong \Z$, and therefore $\check{\tilde{H}}^{0}(C') = 0$. Part (c) now follows from the Alexander duality.

\end{proof}

\section{The proof method}
\label{sec:planofproof}

In this section we describe our proof method. 
On a  general level, we follow the
approach by Matou\v{s}ek, Tancer and Wagner~\cite{matousek-tancer-wagner11}
showing that it is algorithmically undecidable whether a given
$(d-1)$-dimensional simplicial complex embeds in $\R^d$ (for $d \geq 5$). 
Some details are, however, more difficult to resolve in our case.

Our main ingredient is Novikov's theorem (Theorem~\ref{t:novikov}). 
Using it we construct a sequence of simplicial
complexes $\{C_i\}_{i=1}^{\infty}$ such that each $C_i$
\\
$\bullet$
is either PL homeomorphic to the $d$-ball
\\
$\bullet$
or has nontrivial fundamental group,
\\
and there is no
algorithm deciding which of the two cases holds. 
The main task is to show that
$C_i$ is topologically $d$-representable in the first case
(this is rather straightforward but a bit technical; see
Theorem~\ref{t:emb_rep})
but not in the second
(this is not so obvious
and the reader might be also interested in the used
technique; see Proposition~\ref{p:fund}. It uses
a special
feature of the combinatorial structure of $C_i$;
see {\it collaring} below.)

Now we describe our method in more details. 

\heading{Novikov's theorem.} Novikov proved that it is algorithmically
undecidable whether a given (CW-)complex is homeomorphic to a $d$-sphere if $d
\geq 5$. We need the following variation of his theorem.

\begin{theorem}[\textbf{Novikov}]
\label{t:novikov}
 Let $d\geq 5$ be a fixed integer. There is an effectively
constructible sequence of simplicial complexes
$\Sigma_i$, $i\in \N$, with the following properties:
\begin{enumerate}
\item[\rm (1)] Each $|\Sigma_i|$ is a homology $d$-sphere (in particular a
manifold).
\item[\rm (2)] For each $i$, either $\Sigma_i$ is a \textup{PL} $d$-sphere, or
the fundamental group of $\Sigma_i$ is nontrivial (in particular, $\Sigma_i$ is
not homeomorphic to the $d$-sphere).
\item[\rm (3)] There is no algorithm that decides for every given $\Sigma_i$
which of the two cases holds.
\end{enumerate}
\end{theorem}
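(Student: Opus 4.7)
The plan is to combine two classical ingredients: (i) the undecidability of the triviality problem for finitely presented groups, even within a restricted class of presentations suitable for geometric realization, and (ii) Kervaire's effective handle-theoretic construction of a PL homology $d$-sphere with prescribed fundamental group. The PL Poincar\'e theorem in dimension $\geq 5$ then separates the two cases in the required dichotomy.

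First I would fix an effectively enumerable sequence $\mathcal{P}_i = \langle x_1,\ldots,x_n \mid r_1,\ldots,r_n\rangle$ of balanced presentations whose abelianized relator matrix over $\Z$ has determinant $\pm 1$ (equivalently, the group $G_i := \langle \mathcal{P}_i\rangle$ has trivial abelianization), and such that no algorithm decides whether $G_i$ is trivial; the existence of such a list is a standard refinement of Adyan--Rabin. For each $\mathcal{P}_i$ I would then build a compact PL $(d{+}1)$-manifold $W_i$ as follows: start from the PL ball $D^{d+1}$, attach $n$ one-handles to obtain a handlebody $H$ with $\pi_1(H)$ free on $x_1,\ldots,x_n$, then attach $n$ two-handles along disjoint embedded PL circles in $\partial H$ that represent the relators $r_j$ in $\pi_1(H)$. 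Since $\dim \partial H = d \geq 5$, general position makes the attaching circles disjoint, unknotted in each other's complement, and framable in the PL category. Triangulating this handle decomposition effectively yields a simplicial complex, and I set $\Sigma_i := \partial W_i$.

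Next I would verify the advertised properties of $\Sigma_i$. A van Kampen argument applied to the handle decomposition gives $\pi_1(W_i) \cong G_i$. Since the dual (i.e., cocore) handles have dimension $\geq d-1 \geq 4$, the inclusion $\Sigma_i \hookrightarrow W_i$ is obtained by attaching cells of dimension at least $3$, hence induces an isomorphism on $\pi_1$; therefore $\pi_1(\Sigma_i) \cong G_i$. For homology, the cellular chain complex of $W_i$ is concentrated in dimensions $0,1,2$, and its only nonzero boundary is the abelianized relator matrix, which by choice has determinant $\pm 1$; hence $W_i$ is $\Z$-acyclic, and Lefschetz duality makes $\Sigma_i$ a homology $d$-sphere, giving~(1). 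For the dichotomy (2), note that if $G_i$ is trivial then $\Sigma_i$ is a simply connected PL homology $d$-sphere, which by the PL Poincar\'e theorem of Smale--Stallings--Zeeman (valid for $d \geq 5$) is PL homeomorphic to $S^d$; and if $G_i$ is nontrivial, then $\pi_1(\Sigma_i) \cong G_i \neq 1$. Property~(3) follows because any algorithm deciding the dichotomy would also decide the triviality of $G_i$, contradicting the choice of the sequence.

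The step I expect to be the most delicate is the identification $\pi_1(\Sigma_i) \cong G_i$: one must argue that surgery on $\partial H$ along the attaching circles does not introduce spurious relations in $\pi_1$ of the boundary beyond the prescribed $r_j$. The cleanest justification uses that the cocores of the $2$-handles have codimension $d-1 \geq 4$ in $W_i$, so by transversality any loop in $W_i$, and any homotopy between two such loops, can be pushed off the cocores and isotoped into $\Sigma_i$; equivalently, $W_i$ is obtained from $\Sigma_i$ by attaching cells of dimension at least $3$. This is the point at which the dimension hypothesis $d \geq 5$ is genuinely used, both for general position of the attaching maps and for the applicability of the PL Poincar\'e theorem invoked in the final dichotomy.
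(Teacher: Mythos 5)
Your proposal is correct in outline, but it takes a genuinely different route from the paper. The paper does not construct the complexes $\Sigma_i$ itself: it cites Nabutovsky's exposition, which produces a recursive sequence of \emph{polynomials} whose real zero sets are smooth homology $d$-spheres with the required dichotomy, and then converts these to simplicial complexes via Whitehead's smooth-to-PL theorem and the algorithmic triangulation of real algebraic sets (Basu--Pollack--Roy). You instead carry out the classical Kervaire-type construction directly in the PL category: balanced presentations of perfect groups with undecidable triviality, a $(d{+}1)$-dimensional handlebody $W_i$ realizing the presentation, $\Sigma_i = \partial W_i$, with $\pi_1(\Sigma_i)\cong\pi_1(W_i)\cong G_i$ via the dual handle decomposition and the homology-sphere property via acyclicity of $W_i$ and Lefschetz duality; the PL Poincar\'e theorem for $d\ge 5$ then closes the dichotomy. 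Your route is more self-contained and avoids the detour through real algebraic geometry and smoothing theory, at the cost of having to verify effectiveness of the PL handle attachment and triangulation by hand; the paper's route outsources all effectiveness issues to the cited references. The one load-bearing input you treat too casually is the ``standard refinement of Adyan--Rabin'': the existence of a recursive sequence of \emph{balanced} presentations of \emph{perfect} groups for which triviality is undecidable is true and classical, but it is not a formal consequence of the Adyan--Rabin theorem as usually stated and requires a specific construction (it is, in fact, also the group-theoretic engine inside Nabutovsky's appendix); in a complete write-up you should either cite it precisely or prove it. Your remaining verifications (the determinant condition forcing $\Z$-acyclicity of $W_i$, the codimension argument for $\pi_1(\partial W_i)\cong\pi_1(W_i)$, the irrelevance of framings for properties (1)--(3)) are sound.
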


A proof of Theorem~\ref{t:novikov} follows from the exposition by
Nabutovsky; see the appendix of~\cite{nabutovsky95}. Indeed Nabutovsky
constructs a sequence of polynomials such that it is algorithmically
undecidable whether their zero set is homeomorphic to a $d$-sphere. These zero
sets are always smooth manifolds, and if they are homeomorphic to a $d$-sphere,
they are in addition diffeomorphic to the standard $d$-sphere. Such smooth
manifolds have a natural PL-structure~\cite{whitehead40} and their
triangulations can be found algorithmically~\cite[Remark
11.19]{basu-pollack-roy06} (see Remark 12.35 if you consult the first edition).
We conclude by remarking that in case of triangulating standard (smooth)
$d$-sphere we obtain a PL-sphere.

Our task is to transform this result into undecidability of recognition
of topologically $d$-representable complexes (for $d \geq 5$).

\heading{Removing a simplex.}
Let $B_i$ be the simplicial complex obtained from 
$\Sigma_i$ by removing a $d$-simplex. 
Each $B_i$ is a homology $d$-ball;
$B_i$ is embeddable into $\R^d$ if and only if $\Sigma_i$
is a PL sphere (which is 
algorithmically unrecognizable).

A straightforward approach 
(motivated by~\cite{matousek-tancer-wagner11}) 
would be to prove the following conjecture.

\begin{conjecture}
\label{conj:emb_equiv_rep}
A simplicial complex $K$ 
is PL embeddable into $\R^d$
if and only if its barycentric subdivision
is topologically $d$-representable.
\end{conjecture}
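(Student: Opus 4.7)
The forward direction is immediate from the paper's auxiliary Theorem~\ref{t:emb_rep}: since $|\sd(K)| = |K|$, any PL embedding of $|K|$ into $\R^d$ is also a PL embedding of $|\sd(K)|$ (after passing to a common refinement of $\sd(K)$ and whichever subdivision linearizes the given embedding), so Theorem~\ref{t:emb_rep} applied to $\sd(K)$ produces a good cover in $\R^d$ whose nerve is $\sd(K)$.

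The substantive direction is the converse. Suppose $\sd(K)$ is the nerve of a good cover $\mathcal{U} = \{U_w\}_{w \in V(\sd(K))}$ in $\R^d$; the plan is to manufacture a PL embedding $|K| \hookrightarrow \R^d$ out of $\mathcal{U}$. First pick, for each simplex $\sigma$ of $\sd(K)$, a point $q_\sigma \in \bigcap_{w \in \sigma} U_w$, which is nonempty by the nerve hypothesis. Using that simplices of $\sd(K)$ correspond to chains $\sigma_0 \subsetneq \cdots \subsetneq \sigma_k$ of faces of $K$, extend these point assignments to a candidate PL map $f \colon |\sd(K)| \to \R^d$ that sends each closed simplex of $\sd(K)$ into the corresponding intersection. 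By construction, images of disjoint simplices of $\sd(K)$ land in disjoint intersections of cover elements (because the union of their vertex sets is not a face of the nerve), and after a small perturbation general position makes $f$ injective on any pair of simplices of total dimension less than $d$.

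To upgrade $f$ to a global embedding one would pass through obstruction theory: the cover $\mathcal{U}$ produces an equivariant map from the deleted join $\sd(K)^{*2}_\Delta$ to $S^{d-1}$ by sending a formal pair $(x,y)$ from disjoint simplices to a unit vector separating chosen points in the (automatically disjoint) intersections indexed by those simplices. Hence the van Kampen--Haefliger obstruction to embedding $|K|$ in $\R^d$ vanishes, and by the Haefliger--Weber theorem $|K|$ PL embeds in $\R^d$ whenever $\dim K \leq (2d-3)/3$. This already recovers the metastable-range converse announced in the abstract.

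The hard part is pushing past $\dim K = (2d-3)/3$. Outside the metastable range the Haefliger--Weber theorem fails and a vanishing deletion obstruction does not suffice to produce an embedding, so obstruction theory alone is inadequate. A proof of the full conjecture would have to use the geometric rigidity of the good cover more directly, for instance by organizing the $U_w$'s into a PL regular neighborhood $N \subset \R^d$ of a subpolyhedron homotopy equivalent to $|K|$, then extracting an embedded copy of $|K|$ from the dual block decomposition of $N$ by inducting on the partial order of faces of $K$ that indexes the barycentric vertices of $\sd(K)$, using contractibility of the multiple intersections to extend the embedding over each dual cell. It is unclear whether such a regular-neighborhood structure can always be arranged from an abstract good cover without additional hypotheses on $K$, and I expect this is precisely why the statement is recorded as a conjecture rather than a theorem.
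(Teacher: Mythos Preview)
You correctly recognize that this statement is recorded as a conjecture and that the paper proves only the forward direction (Theorem~\ref{t:emb_rep}) together with the converse in the metastable range $\dim K \le \frac{2d-3}{3}$ (Theorem~\ref{t:rep_emb}). Your forward direction is fine, and your final paragraph on the non-metastable case is an honest assessment that matches the paper's own position.

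However, your metastable-range construction contains a genuine error. You cannot build $f$ so that each closed simplex $\sigma$ of $\sd K$ lands in the \emph{intersection} $\bigcap_{w\in\sigma} U_w$: already for an edge $\sigma=\{w_0,w_1\}$ this would force $f(w_0)\in U_{w_1}$, which the chosen point $q_{\{w_0\}}\in U_{w_0}$ need not satisfy. More seriously, your disjointness claim (``the union of their vertex sets is not a face of the nerve'') is false: disjoint simplices $\alpha,\beta$ of $\sd K$ may well have $\alpha\cup\beta\in\sd K$ (two vertices of an edge, say), and then $\bigcap_{w\in\alpha}U_w$ and $\bigcap_{w\in\beta}U_w$ can overlap. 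So neither the map nor the separation argument goes through as written.

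The paper's route (Lemma~\ref{l:map_g}) repairs both issues at once: it maps each simplex $\sigma$ of $\sd K$ into the \emph{union} $\bigcup_{w\in\sigma}U_w$, which is contractible by the Nerve Theorem and does contain the images of all faces of $\sigma$, so an inductive extension over skeleta is possible. Disjointness of images is then claimed not for arbitrary disjoint pairs but for \emph{remote} pairs $\alpha,\beta$ (no edge of $\sd K$ joins $\alpha$ to $\beta$), since then $\bigcup\alpha$ and $\bigcup\beta$ are disjoint. The reason one passes to $\sd K$ is precisely that any two simplices of $\sd K$ lying inside disjoint simplices of $K$ are automatically remote; this makes $g$ weakly injective with respect to $K$, and Corollary~\ref{c:wi_emb} (the Haefliger--Weber input) finishes the metastable converse.
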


An affirmative answer to this conjecture implies
Theorem~\ref{t:unrecog} (our main result) if it is used for the barycentric
subdivisions of the sets $B_i$. (For brevity of this part, the definition
of barycentric subdivision is postponed to section~\ref{s:etr}.)


We prove the `only if' part even in a stronger form (Theorem~\ref{t:emb_rep}) 
but we could not
prove the `if' part in such generality,
or even for $K=B_i$.
So we will modify the simplicial complexes
$B_i$ and obtain a new sequence of complexes $C_i$.
Using some new combinatorial features of $C_i$
we are able to prove that they are PL embeddable into $\R^d$ if and only
if they are topologically $d$-representable.


In section~\ref{s:rep_emb} we also prove Conjecture~\ref{conj:emb_equiv_rep} in
case that $\dim K \leq \frac{2d - 3}2$. This range is unfortunately not
sufficient for our main result; however, we still hope that it is an
interesting supplementary result.


\heading{Collaring.} 
Fix $\Sigma_i$ and $B_i$. Let
$U:=\{u_1, \dots, u_{d+1}\}$ be the vertices of the simplex
removed from $\Sigma_i$
and $V=\{v_1, \dots, v_{d+1}\}$ be additional points
(not the vertices of $\Sigma_i$).

We now create a simplicial complex $\Gamma$ with vertices $u_1, \dots,
u_{d+1}$, $v_1, \dots, v_{d+1}$.
The set of simplices of $\Gamma$ is the following:
$$
\{\sigma \subset U \cup V: \sigma \neq U, V \hbox{ and } \{u_j, v_j\}
\not\subseteq
\sigma \hbox{ for any $j \in \{1,\ldots,d+1\}$}\}.
$$

If we did not require $\sigma \neq U,V$ we would obtain a $d$-dimensional
\emph{crosspolytope} (see, e.g.,~\cite[p.~11]{matousek03}). 
Thus $\Gamma$ is isomorphic to a
$d$-dimensional crosspolytope minus two opposite $d$-simplices. In particular,
$\Gamma$ is homeomorphic to $S^{d-1} \times [0,1]$. 
%
%

Now set $C_i := B_i \cup \Gamma$, see Figure~\ref{f:bc}.
Informally, we attached a cylinder (`collar') $\Gamma$ to $B_i$
and obtained $C_i$. Clearly, $C_i$ is homeomorphic to $B_i$.

\begin{figure}
\begin{center}
\includegraphics{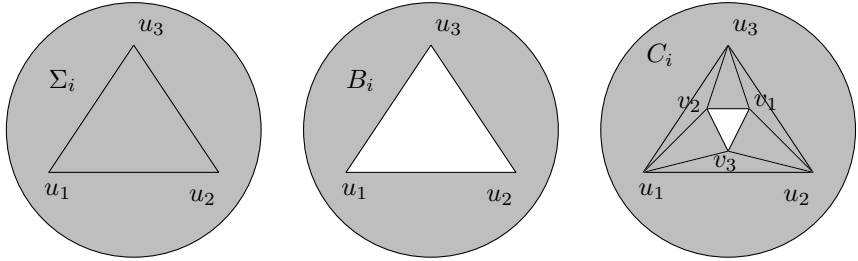}
\caption{The complexes $\Sigma_i$, $B_i$ and $C_i$}
\label{f:bc}
\end{center}
\end{figure}

We are going to show that $C_i$ is topologically
$d$-representable if and only if $\Sigma_i$ is homeomorphic to the $d$-sphere.
We will split this task into two statements proved in separate sections.

\begin{theorem}
\label{t:emb_rep}
Let $K$ be a simplicial complex PL embeddable into $\R^d$. Then
$K$ is topologically $d$-representable by balls (see 
Definition~\ref{d:rep_balls}).
%
\end{theorem}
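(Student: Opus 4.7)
The plan is to construct the required cover as preimages of open stars of $K$ under a PL deformation retraction from an open regular neighborhood of the embedded complex. First I would identify $|K|$ with its PL-embedded image $f(|K|)\subset\R^d$, pass to a subdivision $K^\#$ of $K$ on which $f$ is simplicial, and invoke standard PL regular neighborhood theory to produce an open PL regular neighborhood $N\subset\R^d$ of $|K|$ together with a PL deformation retraction $r\colon N\to|K|$ (chosen simplicial on a sufficiently fine iterated barycentric subdivision of a triangulation of $N$ extending $K^\#$). For each vertex $v\in V(K)$, set
\[
U_v \;:=\; r^{-1}\bigl(\st_K(v)\bigr) \;\subset\; \R^d,
\]
where $\st_K(v)$ denotes the open star of $v$ in the \emph{original} complex $K$ (not in the subdivision $K^\#$). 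Each $U_v$ is then open in $\R^d$ as the preimage of an open subset of $|K|$ under a continuous map.

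The nerve of $\{U_v\}_{v\in V(K)}$ is immediately isomorphic to $K$: using the standard identity $\bigcap_{v\in\sigma}\st_K(v)=\st_K(\sigma)$ in $|K|$, one obtains
\[
\bigcap_{v\in\sigma} U_v \;=\; r^{-1}\bigl(\st_K(\sigma)\bigr),
\]
which is non-empty precisely when $\st_K(\sigma)\neq\emptyset$, i.e.\ when $\sigma\in K$.

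The remaining and main technical step, which I expect to be the chief obstacle, is showing that each non-empty intersection $U_\sigma:=r^{-1}(\st_K(\sigma))$ is an open PL $d$-ball. My plan here is to exploit the block-bundle structure of the regular neighborhood $N$ over $|K|$: in the chosen subdivision, the retraction $r$ is simplicial and for each $\tau\in K^\#$ the open ``block'' $r^{-1}(\interior\tau)$ is PL homeomorphic to $\interior\tau\times\interior D_\tau$, where $D_\tau$ is the dual block of complementary dimension $d-\dim\tau$. Then $U_\sigma$ is the union of these open blocks indexed by those $K^\#$-simplices whose $K$-carrier is a coface of $\sigma$. Since $\bar{\st}_K(\sigma)=\sigma*\lk_K(\sigma)$ is a cone on the link and hence a collapsible polyhedron, the PL regular neighborhood theorem says that its closed derived neighborhood $\bar R_\sigma:=r^{-1}(\bar{\st}_K(\sigma))$ is a closed PL $d$-ball with boundary a PL $(d-1)$-sphere; the plan is then to identify $U_\sigma$ as $\bar R_\sigma$ with the ``side'' part $r^{-1}(|\partial\sigma*\lk_K(\sigma)|)$ — a properly embedded PL $(d-1)$-disk in $\partial\bar R_\sigma$ — and the portion of the boundary lying in $\partial N$ both removed, which by a Schoenflies-type argument yields an open PL $d$-ball.

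The principal subtlety, and what will require the most careful work, is that $\st_K(\sigma)$ is open rather than closed, so the regular neighborhood theorem does not apply to it directly; one must instead carry out the above identification keeping track of the distinction between the subdivision $K^\#$ and the original complex $K$, and verify via the collar/block structure that removing the ``side disk'' from the closed PL $d$-ball $\bar R_\sigma$ indeed produces an open PL $d$-ball.
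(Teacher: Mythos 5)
Your overall strategy---thickening stars of $K$ inside a regular neighborhood of the embedded complex---is in the same spirit as the paper's, and your verification of the nerve condition is fine. But the main technical step, showing that $U_\sigma=r^{-1}(\st_K(\sigma))$ is an open PL $d$-ball, rests on two claims that are false as stated. First, $\bar R_\sigma:=r^{-1}(|\bar\sigma*\lk_K(\sigma)|)$ is \emph{not} a regular (or derived) neighborhood of the closed star: it is not even a neighborhood of it in $\R^d$, since a point of $|\partial\sigma*\lk_K(\sigma)|$ has arbitrarily close points of $|K|$ lying outside the closed star, and those are fixed by $r$. So the regular neighborhood theorem does not apply, and indeed $\bar R_\sigma$ need not be a ball or even a manifold. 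Concretely, let $K\subset\R^3$ be three $2$-simplices sharing an edge $e$ and let $\sigma$ be one of them: for the standard mapping-cylinder retraction the fiber over an interior point of $e$ is a tripod, only two of whose prongs lie in the closure of $r^{-1}(\interior|\sigma|)$; the third prongs sweep out a $2$-dimensional fin attached to a $3$-ball along an arc, so $\bar R_\sigma$ is not a $3$-manifold. (For the same reason the ``dual blocks'' $D_\tau$ are cones on links, not disks, unless $K$ is a manifold, so the block-bundle picture you invoke is not available.) Second, $r^{-1}(|\partial\sigma*\lk_K(\sigma)|)$ is not a properly embedded $(d-1)$-disk in $\partial\bar R_\sigma$: over the interior of a $k$-simplex of $\partial\sigma*\lk_K(\sigma)$ the fibers of $r$ are $(d-k)$-dimensional, so this set is $d$-dimensional and is not contained in the boundary. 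The Schoenflies step therefore has nothing to act on. The conclusion that $U_\sigma$ is an open ball may well still be true, but the route you propose breaks down at exactly the point you flagged as the chief obstacle.

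The paper sidesteps all of this by choosing different sets to thicken: the \emph{closed} stars $X_v=\st(v,\sd K)$ in the barycentric subdivision. Their intersections $X_S$ are simplicial cones with apex the barycenter of $S$ (Claim~\ref{c:nerve}), hence compact collapsible subcomplexes, and one takes genuine derived neighborhoods $N(f(X_v))$ in a triangulation of $\R^d$. These are honest regular neighborhoods of collapsible polyhedra, so Lemma~\ref{l:collapse_nbhood} applies directly and yields PL $d$-balls; the fact that the neighborhoods intersect in the neighborhood of the intersection is the purely combinatorial Corollary~\ref{c:nbhood_intersect}. To salvage your argument you would need to do something similar: replace the open stars of $K$ by compact collapsible subcomplexes of a subdivision whose pairwise intersections are again collapsible, and take true derived neighborhoods of those, rather than preimages under a retraction.
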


\begin{proposition}
\label{p:fund}
Let $i$ be such that $\Sigma_i$ has a nontrivial fundamental group. Then $C_i$
is not topologically $d$-representable.
\end{proposition}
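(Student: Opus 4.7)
Suppose for contradiction that $C_i$ admits a topological $d$-representation: a good cover $\{U_v\}_{v \in V(C_i)}$ in $\R^d \subset S^d$ whose nerve equals $C_i$.  Set $X := \bigcup_v U_v$ and partition
\[
M := \bigcup_j U_{u_j}, \qquad N := \bigcup_j U_{v_j}, \qquad P := \bigcup_{w \in V(B_i) \setminus U} U_w.
\]
The combinatorics of $C_i$ give $N \cap P = \emptyset$ (no edge of $C_i$ joins $V$ to $V(B_i) \setminus U$). The induced subcomplexes of $C_i$ on $U$, $V$, $U \cup V$, and $V(B_i)$ are $\partial \Delta^d, \partial \Delta^d, \Gamma$, and $B_i$ respectively; so by the Nerve Theorem~\ref{t:nerve}, $M \simeq N \simeq S^{d-1}$, $M \cup N \simeq \Gamma \simeq S^{d-1}$, $M \cup P \simeq B_i$, and $X \simeq C_i \simeq B_i$.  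In particular $\pi_1(X) = \pi_1(B_i) = \pi_1(\Sigma_i) \neq 1$ (the middle equality holds for $d \geq 3$, since removing an open $d$-simplex from a closed $d$-manifold preserves $\pi_1$).  The plan is to contradict this by establishing $\pi_1(X) = 1$.

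Apply Lemma~\ref{l:two_components}(a) to $M$ to split $S^d \setminus M = C_1 \sqcup C_2$, and set $W_k := M \cup C_k = S^d \setminus C_{3-k}$, which is open. Then $W_1 \cup W_2 = S^d$, $W_1 \cap W_2 = M$, and Van Kampen (using $\pi_1(S^d) = \pi_1(M) = \pi_1(S^{d-1}) = 1$ for $d \geq 3$) yields $\pi_1(W_1) = \pi_1(W_2) = 1$.

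Since $H_{d-1}(X) = 0$ and $H_{d-1}(M \cup P) = H_{d-1}(B_i) = 0$, Lemma~\ref{l:two_components}(b) gives that both $S^d \setminus X$ and $S^d \setminus (M \cup P)$ are connected. Label $C_1, C_2$ so that $\infty \in C_1$; then $S^d \setminus X \subseteq C_1$. If $C_2 \not\subseteq M \cup P$, pick $y \in C_2 \setminus (M \cup P) \subseteq S^d \setminus (M \cup P)$. The connected set $S^d \setminus (M \cup P)$ contains both $y$ and $\infty$, so they can be joined by a path inside it; as the path avoids $M$, it lies in $S^d \setminus M = C_1 \sqcup C_2$ and, being connected, must stay in a single component---contradicting $y \in C_2$ and $\infty \in C_1$. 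Hence $C_2 \subseteq P$, so $N \setminus M \subseteq C_1$ (using $N \cap P = \emptyset$), $N \subseteq W_1$, and $W_2 \subseteq X$.  The Van Kampen decomposition $X = (X \cap W_1) \cup W_2$ with intersection $M$ then yields
\[
\pi_1(X) \;=\; \pi_1(X \cap W_1) *_{\pi_1(M)} \pi_1(W_2) \;=\; \pi_1(X \cap W_1),
\]
reducing the problem to proving $\pi_1(X \cap W_1) = 1$.

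The main obstacle is this last step. A direct computation gives $X \cap W_1 = (M \cup N) \cup (P \cap C_1)$, with the two pieces set-theoretically disjoint.  The target is $P \cap C_1 = \emptyset$, which would yield $X \cap W_1 = M \cup N \simeq S^{d-1}$ and hence $\pi_1(X \cap W_1) = 1$.  The connectedness of $X$ already forces every connected component of $P \cap C_1$ to accumulate on $N$ (it cannot meet $M \subseteq W_2$), and ruling this out is where the collar structure of $\Gamma$ enters decisively.  I expect the argument to exploit the $d$-faces $\tau_j = (U \setminus \{u_j\}) \cup \{v_j\}$ of $\Gamma$, whose nonempty contractible intersections $\bigcap_{v \in \tau_j} U_v$ realize the collar cobordism between $N$ and $M$ on the $C_1$-side of $M$, together with the disjointness relations $U_w \cap U_{v_j} = \emptyset$ for $w \in V(B_i) \setminus U$, to preclude any $U_w$ from entering $C_1$.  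Once $P \cap C_1 = \emptyset$ is secured, the Van Kampen reduction above delivers $\pi_1(X) = 1$, contradicting the nontriviality of $\pi_1(\Sigma_i)$.
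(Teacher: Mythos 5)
Your reduction is set up sensibly and the Van Kampen bookkeeping at the start and end is fine, but the proof has a genuine gap exactly at the step you yourself flag as the main obstacle: you never prove $P \cap C_1 = \emptyset$, you only say you ``expect the argument to exploit'' the collar faces $\tau_j$ and the disjointness $U_w \cap U_{v_j} = \emptyset$. That expectation is not a proof, and the statement you are after is genuinely delicate: it is a set-theoretic claim that no cell $U_w$ with $w \in V(B_i)\setminus U$ crosses from $C_2$ through $M$ into $C_1$. Nothing in the combinatorics forbids a contractible cell $U_w$ from tunnelling through some $U_{u_j}$ (their intersection only has to be contractible) and emerging on the far side of $M$, staying clear of $N$ and of the cells it must avoid. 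So the missing step is not a routine verification; it is the heart of the proposition.

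The paper's proof is organized precisely so as never to need such a ``no leakage through $M$'' statement. It cuts $S^d$ along $N = \mathbf{V}$ (the outer collar boundary) rather than along $M = \mathbf{U}$, and proves the containment in the opposite direction: the component $V_X$ of $S^d \setminus \mathbf{V}$ containing $\mathbf{X}$ is \emph{covered by} $\mathbf{X} \cup \mathbf{U}$ (Claim~\ref{c:vx}). That containment is what makes $L = \mathbf{U}\cup\mathbf{X}$ and $\mathbf{U}\cup\mathbf{V}\cup V_Y$ an open cover of $S^d$ intersecting in $\mathbf{U}$, after which Seifert--van Kampen gives $\pi_1(\mathbf{U}\cup\mathbf{X})=1$, contradicting $\pi_1(B_i)\neq 1$. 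Proving Claim~\ref{c:vx} in turn requires two ingredients absent from your proposal: the functoriality of the nerve-theorem equivalences with respect to inclusions of subcovers (Claim~\ref{c:funct}), which shows that $H_{d-1}(\mathbf{V}) \to H_{d-1}(\mathbf{U}\cup\mathbf{V})$ is an isomorphism because $H_{d-1}(\Gamma[V]) \to H_{d-1}(\Gamma)$ is; and Lemma~\ref{l:two_components}(c), which shows this map would be zero if $V_Y \subseteq \mathbf{U}$. This homological argument is the ``decisive use of the collar,'' and your proposal would need an analogue of it (adapted to your decomposition along $M$) before it could be considered complete. As it stands, the proof is not finished.
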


Theorem~\ref{t:emb_rep} is proved in section~\ref{s:etr};
Proposition~\ref{p:fund} is proved in section~\ref{s:fund}.
The implication in Theorem~\ref{t:emb_rep} cannot be reverted, since a
simplex of arbitrary high dimension is topologically $d$-representable by
balls; however it is not PL embeddable into $\R^d$ if the dimension of the
simplex exceeds $d$.
Similarly, this example shows that an analogue of Conjecture~\ref{conj:emb_equiv_rep} running as follows:
a simplicial complex is $d$-representable if and only if it $PL$ embeds into $\R^d$, is false.

We conclude this section by summarizing the above mentioned steps into a proof
of Theorem~\ref{t:unrecog}
(and Theorem~\ref{t:unrecog_balls} as well).

\begin{proof}[Proof of Theorem~\ref{t:unrecog} and
Theorem~\ref{t:unrecog_balls}]
Let $\{C_i\}_{i=1}^{\infty}$ be the sequence of simplicial complexes
constructed in this section. 

If $i$ is such that $\Sigma_i$ is not homeomorphic to a $d$-sphere, then $C_i$
is not topologically $d$-representable by Proposition~\ref{p:fund}. (And
therefore $C_i$ is neither topologically $d$-representable by balls.)

If $i$ is such that $\Sigma_i$ is homeomorphic to a $d$-sphere, then $\Sigma_i$
is actually a PL $d$-sphere by Theorem~\ref{t:novikov}. Let $\vartheta :=
\{v_1, \dots, v_{d+1}\}$. Then $C_i \cup \{\vartheta\}$ can be regarded as a
subdivision of $\Sigma_i$, and therefore $C_i \cup \{\vartheta\}$ is a PL
$d$-sphere. Consequently, $C_i$ is a PL $d$-ball~\cite[Corollary
3.13]{rourke-sanderson72}.
So $C_i$ PL embeds into $\R^d$, and hence $C_i$
is topologically $d$-representable by balls by Theorem~\ref{t:emb_rep}
(in particular, it is topologically $d$-representable).

\end{proof}


\section{Embeddable complexes are topologically representable}
\label{s:etr}

In this section we prove Theorem~\ref{t:emb_rep}.

Suppose that $K$ is a simplicial complex and
$f:|K|\to\R^d$ is a PL embedding.
Let $V$ be the set of  vertices of $K$. 
We have to construct 
a topological $d$-representation of $K$, i.e., a family of sets
$\{U_v\}_{v\in V}$, $U_v\subset \R^d$ such that
\begin{description}
\item{(a1)}
the nerve of $\{U_v\}$ is (isomorphic to) $K$; and
\item{(a2)}
the sets $U_v$ and all their intersections
are either homeomorphic to an open $d$-ball or empty.
\end{description}

\heading{Plan of the proof.}
The proof contains two steps.
First, we construct a family $\{X_v\}_{v\in V}$
of certain subcomplexes $X_v\subset K$
such that
\begin{description}
\item{(b1)}
the nerve of $\{|X_v|\}$ is $K$; and
\item{(b2)}
the sets $|X_v|$ and all their intersections
are either (simplicial) cones or empty.
\end{description}
Second, we consider the images $f(|X_v|)\subseteq \R^d$. 
The family $\{f(|X_v|)\}$ has
property (a1), but not property (a2). 
We will introduce $U_v$
as a properly defined open neighborhood of $f(|X_v|)$ in $\R^d$
and show that $\{U_v\}$ is
a good $d$-representation by balls of $K$.

See Figure~\ref{f:embed_repre} while following the
construction.

\heading{Subdivisions and stars.}
Let $L$ be a simplicial complex. We will recall
two notions that we will need further: the {\it barycentric subdivision}
of $L$, and the {\it star} of a vertex $u\in V(L)$.

Formally, the \emph{barycentric subdivision} $\sd K$
of a simplicial complex $K$ is a simplicial complex whose vertices are the faces of
$K$ except the empty face; and the faces of $\sd K$ are 
the chains of faces of $K$
$$
\Lambda = \{\sigma_1, \dots, \sigma_m\} \hbox{ such that }\emptyset \neq
\sigma_1 \subsetneq \sigma_2 \subsetneq \cdots \subsetneq \sigma_m.
$$ 
If there is no risk of confusing reader we simplify the notation by writing
$$
\Lambda = \{\sigma_1 \subsetneq \cdots \subsetneq \sigma_m\}.
$$

In the geometric setting, we can set $|K|=|\sd K|$ 
in such a
way that a vertex of $\sd K$ corresponding to
a simplex $\sigma \in K$ is situated in the
barycentre of $|\sigma|\subset |K|$.

Let $u$ be a vertex of $L$. The (closed)
\emph{star} of $u$ in $L$ is defined as $\st(u, L) := \{\sigma \in L\colon u
\cup \sigma \in L\}$.

\heading{First step. A cover $\{|X_v|\}$ inside $|K|$.}
For each $v\in V$, denote $X_v:=\st(v,\sd K)$.
It is a subcomplex of $\sd K$.
For $S\subseteq V$, denote $X_S:=\bigcap_{v\in S}X_v$.
\begin{figure}[t]
\centering
\includegraphics{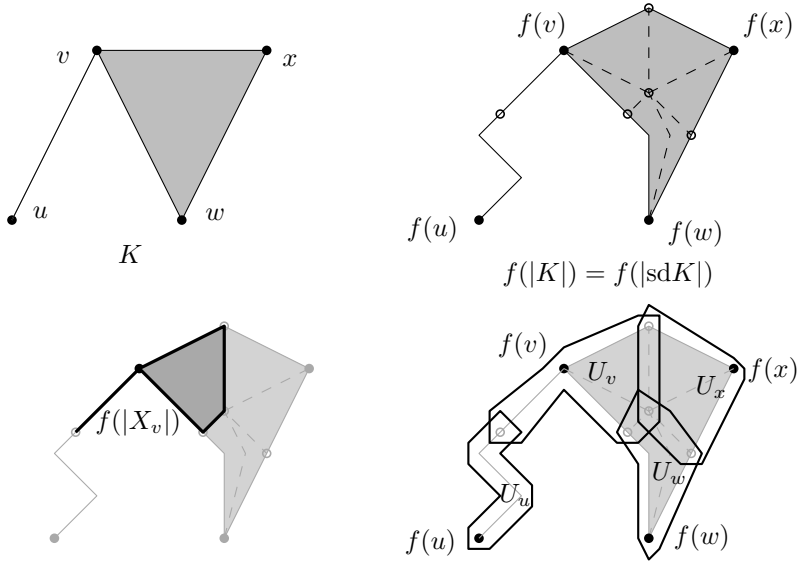}
\caption{A complex $K$ (top left);
a PL embedding $f$ of $|K|$ into $\R^2$ (top right); a~set
$f(|X_v|)$ -- image of the star of $v$ in
the barycentric subdivision of $K$ (bottom left);
and the sets $\{U_v\}$ forming a topological $d$-representation of $K$ (bottom right).}
\label{f:embed_repre}
\end{figure}

The following claim says that the geometric realizations $|X_v|$ of $X_v$ form
a cover with properties (b1) and (b2) announced in the plan of the proof.

\begin{claim}
\label{c:nerve}
For every $S\subseteq V$,
$X_S$ is nonempty if and only if $S\in K$.
If $X_S$ is nonempty, then it is a cone.
\end{claim}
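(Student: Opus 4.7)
The plan is to unwind the definition of $\sd K$ and the closed star, show the intersections $X_S$ have an explicit description as chains whose smallest element contains $S$, and then read off both the nonemptiness criterion and the cone structure from this description.

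First I would observe: a face of $\sd K$ is a chain $\Lambda = \{\sigma_1 \subsetneq \cdots \subsetneq \sigma_m\}$ of nonempty faces of $K$, and a vertex $v \in V$ is a vertex of $\sd K$ corresponding to the singleton $\{v\} \in K$. By definition of $\st(v,\sd K)$, a chain $\Lambda$ lies in $X_v$ iff $\Lambda \cup \{\{v\}\}$ is again a chain in $\sd K$. Because $\{v\}$ has cardinality one, it can only be inserted at the bottom of (or coincide with the bottom of) the chain, and this is possible precisely when $v \in \sigma_1$. Intersecting over $v \in S$ therefore gives
\[
X_S = \{\Lambda = \{\sigma_1 \subsetneq \cdots \subsetneq \sigma_m\} \in \sd K : S \subseteq \sigma_1\}.
\]

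From this explicit description the first assertion is immediate. If $S \in K$, then $S$ itself is a vertex of $\sd K$ and the one-element chain $\{S\}$ belongs to $X_S$, so $X_S$ is nonempty. Conversely, if $X_S$ is nonempty, pick any $\Lambda \in X_S$; then $S \subseteq \sigma_1 \in K$, and since $K$ is closed under taking subsets this forces $S \in K$.

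For the cone assertion, when $X_S$ is nonempty I claim the vertex of $\sd K$ corresponding to the face $S \in K$ is an apex. Given any $\Lambda = \{\sigma_1 \subsetneq \cdots \subsetneq \sigma_m\} \in X_S$, we have $S \subseteq \sigma_1$, so either $S = \sigma_i$ for some $i$ (in which case $\Lambda \cup \{S\} = \Lambda$) or $S \subsetneq \sigma_1$ (in which case $\{S, \sigma_1, \ldots, \sigma_m\}$ is still a strictly increasing chain). In both cases $\Lambda \cup \{S\}$ is a face of $\sd K$ whose smallest element still contains $S$, hence it lies in $X_S$. This is exactly the condition that $X_S$ is a simplicial cone with apex $S$. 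The argument is entirely formal, so I do not anticipate a real obstacle; the only thing to be careful about is keeping the distinction between $S$ viewed as a subset of $V$ and $S$ viewed as a single vertex of $\sd K$ straight.
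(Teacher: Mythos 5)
Your proof is correct and follows essentially the same route as the paper: both unwind the definitions of star and barycentric subdivision to get the explicit description of $X_S$ as the chains whose elements all contain $S$ (equivalently, whose minimal element contains $S$), and both identify the vertex $\{S\}$ of $\sd K$ as the cone apex. The only cosmetic difference is that the paper phrases the membership condition as ``$S \subseteq \sigma_i$ for every $i$'' rather than ``$S \subseteq \sigma_1$,'' which is equivalent for a chain.
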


{\it Proof.}
According to the definitions of star and barycentric subdivision
we have:
$$
\st(\{v\}, \sd K) = \left\{ \{ \sigma_1 \subsetneq \cdots \subsetneq
\sigma_m \}\in
\sd K: v \in \sigma_i \hbox{ for every } i \in [m] \right\}.
$$
Therefore
$$
X_S = 
\bigcap\limits_{v \in S} \st(\{v\}, \sd K) = \left\{ \{\sigma_1
\subsetneq \cdots \subsetneq \sigma_m \}\in
\sd K: S \subseteq \sigma_i \hbox{ for every } i \in [m] \right\}.
$$
Hence $X_S$ is nonempty if and only if $S
\in K$. In addition, if $S \in K$, then $\bigcap_{v \in S} \st(\{v\}, \sd K)$
is a cone in $\sd K$ with apex $\{S\}$.
\qed


%
%

\heading{Derived neighborhoods and collapsibility.}
Here we will briefly recall another concept
of PL topology.
Let $L\subset M$ be a simplicial embedding
of a simplicial complex $L$ into a simplicial $d$-manifold $M$.
The {\it derived  neighborhood} of $L$ in $M$
is the subcomplex $N(L)$ of $\sd \sd M$
whose geometric realization $|N(L)|$
is the union of 
all $|\sigma|$ such that
$\sigma\in \sd \sd M$ is a $d$-simplex
and $|\sigma|\cap |L|\neq\emptyset$. See Figure~\ref{f:derived}.

The definition
of {\it collapsible} simplicial complexes
is found in e.g.~\cite[p.39]{rourke-sanderson72}.
We will omit this definition since we use only some properties of
collapsibility described in the following two lemmas.

\begin{lemma}
\label{l:collapse_cone}
\cite[p.40]{rourke-sanderson72}
If a simplicial complex $L$ is a cone
over another simplicial complex,
then $|L|$ is collapsible.
\end{lemma}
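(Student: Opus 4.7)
My plan is to prove that the cone $L = \{v\} * M$ is collapsible by induction on the number of simplices of the base complex $M$, producing an explicit sequence of elementary collapses that reduces $L$ all the way down to the single vertex $\{v\}$ (and then to the empty complex, depending on one's convention). The base case $M = \{\emptyset\}$ is trivial since $L$ is then a single vertex.

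For the inductive step I would pick a simplex $\sigma \in M$ that is inclusion-maximal; such a $\sigma$ exists because $M$ is a finite simplicial complex. The crucial claim is that $\sigma$ is a \emph{free face} of $\sigma \cup \{v\}$ in $L$, meaning that $\sigma \cup \{v\}$ is the unique simplex of $L$ properly containing $\sigma$. To verify this, I would case-split on whether a hypothetical larger simplex $\tau \supsetneq \sigma$ contains $v$. If $v \notin \tau$ then $\tau \in M$ and $\tau \supsetneq \sigma$, contradicting maximality of $\sigma$ in $M$. If $v \in \tau$ then $\tau = \rho \cup \{v\}$ for some $\rho \in M$ with $\sigma \subseteq \rho$ (using $v \notin \sigma$); maximality of $\sigma$ forces $\rho = \sigma$, so $\tau = \sigma \cup \{v\}$. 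Hence the elementary collapse removing the pair $(\sigma, \sigma \cup \{v\})$ from $L$ is legitimate.

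After performing this collapse, I would observe that the resulting complex equals $\{v\} * (M \setminus \{\sigma\})$, i.e., it is again a cone over a base with strictly fewer simplices. The induction hypothesis then finishes the argument, as the new cone collapses to $\{v\}$.

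The only nontrivial point is the free-face verification above, and this is essentially an unpacking of the definition of the cone $\{v\} * M = M \cup \{\tau \cup \{v\} : \tau \in M\} \cup \{\{v\}\}$. I do not expect any serious obstacle; this is a textbook fact, and the inductive argument above should reproduce the proof referenced in Rourke--Sanderson.
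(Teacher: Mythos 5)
Your proof is correct: the free-face verification (the only coface of a maximal $\sigma\in M$ inside $\{v\}*M$ is $\sigma\cup\{v\}$) and the identification of the post-collapse complex with $\{v\}*(M\setminus\{\sigma\})$ are both sound, and the induction terminates at the single vertex $v$. Note that the paper gives no proof of this lemma at all --- it is quoted directly from Rourke--Sanderson, p.~40 --- and your induction is exactly the standard argument behind that citation. The one point worth making explicit is that you establish \emph{simplicial} collapsibility of $L$, whereas the statement (and Rourke--Sanderson's notion) concerns PL collapsibility of the polyhedron $|L|$; since a simplicial collapse induces a PL collapse, this is harmless, but a sentence acknowledging the passage from the combinatorial to the PL notion would make the proof complete as stated.
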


\begin{lemma}
\cite[3.27]{rourke-sanderson72}
\label{l:collapse_nbhood}
Let $L\subset M$ be simplicial embedding
of a simplicial complex $L$ into a simplicial $d$-manifold $M$.
If $|L|$ is collapsible, then
$|N(L)|$ is PL homeomorphic to the $d$-ball.
\end{lemma}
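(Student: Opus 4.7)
The plan is to follow the standard blueprint of PL topology: identify $|N(L)|$ as a regular neighborhood of $|L|$ in $|M|$, and then argue that regular neighborhoods of collapsible subpolyhedra of PL manifolds are PL balls. The use of the \emph{second} derived subdivision in the definition of $N(L)$ is what makes the construction a regular neighborhood in the sense of Rourke--Sanderson; in particular it controls the collaring of the boundary of $N(L)$ and the manifold structure near $|L|$.

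First I would verify that $|N(L)|$ is a compact PL $d$-manifold with boundary. This is a local check at each point $x\in |N(L)|$: since $M$ is a PL $d$-manifold, and this property is preserved by barycentric subdivision, the star of $x$ in $\sd\sd M$ is PL homeomorphic to a ball (or a half-ball when $x$ lies on $\partial M$ or on the outward boundary of $N(L)$). The double subdivision ensures in addition that the $(d-1)$-simplices of $\sd\sd M$ lying on the ``outside'' of $N(L)$ fit together into a PL $(d-1)$-manifold, which is $\partial N(L)$.

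Second I would show $|N(L)|\searrow |L|$. The combinatorial structure of $\sd\sd M$ yields, for each simplex of $N(L)\setminus L$, a natural free-face pairing organised by decreasing dimension: each outward $d$-simplex collapses across its outward $(d-1)$-face, one continues through lower dimensions, and no simplex of $L$ is ever removed. Composing this collapse with the hypothesised collapse $|L|\searrow \{v\}$ yields $|N(L)|\searrow \{v\}$.

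The concluding step, deducing that a compact collapsible PL $d$-manifold is a PL $d$-ball, is the main obstacle. I would handle it by induction on the length of a collapsing sequence $L=L_0\searrow L_1\searrow\cdots\searrow L_n=\{v\}$. For the base case $|N(v)|=\st(v,\sd\sd M)$ is a PL $d$-ball, because its link in $\sd\sd M$ is a PL $(d-1)$-sphere (or $(d-1)$-ball if $v\in\partial M$) thanks to the PL manifold hypothesis on $M$. For the inductive step, an elementary collapse $L_i\searrow L_{i+1}$ given by a free pair $(\sigma,\tau)$ should be promoted to an explicit PL homeomorphism $|N(L_i)|\cong_{PL}|N(L_{i+1})|$: in $\sd\sd M$ the pair $(\sigma,\tau)$ sits inside a natural PL bicollar, and one ``pushes'' $N(L_i)$ down this bicollar onto $N(L_{i+1})$. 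Iterating this PL homeomorphism along the collapsing sequence delivers $|N(L)|\cong_{PL}\Delta^d$, completing the proof.
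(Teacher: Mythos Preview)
The paper does not supply its own proof of this lemma: it is quoted verbatim as \cite[3.27]{rourke-sanderson72} and used as a black box, so there is no argument in the paper to compare your proposal against. Your outline is essentially the Rourke--Sanderson approach (regular neighborhood theory: $N(L)$ is a regular neighborhood of $|L|$, and regular neighborhoods are invariant under elementary collapses, reducing to the case of a vertex star), so in that sense you are reproducing the cited source rather than diverging from the paper.

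One caution about your sketch: the step ``$|N(L)|\searrow |L|$, hence $|N(L)|\searrow\{v\}$, hence $|N(L)|$ is a PL ball because collapsible PL manifolds are balls'' is circular as written, since the standard proof that a collapsible PL $d$-manifold is a $d$-ball \emph{is} the regular neighborhood argument you then give in your inductive paragraph. The substantive content lies entirely in that inductive step, namely that an elementary collapse $L_i\searrow L_{i+1}$ yields a PL homeomorphism $|N(L_i)|\cong |N(L_{i+1})|$; your description of this (``push down a bicollar'') is correct in spirit but would need the full machinery of Chapter~3 of Rourke--Sanderson (in particular the uniqueness of regular neighborhoods and the cellular-move lemma) to be made rigorous.
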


We also need Corollary~\ref{c:nbhood_intersect} below which is implied by the
following lemma. The lemma provides a combinatorial description of the derived
neighborhood.

For a simplicial complex $K$ we define a function $\mu\colon \sd K \to
K$ that assigns to a chain $\Lambda \in \sd K$ the minimal element of this
chain. That is, $\mu(\Lambda) = \sigma_1$ if 
$$
\Lambda = \{\sigma_1 \subsetneq \cdots \subsetneq
\sigma_k\} \in \sd K.
$$

\begin{lemma}
\label{l:comb_derived}
Let $L \subset M$ be simplicial embedding of a simplicial complex $L$ into a
simplicial $d$-manifold $M$. Then
$$
N(L) = \{\sigma \in \sd \sd M\colon \mu(\mu(\sigma)) \in L\}.
$$
\end{lemma}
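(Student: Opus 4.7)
The plan is to introduce the canonical carrier map $\pi \colon \sd \sd M \to M$ sending a simplex $\rho$ to the unique simplex of $M$ whose relative interior contains $|\rho|^\circ$. For $\rho = \{\Lambda_1 \subsetneq \cdots \subsetneq \Lambda_k\}$, applying the barycenter formula $b_\Lambda = \frac{1}{|\Lambda|}\sum_{\sigma \in \Lambda} b_\sigma$ twice shows that any point of $|\rho|^\circ$ is a convex combination with strictly positive coefficients of the barycenters $b_\sigma$ with $\sigma$ ranging over the top chain $\Lambda_k = \bigcup_i \Lambda_i$. Since $b_{\max \Lambda_k}$ lies in the relative interior of $|\max \Lambda_k|$ and every other $b_\sigma$ lies in $|\max \Lambda_k|$, convex-combination considerations put $|\rho|^\circ$ in the relative interior of $|\max \Lambda_k|$; hence $\pi(\rho) = \max(\max(\rho))$.

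Using the decomposition $|\rho| = \bigcup_{\rho' \leq \rho} |\rho'|^\circ$, the closed simplex $|\rho|$ meets $|L|$ if and only if $\pi(\rho') \in L$ for some face $\rho' \leq \rho$. Because $\pi(\rho')$ depends only on the maximum vertex of $\rho'$, this amounts to: some vertex $\Lambda_i$ of $\rho$ satisfies $\max \Lambda_i \in L$. I will then specialize this to a $d$-simplex $\tau = \{\Lambda_1 \subsetneq \cdots \subsetneq \Lambda_{d+1}\}$. The strict length inequalities $|\Lambda_1| < \cdots < |\Lambda_{d+1}| \leq d+1$ (the bound being the maximum chain length in the $d$-dimensional complex $M$) force $|\Lambda_i| = i$, so $\Lambda_1 = \{\sigma_0\}$ is a singleton and $\sigma_0 = \mu\mu(\tau)$. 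The inclusion $\Lambda_1 \subseteq \Lambda_i$ makes $\max \Lambda_1 = \sigma_0$ a face of every $\max \Lambda_i$, so closure of $L$ under faces reduces ``some $\max \Lambda_i \in L$'' to ``$\sigma_0 \in L$''. Thus for a $d$-simplex we get $|\tau| \cap |L| \neq \emptyset \iff \mu\mu(\tau) \in L$.

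To upgrade to arbitrary $\rho \in \sd\sd M$, recall that $\rho \in N(L)$ iff $\rho \leq \tau$ for some $d$-simplex $\tau$ with $|\tau| \cap |L| \neq \emptyset$. One direction is monotonicity: if $\rho \leq \tau$, then the smallest vertex of $\tau$ is a subchain of the smallest vertex of $\rho$, so $\mu(\tau) \subseteq \mu(\rho)$ and $\mu\mu(\rho) = \min \mu(\rho)$ is a face of $\mu\mu(\tau) = \min \mu(\tau)$; subcomplex-ness of $L$ gives $\mu\mu(\rho) \in L$ whenever $\mu\mu(\tau) \in L$. The converse is the main technical point: given $\rho$ with $\sigma_0 := \mu\mu(\rho) \in L$, I construct a $d$-simplex $\tau \supseteq \rho$ with $\mu(\tau) = \{\sigma_0\}$ by prepending the vertex $\{\sigma_0\}$ to $\rho$ whenever $\mu(\rho) \neq \{\sigma_0\}$, and then extending the resulting chain of chains to total length $d+1$. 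The main obstacle is this top extension: one must realize $\sigma_0$ as an element of some maximal chain of $M$ of length $d+1$. This is precisely where the hypothesis that $M$ is a simplicial $d$-manifold enters: every simplex of $M$ lies in a $d$-simplex, and intermediate-length chains can be refined one dimension at a time.
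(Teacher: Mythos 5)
Your proof is correct, and it reaches the lemma by a genuinely different route than the paper's. You compute the carrier of every $\rho\in\sd\sd M$ exactly (the relative interior of $|\rho|$ sits inside the relative interior of $|\max(\max\rho)|$, by expanding the barycenter formula twice), deduce the clean characterization that $|\rho|$ meets $|L|$ iff some vertex $\Lambda$ of $\rho$ has $\max\Lambda\in L$, specialize to $d$-simplices --- where the forced cardinalities $|\Lambda_i|=i$ make the bottom vertex the singleton $\{\mu(\mu(\tau))\}$ --- and then handle general simplices by the monotonicity $\sigma\subseteq\tau\Rightarrow\mu(\mu(\sigma))\subseteq\mu(\mu(\tau))$ together with the chain-extension argument. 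The paper never computes carriers: for the inclusion ``$\mu(\mu(\sigma))\in L\Rightarrow\sigma\in N(L)$'' it adjoins the single vertex $\{\{\mu(\mu(\sigma))\}\}$ to $\sigma$ and notes that the resulting simplex contains the barycenter of $|\mu(\mu(\sigma))|\subseteq|L|$, and for the reverse inclusion it combines the same monotonicity with the observation that the closed star in $\sd\sd M$ of a vertex $\{\{\vartheta\}\}$ with $\vartheta\notin L$ misses $|L|$. So the two arguments share their combinatorial engine (the distinguished vertex $\{\mu(\mu(\sigma))\}$ and the monotonicity of $\mu\circ\mu$ under the face relation), but yours proves more --- an exact description of which simplices of $\sd\sd M$ meet $|L|$ --- at the cost of the double-barycenter computation and of the extension of a chain of chains to a maximal one, where purity of the $d$-manifold $M$ is needed; the paper's version is shorter because it only needs one barycenter and one open-star estimate, and it uses purity only implicitly when passing from ``$|h(\sigma)|$ meets $|L|$'' to ``$\sigma\in N(L)$'' (since $N(L)$ is generated by $d$-simplices). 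Your sketch is in fact more explicit about that last point.
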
 

\begin{proof}
Let $\sigma \in \sd \sd M$. 

We first assume that $\mu(\mu(\sigma)) \in L$ and
we will show that $\sigma \in N(L)$. From the definition of $\mu$ it follows
that $\mu(\mu(\sigma)) \in \mu(\sigma)$; hence
$\{\mu(\mu(\sigma))\}??\subseteq \mu(\sigma)$.\footnote{Note that, purely
formally, if $v$ is a vertex of $M$, then $\{v\}$ is the corresponding vertex
of $\sd M$, and $\{\{v\}\}$ the corresponding vertex of $\sd \sd M$. This
explains the necessity of using iterated parentheses.}
Consequently $h(\sigma) := \sigma
\cup \{\{\mu(\mu(\sigma))\}\}$ is a simplex of $\sd \sd M$. (Note that it might
happen that $h(\sigma) = \sigma$ if $\{\mu(\mu(\sigma))\} = \mu(\sigma)$.) The geometric
realization $|h(\sigma)|$ intersects $|L|$ (in $|\{\mu(\mu(\sigma))\}|$ as a  
vertex of $\sd K$, that is, in the barycentre of the simplex
$|\mu(\mu(\sigma))|$ of $K$). Therefore
$\sigma \in N(L)$.

Before proving the second inclusion, we first realize that if $\vartheta \in M
\setminus L$, then $|\st(\{\{\vartheta\}\}, \sd \sd M)|$ does not
meet $|L|$. This is because $|\st(\{\{\vartheta\}\}, \sd \sd M)| \subseteq
\interior
|\st(\{\vartheta\}, \sd M)|$ (where $\interior$ denotes the interior), and
$\interior |\st(\{\vartheta\}, \sd M)|$ does not meet $|L|$ since $\vartheta
\not \in L$.

Now we assume that $\mu(\mu(\sigma)) \not\in L$. We will show that $\sigma \not
\in N(L)$. That is, we want to show that $|\tau| \cap |L| = \emptyset$ for
every $\tau \in \sd \sd M$ with $\sigma??\subseteq \tau$. See
Figure~\ref{f:derived}. From the definition of
$\mu$ the inclusion $\sigma \subseteq \tau$ implies $\mu(\tau) \subseteq
\mu(\sigma)$. Applying once more, we get $\mu(\mu(\sigma)) \subseteq
\mu(\mu(\tau))$. Therefore $\mu(\mu(\tau)) \not \in L$ since $\mu(\mu(\sigma))
\not \in L$. Similarly as before we have a simplex $h(\tau) := \tau \cup
\{\{\mu(\mu(\tau))\}\}$ containing $\tau$ and therefore $\sigma$ as well. 
However, $|h(\tau)| \cap |L| = \emptyset$ since $h(\tau) \in
\st(\{\{\mu(\mu(\tau))\}\}, \sd \sd M)$, using the observation from the
previous paragraph. 


\end{proof}

\begin{corollary}
\label{c:nbhood_intersect}
Let $L_1,L_2\subset M$ be two simplicial embeddings
of simplicial complexes $L_1,L_2$ into a simplicial $d$-manifold $M$.
Then $N(L_1\cap L_2)=N(L_1)\cap N(L_2)$.
\end{corollary}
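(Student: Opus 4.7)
The plan is to derive this directly from the combinatorial characterization of the derived neighborhood furnished by Lemma~\ref{l:comb_derived}. That lemma identifies $N(L)$ with the set of simplices $\sigma \in \sd\sd M$ satisfying $\mu(\mu(\sigma)) \in L$, and this condition is pointwise in $L$ in a way that makes intersections completely transparent.

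First I would apply Lemma~\ref{l:comb_derived} three times, once to each of $L_1$, $L_2$, and $L_1 \cap L_2$ (noting that $L_1 \cap L_2$ is again a simplicial subcomplex of $M$, so the lemma applies). This yields
\[
N(L_1 \cap L_2) = \{\sigma \in \sd\sd M : \mu(\mu(\sigma)) \in L_1 \cap L_2\},
\]
\[
N(L_j) = \{\sigma \in \sd\sd M : \mu(\mu(\sigma)) \in L_j\} \quad (j=1,2).
\]
Then I would simply observe the trivial set-theoretic equivalence
\[
\mu(\mu(\sigma)) \in L_1 \cap L_2 \iff \mu(\mu(\sigma)) \in L_1 \text{ and } \mu(\mu(\sigma)) \in L_2,
\]
which immediately gives $N(L_1 \cap L_2) = N(L_1) \cap N(L_2)$.

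There is essentially no obstacle here: all the work has been absorbed into Lemma~\ref{l:comb_derived}, whose purpose is precisely to convert the geometric definition of the derived neighborhood (a union of top-dimensional simplices meeting $|L|$) into a membership condition depending only on $L$ as a set of faces. The only thing worth double-checking is that the ambient complex in which the derived neighborhood is taken is the same for all three statements, namely $\sd\sd M$, so that the three sets live in a common universe and the equality makes sense as stated.
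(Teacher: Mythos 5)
Your proof is correct and is essentially identical to the paper's own argument: both reduce the statement to the membership criterion $\sigma \in N(L) \Leftrightarrow \mu(\mu(\sigma)) \in L$ from Lemma~\ref{l:comb_derived} and then use the trivial equivalence for membership in $L_1 \cap L_2$. Nothing is missing.
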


\begin{proof}
Let $\sigma \in \sd \sd M$. We have that $\sigma \in N(L_1) \cap N(L_2)$ if and
only if $\mu(\mu(\sigma)) \in L_1$ and $\mu(\mu(\sigma) \in L_2$. This happens
if and only if $\mu(\mu(\sigma)) \in L_1 \cap L_2$, that is, if and only if
$\sigma \in N(L_1 \cap L_2)$.
\end{proof}

\begin{figure}
\begin{center}
\includegraphics{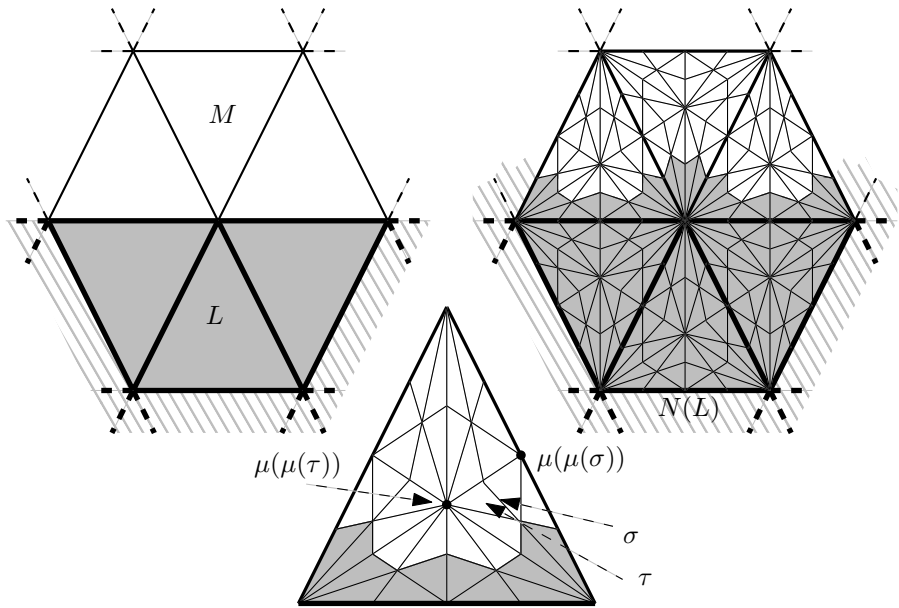}
\caption{Derived neighborhoods: A complex $L$ embedded in a triangulated
manifold $M$ (left). The derived neighborhood $N(L)$ (right). In addition
one of the triangles is enlarged (bottom) with a particular choice of $\sigma$
and $\tau$ such as in the proof of Lemma~\ref{l:comb_derived} (second
inclusion). The notation is simplified; $\mu(\mu(\sigma))$ stands for
$|\{\{\mu(\mu(\sigma))\}\}|$; $\sigma$ stands for $|\sigma|$, etc.
}
\label{f:derived}
\end{center}
\end{figure}





\oldlemma{
We also need the following fact
about derived neighborhoods.

\begin{lemma}
\label{l:nbhood_intersect}
Let $L_1,L_2\subset M$ be two simplicial embeddings
of simplicial complexes $L_1,L_2$ into a simplicial $d$-manifold $M$.

Then $\interior|N(L_1\cap L_2)|=\interior|N(L_1)|\cap |N(L_2)|$. Here
$\interior$
denotes the interior.
\footnote{The lemma is true even in the form
$N(L_1\cap L_2)=N(L_1)\cap N(L_2)$, but we will not need this.} 

\end{lemma}

We were unable to find a reference for
Lemma~\ref{l:nbhood_intersect}
so we will prove it here.

\begin{proof}
Clearly
$N(L_1\cap L_2)\subseteq N(L_1)\cap N(L_2)$, 
and therefore
$\interior|N(L_1\cap L_2)| \subseteq \interior|N(L_1)|\cap |N(L_2)|$.

The inverse inclusion is equivalent to the fact that
each $d$-simplex $\sigma\in \sd\sd M$
such that
$\sigma\in N(L_1)\cap N(L_2)$
satisfies
$\sigma\in N(L_1\cap L_2)$. 


Let us introduce two functions $f_i:V(\sd \sd M)\to\R$, $i\in \{1,2\}$. 
For each vertex \red{$v\in V(\sd\sd M)$}, let
$$f_i(v)=
\begin{cases}
0,&v\in V(\sd \sd L_i)\\
1,&v\in V(N(L_i))\setminus V(\sd \sd L_i)\\
2,&v\notin V(N(L_i))
\end{cases}
$$

A vertex $u$ of $\sd \sd M$ is \emph{distinguished} (see
Figure~\ref{f:derived}) if it also belongs to $\sd
M$. (Purely formally, $u$ is distinguished if $u = \{v\}$ for some $v \in V(\sd
M)$.)
First we observe that $f_i(u) \in \{0,2\}$ for each distinguished vertex $u$
and $i \in \{1,2\}$. 
Indeed, if $u\notin V(\sd\sd L_i)$ (i.e., $f_i(u)\neq 0$) then the star
of $u$ in $\sd\sd M$ does not meet $L_i$ (i.e., $f_i(u)=2$).
Secondly, 
from the properties of barycentric subdivisions it immediately follows that
every $d$-simplex contains exactly one distinguished vertex.

Suppose
that a $d$-simplex $\sigma$ belongs to $N(L_1)\cap N(L_2)$. Let $u$ be the
distinguished vertex in $\sigma$. Then \red{$f_1(u) = 0$} ($f_1(u) \neq 2$ since
$\sigma$ belongs to the star of $u \in \sd \sd M$).  Similarly \red{$f_2(u) =
0$}. Therefore $\sigma \in N(L_1 \cap L_2)$.
\end{proof}
}

\heading{Second step. A good cover $\{U_v\}$ in $\R^d$.}
By \cite[2.14]{rourke-sanderson72} there is a subdivision of $\sd K$
and a triangulation of $\R^d$ such that
$f$ maps any simplex to simplex in these triangulations. So $f$ induces a
simplicial map between these triangulations as abstract simplicial complexes.
We denote this simplicial map by $f$ again: further $f$ will denote the
simplicial map only.
For $v\in V$, define $U_v:=\mathrm{Int}\, |N(f(X_v))|$.
Here the
derived neighborhood is taken with respect to
the triangulations above.



We conclude the section by proving Theorem~\ref{t:emb_rep}.

\begin{proof}[Proof of Theorem~\ref{t:emb_rep}]
It suffices to prove that the sets $U_v$
obtained above, and all their intersections,
are (either empty or) $d$-balls.
From Claim~\ref{c:nerve}
we know that $X_S = \bigcap_{v\in S}X_v$
is nonempty if and only if $S \in K$. 
If $S \in K$, by Claim~\ref{c:nerve} it is a cone, hence
$X_S$ is collapsible
by Lemma~\ref{l:collapse_cone},
and so is $|\bigcap_{v\in S}f(X_v)|$.
Consequently,
$|N(\bigcap_{v\in S} f(X_v))|$
is a $d$-ball by Lemma~\ref{l:collapse_nbhood}.
Then by Corollary~\ref{c:nbhood_intersect}
$\bigcap_{v\in S}U_v=\mathrm{Int}\, |N(\bigcap_{v\in S} f(X_v))|$
is the interior of a $d$-ball, and thus an open $d$-ball.
\end{proof}

%
%

\section{Nontrivial fundamental group is an obstruction}
\label{s:fund}

\heading{Proof of non-representability of $C_i$ in non-ball
case.} Let us prove Proposition~\ref{p:fund}. 
Let $C_i$ be fixed.
Suppose that there is a good cover in $\R^d$
whose nerve is isomorphic to $C_i$.
Its elements consist of open subsets of $\R^d$
(further called {\it cells}),
each cell corresponding to a vertex of $C_i$.
Let $S^d$, the $d$-sphere, be the 1-point compactification
of $\R^d\subset S^d$.
Any subset in $\R^d$
will be automatically considered as a subset of $S^d$.

Recall that $C_i$ has two special
sets of vertices $U=\{u_i\}_{i=1}^{d+1}$,
$V=\{v_i\}_{i=1}^{d+1}$
belonging to the `collar'
(see Figure~\ref{f:bc}).
Let $\BU_j$ (resp.
$\BV_j$) be the cell 
corresponding to the vertex $u_j$ (resp.
$v_j$) for $j =1,\ldots, d+1$.
Denote $\BU := \BU_1 \cup \cdots \cup \BU_{d+1}$, and $\BV := \BV_1 \cup \cdots \cup
\BV_{d+1}$. Moreover, let $X$ be the set of vertices of $C_i$ minus $U \cup V$
and let $\BX$ be the union of the cells corresponding to
vertices of $X$. 

In our considerations we frequently use the Nerve Theorem
(Theorem~\ref{t:nerve})
without explicitly mentioning it (for instance $\BV$ is homotopy equivalent to
the subcomplex of $C_i$ induced by vertices of $V$, which is homotopy
equivalent to
$S^{d-1}$, etc.). 




By Alexander duality, more precisely by Lemma~\ref{l:two_components}(a),
$S^d \setminus \BV$ has exactly two
components. We know that $\BV$ and $\BX$ are disjoint since there is no edge
connecting a vertex of $X$ with a vertex of $V$ 
(this is the place where we
use the `collar' structure of $C_i$). Thus we can denote by
$V_X$ the component of $S^d \setminus \BV$ containing $\BX$ and by $V_Y$ the
remaining one. See Figure~\ref{f:xuv}.

\begin{claim}
\label{c:vx}
We have $V_X \subseteq \BX \cup \BU$.
\end{claim}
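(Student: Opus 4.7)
The claim is equivalent to $V_X\cap W=\emptyset$ with $W:=S^d\setminus(\BX\cup\BU\cup\BV)$, because $V_X\cap\BV=\emptyset$ by construction. My plan is to show (i) that $W$ is connected and (ii) that $W$ meets $V_Y$; combined with $W\cap\BV=\emptyset$, these will force $W\subseteq V_Y$, so $V_X\cap W=\emptyset$ and hence $V_X\subseteq\BX\cup\BU\cup\BV$. Since $V_X\cap\BV=\emptyset$, the claim $V_X\subseteq\BX\cup\BU$ follows.

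For step (i), I would apply the Nerve Theorem to identify $\BX\cup\BU\cup\BV$ up to homotopy with $C_i$. Since $C_i$ is homeomorphic to the homology $d$-ball $B_i$, this gives $\tilde H_{d-1}(\BX\cup\BU\cup\BV)=0$, and then Lemma~\ref{l:two_components}(b) applied with $N=\BX\cup\BU\cup\BV$ (and any homology $(d-1)$-sphere $M$, e.g.\ $M=\BV$, which is such by the Nerve Theorem since the induced subcomplex of $C_i$ on $V$ is $\partial\Delta_V$) yields that $W$ is connected.

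For step (ii), I argue by contradiction. Because $V_Y\cap\BV=\emptyset$ (by definition of $V_Y$) and $V_Y\cap\BX=\emptyset$ (since $\BX\subseteq V_X$ and $V_X\cap V_Y=\emptyset$), the assumption $W\cap V_Y=\emptyset$ would force $V_Y\subseteq\BU$, producing the chain $\BV\subseteq \BV\cup V_Y\subseteq \BV\cup\BU$. By the Nerve Theorem applied to the subfamilies $\{\BV_j\}$ and $\{\BV_j,\BU_j\}$, whose nerves are the induced subcomplexes $\partial\Delta_V$ and $\Gamma$ of $C_i$ (and $|\Gamma|$ is homeomorphic to $|\partial\Delta_V|\times[0,1]$), the outer inclusion $\BV\hookrightarrow\BV\cup\BU$ corresponds up to homotopy to the simplicial inclusion $|\partial\Delta_V|\hookrightarrow|\Gamma|$, which is a deformation retract. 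Hence $\BV\hookrightarrow\BV\cup\BU$ induces an isomorphism $\Z\cong H_{d-1}(\BV)\to H_{d-1}(\BV\cup\BU)\cong\Z$. But this same map factors through $H_{d-1}(\BV\cup V_Y)$, which vanishes by Lemma~\ref{l:two_components}(c) (with $M=\BV$ and $C=V_Y$), contradicting the isomorphism. So $W\cap V_Y\neq\emptyset$, completing the plan. The delicate step is the last identification: justifying that $\BV\hookrightarrow\BV\cup\BU$ really induces an isomorphism on $H_{d-1}$ requires naturality of the Nerve Theorem under refinements of a cover, or alternatively a short Mayer--Vietoris computation of $H_*(\BV\cup\BU)$ in terms of $H_*(\BV)$, $H_*(\BU)$, and $H_*(\BV\cap\BU)$.
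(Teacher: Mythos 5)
Your proof is correct and follows essentially the same route as the paper's: connectedness of $S^d\setminus(\BX\cup\BU\cup\BV)$ via Lemma~\ref{l:two_components}(b), followed by ruling out $V_Y\subseteq\BU$ using Lemma~\ref{l:two_components}(c) together with naturality of the Nerve Theorem (the paper isolates these two ingredients as Claims~\ref{c:notcover} and~\ref{c:funct}, which you have simply inlined). The ``delicate step'' you flag at the end is precisely what the paper's Claim~\ref{c:funct} supplies.
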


We first derive the result from the claim, and we prove the claim later.

\begin{figure}
\begin{center}
\includegraphics{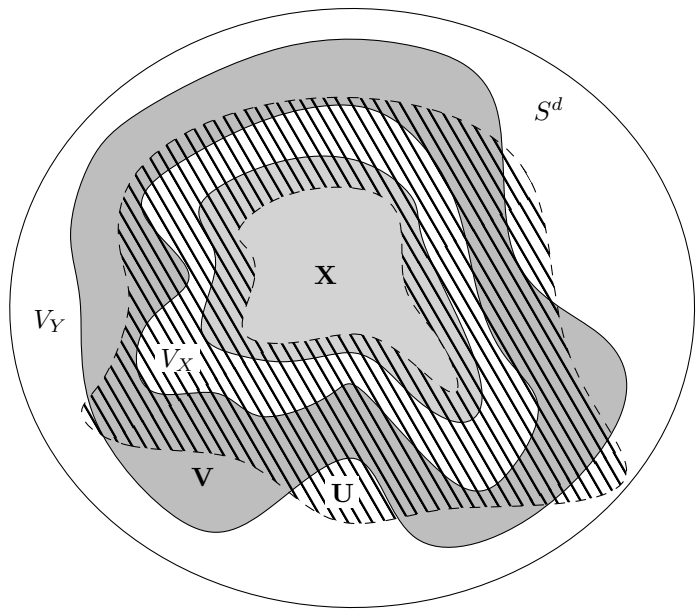}
\caption{The sets $\BU$, $\BV$, $\BX$, $V_X$, and $V_Y$. The set $\BV$ is grey,
$\BU$ is dashed, $\BX$ is light grey, $V_X$ is the component of $S^d \setminus
\BV$ inside $\BV$ and $V_Y$ is the other component.}
\label{f:xuv}
\end{center}
\end{figure}



Let us set $L := \BU 	\cup \BX$ and $M := \BU \cup \BV \cup V_Y$.
We have $L \cup M = S^d$ by Claim~\ref{c:vx}. We also have $L \cap M = \BU$,
since $\BX$ and $V_Y$ are disjoint by the definition of $V_Y$, and we have
already observed that $\BX$
and $\BV$ are disjoint. Both $L \cup M$ and $L \cap M$ have trivial fundamental
group, thus $L$ also has trivial fundamental group by Seifert--van~Kampen theorem.

On the other hand $L$ must have a nontrivial fundamental group, since
it is homotopy equivalent to $B_i$ by the Nerve Theorem. We obtain a 
contradiction as soon as we prove Claim~\ref{c:vx}.


In order to prove Claim~\ref{c:vx} we need two other auxiliary claims.
The first one does not seem new,
but we could not find a reference for it.

\begin{claim}
\label{c:funct}
Let $A\subset B$ be two simplicial complexes
which are represented by 
good covers, the cover representing $A$ 
being a subcover of the cover representing $B$.
Let $\mathbf{A}$ (resp.~$\mathbf{B}$)
be the union of all sets in the representation of $A$
(resp.~$B$). Then the following diagram is commutative,
in which the horizontal maps are inclusion-induced and the vertical maps
are the isomorphisms induced by the homotopy equivalence from the Nerve
Theorem.
$$
\begin{CD}
H_k(A)@>>>H_k(B)\\
@VV\cong V @VV\cong V\\
H_k(\mathbf{A})@>>>H_k(\mathbf{B})
\end{CD}
$$
\end{claim}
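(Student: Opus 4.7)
The plan is to pin down a concrete, \emph{functorial} realization of the Nerve Theorem equivalence and then derive commutativity from naturality. The cleanest choice is the \emph{blowup} (or homotopy colimit) of a cover. Given any good cover $\mathcal{U}=\{U_\alpha\}$ of a paracompact space $X$, I would form
\[
\Delta\mathcal{U} \;=\; \coprod_{\sigma\in N(\mathcal{U})} U_\sigma \times |\sigma| \;/\!\sim\,,
\]
where $U_\sigma = \bigcap_{\alpha\in\sigma} U_\alpha$ and the identifications are dictated by face inclusions. The space $\Delta\mathcal{U}$ carries two canonical projections: $p_X\colon\Delta\mathcal{U}\to X$ sending $(x,t)\in U_\sigma\times|\sigma|$ to $x$, and $p_N\colon\Delta\mathcal{U}\to|N(\mathcal{U})|$ sending $(x,t)$ to $t$.

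The next step is to verify that in the good-cover setting both projections are homotopy equivalences. This is the content of the Nerve Theorem in its refined form: $p_N$ has the contractible intersections $U_\sigma$ as (generalized) fibres, $p_X$ has contractible subcomplexes of $|N(\mathcal{U})|$ as fibres, and a Mayer--Vietoris / double-induction argument parallel to the one used in Hatcher's proof of Corollary 4G.3 (or Segal's classifying space of a cover) promotes these local contractions to global homotopy equivalences. Only paracompactness of $X$ is needed, which holds since $\mathbf{A},\mathbf{B}\subset\R^d$.

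The crucial point for the claim is \emph{functoriality under subcover inclusion}. Since the cover representing $A$ is a subfamily $\mathcal{U}_A\subset\mathcal{U}_B$ of the cover representing $B$, the simplex-wise definition gives an honest inclusion $\Delta\mathcal{U}_A\hookrightarrow\Delta\mathcal{U}_B$ commuting with both projections. Applying $H_k(\cdot)$ produces
\[
\begin{CD}
H_k(\mathbf{A}) @<p_X<< H_k(\Delta\mathcal{U}_A) @>p_N>> H_k(A)\\
@VVV @VVV @VVV\\
H_k(\mathbf{B}) @<p_X<< H_k(\Delta\mathcal{U}_B) @>p_N>> H_k(B)
\end{CD}
\]
in which all horizontal arrows are isomorphisms by the previous step. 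Inverting either column of horizontal isomorphisms and composing with the other gives exactly the commutative square of the claim, identifying the Nerve-Theorem isomorphism $H_k(A)\cong H_k(\mathbf{A})$ with the zig-zag $p_X\circ p_N^{-1}$ induced by the blowup.

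The main obstacle is not conceptual but bibliographic: the form of the Nerve Theorem quoted in the excerpt (Theorem~\ref{t:nerve}) only asserts the existence of a homotopy equivalence $X\simeq|N(\mathcal{U})|$ without describing it, whereas the argument requires a specific functorial construction. So one must either invoke a sharper reference (Segal's ``Classifying spaces and spectral sequences''; or note that Hatcher's proof of 4G.3 in fact produces the zig-zag through $\Delta\mathcal{U}$), or redo the homotopy-equivalence step with the blowup directly in hand. Once this functorial Nerve Theorem is in place, the commutativity asserted by the claim is essentially automatic because every map in the cube above arises from a single inclusion of covers.
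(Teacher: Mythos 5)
Your proposal is correct and follows essentially the same route as the paper: the paper also constructs the space $\Delta(\mathcal{U})$ from Hatcher's proof of the Nerve Theorem (\cite[4G]{hatcher01}), observes that the subcover inclusion induces an inclusion $\Delta(A)\subset\Delta(B)$ compatible with both projections, and passes to homology to get the commutative square via the zig-zag of homotopy equivalences. The ``bibliographic obstacle'' you flag is resolved in the paper exactly as you suggest, by citing the explicit construction in Hatcher's proof rather than the bare statement of the theorem.
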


\begin{claim}
\label{c:notcover}
We have $V_Y \not \subseteq \BU$.
\end{claim}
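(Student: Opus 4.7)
My plan is to argue by contradiction: assume $V_Y \subseteq \BU$, so that $\BV \cup V_Y \subseteq \BU \cup \BV$ and the inclusion $\BV \hookrightarrow \BU \cup \BV$ factors through $\BV \cup V_Y$. Applying $H_{d-1}(-)$ and tracking this factorisation, I will show that the outer composite must be an isomorphism $\Z \to \Z$ while the middle group is $0$, which is the desired contradiction.

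To see that $H_{d-1}(\BV) \to H_{d-1}(\BU \cup \BV)$ is an isomorphism, I invoke Claim~\ref{c:funct} applied to $A := C_i[V]$, covered by $\{\BV_j\}$, and $B := C_i[U \cup V] = \Gamma$, covered by $\{\BU_j\} \cup \{\BV_j\}$ (a valid subcover relation, and both are good covers as subcollections of the assumed good cover of $C_i$). The induced subcomplex $A = \partial \Delta^d$ is a $(d-1)$-sphere and sits in $\Gamma$ as one of the two boundary spheres of the cylinder $S^{d-1} \times [0,1]$ that $\Gamma$ realises (since $\Gamma$ is the boundary of a crosspolytope with the two opposite top-dimensional simplices $U$ and $V$ removed). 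Hence the simplicial inclusion $A \hookrightarrow B$ is a homotopy equivalence, inducing the isomorphism $\Z \cong H_{d-1}(A) \to H_{d-1}(B) \cong \Z$; by Claim~\ref{c:funct} the same map is induced by $\BV \hookrightarrow \BU \cup \BV$.

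For the middle group, I apply Lemma~\ref{l:two_components}(c) with $M = \BV$ (whose $(d-1)$st homology is $\Z$ by the Nerve Theorem, exactly as used earlier in this section to obtain the two components $V_X, V_Y$) and the component $C = V_Y$, obtaining $H_{d-1}(\BV \cup V_Y) = 0$. Combining, the composite $H_{d-1}(\BV) \to H_{d-1}(\BV \cup V_Y) \to H_{d-1}(\BU \cup \BV)$ is simultaneously an isomorphism $\Z \to \Z$ and factors through zero, giving the required contradiction. The step I expect to require the most care is the topological identification of the simplicial inclusion $C_i[V] \hookrightarrow \Gamma$ with the inclusion of a boundary sphere of the cylinder $S^{d-1} \times [0,1]$, which is what secures the isomorphism on $H_{d-1}$; once that is in place, everything else is a routine assembly of Claim~\ref{c:funct}, Lemma~\ref{l:two_components}(c), and functoriality of singular homology.
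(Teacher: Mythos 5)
Your argument is correct and is essentially the paper's own proof: you apply Claim~\ref{c:funct} to $A=C_i[V]=\Gamma[V]$ and $B=\Gamma$, identify the top map on $H_{d-1}$ as an isomorphism $\Z\to\Z$ via the cylinder structure of $\Gamma$, and show the bottom map would factor through $H_{d-1}(\BV\cup V_Y)=0$ by Lemma~\ref{l:two_components}(c) if $V_Y\subseteq\BU$, contradicting commutativity. The only difference is that you spell out the subcover hypothesis and the homotopy-equivalence identification more explicitly than the paper does.
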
 

\begin{proof}[Proof of Claim~\ref{c:funct}]
For each cover
(let it be the cover corresponding to $A$),
there can be constructed~\cite[4G]{hatcher01}
a space $\Delta(A)$
together with projections
$\pr_A:\Delta(A)\to A$
and
$\pr_{\mathbf{A}}:\Delta(A)\to \mathbf{A}$
which are homotopy equivalences.
(This is how the Nerve Theorem is generally proved.)
Apply the same construction to $B$.
It is easy to see from the definitions~\cite[4G]{hatcher01} that
we get $\Delta(A)\subset \Delta(B)$,
$\pr_A=\pr_B|_{\Delta(A)}$ and
$\pr_{\mathbf{A}}=\pr_{\mathbf{B}}|_{\Delta(A)}$.
We thus obtain the following commutative diagram:
$$
\begin{CD}
A@>\subset>>B\\
@A\pr_{A} A\sim A @A\pr_B A\sim A\\
\Delta(A)@>>>\Delta(B)\\
@V\pr_{\mathbf{A}}V\sim V @V\pr_{\mathbf{B}}V\sim V
\\
\mathbf{A}@>\subset>>\mathbf{B}
\end{CD}
$$
The vertical maps are homotopy equivalences~\cite[4G]{hatcher01}.
Passing to homology, we obtain the commutative diagram
$$
\begin{CD}
H_k(A)@>>>H_k(B)\\
@A(\pr_{A})_* A\cong A @A(\pr_B)_* A\cong A\\
H_k(\Delta(A))@>>>H_k(\Delta(B))\\
@V(\pr_{\mathbf{A}})_*V\cong V @V(\pr_{\mathbf{B}})_*V\cong V
\\
H_k(\mathbf{A})@>>>H_k(\mathbf{B})
\end{CD}
$$
where the vertical maps are isomorphisms.
\end{proof}

\begin{proof}[Proof of Claim~\ref{c:notcover}]
Recall the subcomplex $\Gamma\subset C_i$
from the collaring procedure.
Let $\Gamma[V]$ be the subcomplex
of $\Gamma$ generated by the set of vertices $V$.
In this proof,
we will abuse notation and write $\Gamma$, $\Gamma[V]$
instead of $|\Gamma|$, $|\Gamma[V]|$.
We apply Claim~\ref{c:funct}
taking $A=\Gamma[V]$, 
$B=\Gamma$ and 
$k=d-1$.
We obtain the following commutative diagram.
$$
\begin{CD}
H_{d-1}(\Gamma[V])@>f>>H_{d-1}(\Gamma)\\
@VV\cong V @VV\cong V\\
H_{d-1}(\mathbf{V})@>g>>H_{d-1}(\mathbf{U}\cup \mathbf{V})
\end{CD}
$$

As follows from the definition of $\Gamma$,
the map $f$ is the isomorphism $\Z\stackrel{\cong}\to \Z$.
On the other hand, if $V_Y\subset \mathbf{U}$, then
$g$ is the zero map.
Indeed, under this assumption $g$ is the composition of the inclusion-induced
maps
$$H_{d-1}(\mathbf{V})\to H_{d-1}(\mathbf{V}\cup V_Y)\to H_{d-1}(\mathbf{U}\cup
\mathbf{V})$$
with $H_{d-1}(\mathbf{V}\cup V_Y)=0$ due to
Lemma~\ref{l:two_components}(c),
so $g=0$.
This contradicts to the commutativity of the diagram.
\end{proof}

\begin{proof}[Proof of Claim~\ref{c:vx}]
The set $\BV \cup \BU \cup \BX$ has trivial $(d-1)$st homology, since it
is homotopy equivalent to $C_i$ which is a homology ball. Hence $S^d \setminus
(\BV \cup \BU \cup \BX)$ is connected due to Lemma~\ref{l:two_components}(b).
Thus $\BV \cup \BU \cup \BX$ has to contain (exactly) one
of the components of $S^d \setminus \BV$. Hence $\BU \cup \BX$ has to contain
$V_X$ or $V_Y$. In addition $\BX$ is disjoint with $V_Y$ by the definition of
$V_X$ and $\BU$ does not cover $V_Y$ by Claim~\ref{c:notcover}. The only
remaining option is that $\BU \cup \BX$ covers $V_X$.
\end{proof}







\section{Topological $d$-representability in the metastable range}
\label{s:rep_emb}
In this section we prove Conjecture~\ref{conj:emb_equiv_rep} if $\dim K \leq
\frac{2d - 3}3$. More precisely, we prove the following result since the
converse implication is already covered by Theorem~\ref{t:emb_rep}:

\begin{theorem}
\label{t:rep_emb}
Assume that $K$ is a $k$-dimensional simplicial complex with 
$k \leq \frac{2d - 3}3$. If $\sd K$, or any subdivision of $\sd K$, is topologically $d$-representable, then
$K$ PL embeds into $\R^d$.
\end{theorem}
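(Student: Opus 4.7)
The plan is to apply the Haefliger--Weber embedding theorem, which in the metastable range $2d \ge 3(k+1)$ (equivalently $k \le (2d-3)/3$) states that a finite $k$-dimensional simplicial complex $K$ PL-embeds into $\R^d$ if and only if there exists a $\Z/2$-equivariant continuous map
\[
F \colon |K|^{\times 2} \setminus \Delta \to S^{d-1}
\]
(swap action on the source, antipodal on the target). The task therefore reduces to producing such an equivariant map from the given good-cover representation of a subdivision of $\sd K$.

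Let $K''$ be the (subdivision of $\sd K$) equipped with the good cover $\{U_v\}_{v \in V(K'')}$; note $|K''| = |K|$. The strategy is to construct a continuous ``carrier-respecting'' map $\psi \colon |K| \to \bigcup_v U_v \subseteq \R^d$ with the property $\psi(|\sigma|) \subseteq U_\sigma$ for every $\sigma \in K''$. Such a $\psi$ is built skeleton-by-skeleton: choose $\psi(v) \in U_v$ at each vertex, and for each simplex $\sigma$ extend $\psi$ from $\partial|\sigma|$ to $|\sigma|$ into $U_\sigma$ using contractibility of $U_\sigma$. With $\psi$ in hand, set
\[
F(x, y) := \frac{\psi(y) - \psi(x)}{|\psi(y) - \psi(x)|}
\]
on pairs where $\psi(x) \ne \psi(y)$. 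For any distinct $(x, y)$ whose carriers $\sigma_x, \sigma_y \in K''$ are vertex-disjoint, the nerve condition gives $U_{\sigma_x} \cap U_{\sigma_y} = \emptyset$, so $\psi(x) \ne \psi(y)$ and $F$ is continuous and $\Z/2$-equivariant there.

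The main obstacle is twofold. First, constructing $\psi$ with the carrier-respecting condition requires reconciling the inductive boundary values on $\partial|\sigma|$ (which a priori lie only in the larger cells $U_\tau$ for faces $\tau \subsetneq \sigma$) with the requirement of landing in $U_\sigma$; this is where the flexibility of good covers and the freedom to refine subdivisions enter. Second, extending $F$ continuously and equivariantly across pairs $(x, y)$ whose carriers share a vertex — where $\psi(x) = \psi(y)$ is a priori possible — uses the metastable hypothesis $k \le (2d-3)/3$: by passing to a sufficiently fine subdivision of $\sd K$ (again topologically $d$-representable by restricting the cover) and perturbing $\psi$ generically within the open cells $U_v$, a general-position/dimension-counting argument in $\R^d$ removes the remaining coincidences. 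Once $F$ is constructed, Haefliger--Weber yields the desired PL embedding $K \hookrightarrow \R^d$.
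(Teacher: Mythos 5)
Your high-level plan (Haefliger--Weber plus a map built cell-by-cell from the cover) is the same as the paper's, but two of your key steps do not work as stated. First, the carrier condition $\psi(|\sigma|)\subseteq U_\sigma$ with $U_\sigma=\bigcap_{v\in\sigma}U_v$ goes the wrong way for the skeletal induction: for a proper face $\tau\subsetneq\sigma$ one has $U_\tau\supseteq U_\sigma$, so the boundary values on $\partial|\sigma|$ lie in the \emph{larger} sets $U_\tau$ and there is no reason they can be pushed into $U_\sigma$; contractibility of $U_\sigma$ does not help because the boundary sphere is not even contained in $U_\sigma$. You flag this as ``the main obstacle'' but do not resolve it, and no amount of refining the subdivision fixes it. The paper's Lemma~\ref{l:map_g} resolves it by reversing the containment: it requires $g(|\sigma|)\subseteq\bigcup_{U_v\in\sigma}U_v$ (the \emph{union}), which is monotone in the right direction, and the union is contractible by the Nerve Theorem because the nerve of the subfamily indexed by $\sigma$ is a full simplex. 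Second, your separation claim is false: vertex-disjointness of $\sigma_x,\sigma_y$ in the nerve does \emph{not} give $U_{\sigma_x}\cap U_{\sigma_y}=\emptyset$ (take the nerve to be a full simplex on four vertices and $\sigma_x=\{1,2\}$, $\sigma_y=\{3,4\}$; then $U_{\sigma_x}\cap U_{\sigma_y}=U_{\{1,2,3,4\}}\neq\emptyset$). What one actually gets is that \emph{remote} faces (no edge of the nerve joining them) have disjoint unions, hence disjoint images under $g$.

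This second point explains why your argument never uses the hypothesis that the representable complex is a subdivision of $\sd K$ --- a hypothesis that is essential. Disjoint simplices of the subdivided complex itself need not be remote, so one cannot separate at that level; the barycentric subdivision is exactly what guarantees that pieces of the subdivision lying over two \emph{disjoint simplices of $K$} are remote, which yields a map that is weakly injective with respect to $K$ (not with respect to $K''$). Finally, your last step --- extending the Gauss map across pairs whose carriers share a vertex by ``general position'' --- is unsubstantiated and unnecessary: the standard route, followed by the paper, is to define the equivariant map only on the simplicial deleted product $\bigcup\{|\gamma|\times|\delta|:\gamma\cap\delta=\emptyset\}$ of $K$ and invoke its $\Z_2$-equivariant homotopy equivalence with the deleted product before applying Haefliger--Weber. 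As written, your proof has a genuine gap at each of these three places.
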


The assumption $k \leq \frac{2d-3}3$ is known as that the pair
$(k,d)$ belongs to the \emph{metastable range} of a theorem of Haefliger and
Weber. 
The contents of this section can be regarded as an extension of methods
used in~\cite{tancer11dimension}.

We need some preliminaries.

\heading{Haefliger-Weber Theorem}. 
Let $X$ be a compact topological space. The \emph{deleted product} of a topological
space $X$ is the Cartesian product of $X$ with itself minus the diagonal:
$$
\tilde{X} := X \times X \setminus \{(x,x)\colon x \in X \}. 
$$
There is a natural $\Z_2$-action on $\tilde{X}$ given by swapping coordinates:
$(x,y) \rightarrow (y,x)$. In sequel we assume that $\tilde{X}$ is equipped
with this $\Z_2$-action. By $S^{d-1}_-$ we also denote $(d-1)$-dimensional
sphere equipped with the antipodal action $x \rightarrow -x$.

Let us assume that there exists an embedding $f\colon X \rightarrow \R^d$. The
\emph{Gauss map} $\tilde{f}\colon \tilde{X} \rightarrow S^{d-1}_-$ is the
$\Z_2$-equivariant map given by formula
$$
\tilde{f}(x,y) = \frac{f(x) - f(y)}{\|f(x) - f(y)\|}.
$$
Therefore we know that the existence of embedding $X$ into $\R^d$ 
implies the existence of $\Z_2$-equivariant map from $\tilde{X}$ to $S^{d-1}_-$.

The celebrated Haefliger-Weber Theorem (\cite{haefliger63,weber67}; see
also~\cite{skopenkov08}) states that for polyhedra in the
metastable range the existence of an embedding and the existence of the
equivariant map are equivalent:

\begin{theorem}[Haefliger-Weber]
Let $X$ be a geometric realization of a $k$-dimensional simplicial complex. Let
us also assume that $k \leq \frac{2d-3}3$. If there is a $\Z_2$-equivariant
map $\tilde{X} \rightarrow S^{d-1}_-$, then $X$ is PL embeddable into $\R^d$. 
\end{theorem}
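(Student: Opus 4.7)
The plan is to derive this from the Haefliger-Weber theorem stated just above: it suffices to construct a continuous $\Z_2$-equivariant map $\tilde K \to S^{d-1}_-$. All the work goes into extracting such a map from a topological $d$-representation.

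Let $L$ be a subdivision of $\sd K$, so that $|L|=|K|$, and let $\{U_v\}_{v \in V(L)}$ be a good cover in $\R^d$ whose nerve is isomorphic to $L$. For a vertex set $\sigma \subseteq V(L)$ write $U_\sigma := \bigcap_{v \in \sigma} U_v$; by the nerve property $U_\sigma$ is contractible when $\sigma \in L$ and empty when $\sigma \notin L$. The first step is to build a continuous map $f \colon |L| \to \R^d$ that is \emph{carried} by the cover, in the sense that $f(|\sigma|) \subseteq U_\sigma$ for every $\sigma \in L$. Such an $f$ is produced inductively over the skeleta of $L$: on each new simplex $\sigma$ the already-defined boundary map lands in $U_\sigma$, and contractibility of $U_\sigma$ lets us extend continuously across $|\sigma|$. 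This is the standard construction of a homotopy inverse to the nerve map.

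By the good-cover property $U_\sigma \cap U_\tau = \emptyset$ whenever $\sigma \cup \tau \notin L$; hence $f(|\sigma|) \cap f(|\tau|) = \emptyset$ for such pairs. Let $P \subseteq \tilde K$ be the subset of pairs $(x,y)$ whose carrier simplices $\sigma_x, \sigma_y$ in $L$ satisfy $\sigma_x \cup \sigma_y \notin L$. Because $L$ is closed under taking faces, the set $P$ is open and $\Z_2$-invariant. On $P$ the Gauss-type formula
\[
\Phi(x,y) \;:=\; \frac{f(x)-f(y)}{\lVert f(x)-f(y)\rVert}
\]
defines a continuous $\Z_2$-equivariant map $\Phi \colon P \to S^{d-1}_-$.

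The main obstacle, and the step where the metastable hypothesis $k \leq (2d-3)/3$ enters, is to extend $\Phi$ equivariantly over all of $\tilde K$. The complement $\tilde K \setminus P$ consists of pairs of distinct points sharing a simplex of $L$, and sits near the diagonal of $|L| \times |L|$. Two complementary strategies present themselves: (i) prove directly that the inclusion $P \hookrightarrow \tilde K$ is an equivariant homotopy equivalence by constructing an explicit equivariant deformation retraction that slides each pair apart inside $L$, possibly after first replacing $L$ by a finer subdivision of $\sd K$ (still topologically $d$-representable by assumption); or (ii) apply equivariant obstruction theory, noting that the obstructions to extending $\Phi$ to $\tilde K$ live in groups that vanish precisely in the metastable range $k \leq (2d-3)/3$, by the same connectivity mechanism that underlies Haefliger-Weber itself. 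I expect the combinatorial verification in (i) or the obstruction-theoretic bookkeeping in (ii) to be the technical heart of the proof. Once $\Phi$ is extended to all of $\tilde K$, invoking Haefliger-Weber delivers the required PL embedding $K \hookrightarrow \R^d$.
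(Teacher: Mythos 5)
There is a fundamental mismatch here: the statement you were asked to prove \emph{is} the Haefliger--Weber theorem, and your very first sentence assumes that theorem (``derive this from the Haefliger--Weber theorem stated just above'') in order to prove something else. As written, the argument is circular with respect to its stated goal. What you have actually sketched is a proof of the paper's Theorem~\ref{t:rep_emb} (topological $d$-representability of a fine subdivision implies PL embeddability in the metastable range): your ``carried'' map $f$ with $f(|\sigma|)\subseteq U_\sigma$ is the paper's Lemma~\ref{l:map_g}, your Gauss-type map $\Phi$ on pairs of remote simplices is the content of Corollary~\ref{c:wi_emb}, and your step of extending $\Phi$ over all of $\tilde K$ is handled in the paper not by a new deformation retraction or obstruction-theory computation but by citing the known fact that the simplicial deleted product is $\Z_2$-equivariantly homotopy equivalent to the full deleted product (the reference to Melikhov), after which Haefliger--Weber is invoked. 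So even reinterpreted charitably, your ``technical heart'' in steps (i)/(ii) is left unexecuted, whereas the paper closes that gap with a citation rather than an argument.

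For the statement actually at issue, the paper offers no proof at all: the Haefliger--Weber theorem is quoted as an external black box from \cite{haefliger63,weber67} (see also \cite{skopenkov08}). Its proof is a substantial piece of geometric topology (built on engulfing and the Whitney trick, which is where the metastable bound $k\le\frac{2d-3}{3}$ genuinely enters) and is not something that can be reconstructed from the tools developed in this paper. The correct move for this statement is to cite the literature; any attempt that starts from a topological $d$-representation is answering a different question.
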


\heading{Weakly injective maps and embeddings.}
Let $K$ be a simplicial complex. We say that a map $f \colon |K| \rightarrow
\R^d$ is \emph{weakly injective} (with respect to $K$) if for every two disjoint simplices $\gamma,
\delta \in K$ their images $f(|\gamma|)$ and $f(|\delta|)$ are disjoint as well.

\begin{remark}
Note that every injective map is weakly injective, but the converse is not
true. In a weakly injective map the images of two faces sharing a vertex might intersect also in other
points. Also the image of a single face might be self-intersecting or even
degenerate.
\end{remark}

For our purposes we need the following corollary of the Haefliger-Weber
Theorem:

\begin{corollary}
\label{c:wi_emb}
Let $K$ be a $k$-dimensional simplicial complex and $d$ be such that $k \leq
\frac{2d-3}3$. Then the existence of a weakly injective map $f \colon |K|
\rightarrow \R^d$ implies the existence of a PL embedding $|K|
\rightarrow \R^d$.
\end{corollary}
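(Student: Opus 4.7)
The plan is to reduce to the Haefliger--Weber Theorem by producing a $\Z_2$-equivariant map from the (topological) deleted product $\widetilde{|K|}$ to $S^{d-1}_-$, and then invoking the cited theorem. The hypothesis provides a weakly injective map $f\colon |K| \to \R^d$, so the naive Gauss map $(x,y) \mapsto (f(x)-f(y))/\|f(x)-f(y)\|$ is not globally defined on $\widetilde{|K|}$: it is only defined on the locus where $f(x) \neq f(y)$, which we are only guaranteed on pairs of points lying in disjoint simplices.

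To cope with this, I would introduce the \emph{combinatorial deleted product}
\[
K_* := \bigcup_{\substack{\gamma,\delta \in K \\ \gamma \cap \delta = \emptyset}} |\gamma| \times |\delta| \;\subseteq\; |K| \times |K|,
\]
equipped with the $\Z_2$-action swapping coordinates. By weak injectivity of $f$, the Gauss formula $\tilde f(x,y) = (f(x)-f(y))/\|f(x)-f(y)\|$ defines an honest continuous $\Z_2$-equivariant map $K_* \to S^{d-1}_-$. The first key step is then to check that the inclusion $K_* \hookrightarrow \widetilde{|K|}$ is a $\Z_2$-equivariant homotopy equivalence (in fact an equivariant deformation retract). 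This is a standard fact for simplicial complexes: on each product cell $|\gamma|\times|\delta|$ of $|K|\times|K|$ with $\gamma\cap\delta\neq\emptyset$, one can push points off the diagonal toward the boundary in a $\Z_2$-equivariant way, compatibly over the cell structure. Composing the inverse homotopy equivalence with $\tilde f$ yields the desired equivariant map $\widetilde{|K|} \to S^{d-1}_-$.

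Once this is in place, the Haefliger--Weber Theorem quoted above applies directly, since $(\dim K,d)$ lies in the metastable range $k\leq (2d-3)/3$, and produces a PL embedding $|K|\to\R^d$, which is exactly the conclusion of the corollary.

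The main obstacle is the identification $K_* \simeq_{\Z_2} \widetilde{|K|}$: one must verify that the retraction can be performed consistently on overlapping product cells and equivariantly with respect to the swap. This is well known in the embeddability literature (and is implicit in the standard simplicial formulation of Haefliger--Weber used in \cite{tancer11dimension}), so in practice I would cite it rather than reprove it; the substantive mathematical content of the corollary is really just the passage ``weakly injective $\Rightarrow$ equivariant Gauss map on $K_*$,'' after which Haefliger--Weber does all the work.
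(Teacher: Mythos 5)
Your proposal is correct and follows essentially the same route as the paper: define the simplicial (combinatorial) deleted product, observe that weak injectivity makes the Gauss formula well defined and $\Z_2$-equivariant there, invoke the standard equivariant homotopy equivalence between the simplicial and topological deleted products (the paper cites Melikhov for this), and conclude by Haefliger--Weber. No gaps.
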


\begin{proof}
A simplicial deleted product of $|K|$ is a topological space consisting of
products of pairs of disjoint simplices in $|K|$:
$$
|\tilde{K}|_{s} := \{|\sigma| \times |\tau| \colon \sigma, \tau \in K; \sigma
\cap \tau = \emptyset \}.
$$

The existence of $f$ implies that there is a $\Z_2$-equivariant map
$\tilde{f}_s \colon |\tilde{K}|_{s} \rightarrow S^{d-1}_-$ similarly as
the
existence of an embedding implies the existence of the Gauss map.

It is known that the simplicial deleted product $|\tilde{K}|_{s}$ is
equivariantly homotopic to the deleted product $|\tilde{K}|$;
see~\cite[remark below Example 3.3]{melikhov09}  and the references therein.
Thus there is also a $\Z_2$-equivariant map $|\tilde{K}| \rightarrow
S^{d-1}_-$. Therefore $|K|$ PL embeds into $\R^d$ by the Haefliger-Weber
theorem.
\end{proof}

\heading{Towards a weakly injective map from topological representation.}

Let $\{U_i\}$ be a good cover in $\R^d$ and $L$ be the nerve of this good
cover. In the following lemma we will establish the existence of a certain
auxiliary map $g\colon|L| \rightarrow \R^d$. In order to state the properties
of $g$, we need few preliminaries.

We say that two faces $\alpha, \beta$ in $L$ are
\emph{remote} if there is no edge $\{a,b\} \in L$ such that $a \in \alpha$ and
$b \in \beta$.
We also emphasize here a certain notational issue. We recall that the
vertices of $L$ are the sets $U_i$. Therefore it make sense to consider the
unions of faces in $L$. For example, if $\alpha := \{U_1,U_2\} \in L$, then
$$
\bigcup \alpha = \bigcup_{U_i \in \alpha} U_i = U_1 \cup U_2.
$$


\begin{figure}
\begin{center}
\includegraphics{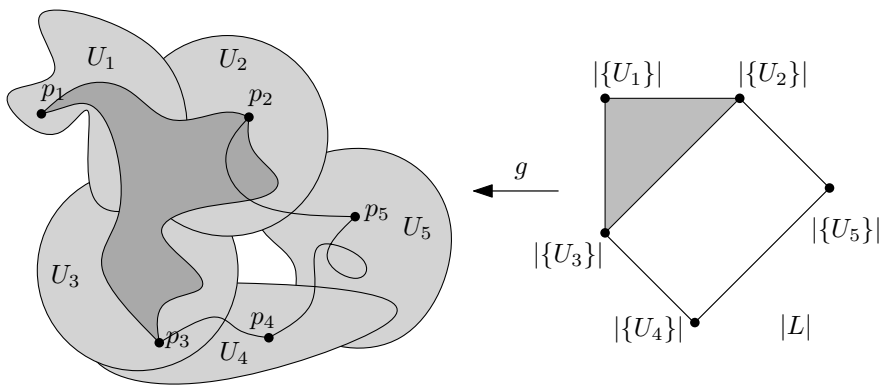}
\end{center}
\caption{Map $g$ from $|\sd L|$ into $\R^d$. The notation is slightly
simplified. For example $p_1$ stands for $g(|\{U_1\}|)$, etc.}
\label{f:map_g}
\end{figure}


\begin{lemma}

\label{l:map_g}

Let $\{U_i\}$ be a good cover in $\R^d$ and $L$ be its nerve. Then there is a map $g \colon |L| \rightarrow \R^d$ such that
\begin{enumerate}
\item[(i)] $g(|\sigma|) \subseteq \bigcup \sigma$ for each $\sigma \in
L$;
and
\item[(ii)] $g(|\alpha|) \cap g(|\beta|) = \emptyset$ for any two
remote $\alpha, \beta \in L$.
\end{enumerate}
\end{lemma}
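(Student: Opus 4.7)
The plan is to construct $g$ inductively on the skeleta of $L$ so that property (i) holds by construction; property (ii) will then be an immediate consequence, since the notion of ``remote'' is designed precisely to force the covered regions to be disjoint. Thus the only real content is property (i), which I would attack via the Nerve Theorem applied to sub-covers together with a standard skeleton-by-skeleton extension.

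First I would verify the key auxiliary fact: for every $\sigma \in L$, the set $\bigcup \sigma$ is contractible. The sub-family $\{U_i : U_i \in \sigma\}$ is itself a good cover. Since $\sigma$ is a face of $L$, the intersection $\bigcap_{U_i \in \sigma} U_i$ is nonempty, and hence so is every intersection over a nonempty subset of $\sigma$ (a smaller index set gives a larger intersection). Therefore the nerve of this sub-family is the full simplex on the vertex set $\sigma$, and Theorem~\ref{t:nerve} (whose paracompactness hypothesis is satisfied because $\bigcup \sigma \subseteq \R^d$) gives that $\bigcup \sigma$ is homotopy equivalent to $|\sigma|$, which is contractible.

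Next I would build $g$ by induction on the dimension of simplices of $L$. For each vertex $U_i$ of $L$ pick $g(U_i)$ to be an arbitrary point of $U_i$. Suppose $g$ has been defined on the $k$-skeleton of $L$ so that $g(|\tau|) \subseteq \bigcup \tau$ for every $\tau$ with $\dim \tau \leq k$. For a $(k+1)$-simplex $\sigma$, the already-constructed $g$ restricts to a continuous map $\partial |\sigma| \to \bigcup \sigma$, since for every proper face $\tau \subsetneq \sigma$ we have $g(|\tau|) \subseteq \bigcup \tau \subseteq \bigcup \sigma$. Because $\bigcup \sigma$ is contractible it is in particular $k$-connected, so this map extends to a continuous map $|\sigma| \cong D^{k+1} \to \bigcup \sigma$; we take this extension as the definition of $g$ on $|\sigma|$. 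Iterating through all dimensions completes the construction and yields (i).

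Finally, to see (ii), suppose $\alpha, \beta \in L$ are remote. For each pair $U_a \in \alpha$, $U_b \in \beta$ the pair $\{U_a, U_b\}$ is not an edge of $L$, so by definition of the nerve $U_a \cap U_b = \emptyset$. Consequently $\bigcup \alpha$ and $\bigcup \beta$ are disjoint, and combining with (i) gives $g(|\alpha|) \cap g(|\beta|) = \emptyset$. The single delicate ingredient is the contractibility of $\bigcup \sigma$ for each face $\sigma \in L$, which is where the good-cover hypothesis enters essentially; the remaining inductive extension is routine obstruction theory and the deduction of (ii) from (i) is immediate.
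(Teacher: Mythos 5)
Your proposal is correct and follows essentially the same route as the paper: define $g$ arbitrarily on vertices, extend skeleton by skeleton using the contractibility of $\bigcup\sigma$ (which the paper also obtains from the Nerve Theorem), and deduce (ii) from (i) via the observation that remoteness forces the unions to be disjoint. Your explicit verification that the subcover indexed by $\sigma$ has a full simplex as its nerve is a slightly more detailed justification of the contractibility step than the paper gives, but the argument is the same.
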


\begin{proof}
See Figure~\ref{f:map_g} while following the proof.

First we specify $g$ on vertices of $L$. Then we extend it inductively 
to higher dimensional simplices of $L$.

A vertex of $L$ is one of the sets $U_i$. We set $g(|\{U_i\}|)$ to be an arbitrary
point inside $U_i$. 
%
Note that (i) is satisfied for vertices of $L$.

Now we inductively assume that $g$ is defined on all simplices of $L$
of dimension at most $k - 1$. Our task is to extend $g$ to all simplices of
$L$ of dimension $k$. We also assume that condition (i) is valid for
all $\sigma' \in L$ of dimension at most $k-1$.

Let $\sigma$ be a $k$-simplex of $L$. From condition (i) we know that the
$g$-images of all
proper subfaces of $\sigma$ belong to $\bigcup \sigma$, so $g(|\partial
\sigma|)\subset \bigcup \sigma$. But $\partial \sigma$ is homeomorphic to
the $(k-1)$-sphere and $\bigcup \sigma$ is contractible due to the
Nerve Theorem. So we can extend $g$ defined on $|\partial \sigma|$
 to a PL map $g\colon |\sigma|\to \bigcup \sigma$. To complete the inductive step, we
extend $g$ in this way to every $k$-simplex $|\sigma|$. Note that condition (i)
is satisfied by construction.

We have defined $g$ so that it satisfies condition (i). It remains to show that
it satisfies (ii) as well. Let $\alpha$ and $\beta$ be remote simplices of $L$.
By condition (i), $g(|\alpha|) \subset \bigcup \alpha$ and
$g(|\beta|)\subset \bigcup \beta$. If the two right-hand unions had any
intersection, this would mean there exist $k,l$ such that $U_k\in\alpha$,
$U_l\in\beta$ and $U_k\cap U_l\neq \emptyset$. But this means
$\{U_k,U_l\}\in L$, so $\alpha$ and $\beta$ are not remote.
\end{proof}

\begin{proof}[Proof of Theorem~\ref{t:rep_emb}]
Let us assume that $L$ is some subdivision of $\sd K$ that is topologically $d$-representable. Let $\G$ be a topological $d$-representation of $L$. For simplicity of
notation, we assume that $L$ is the nerve of $\G$. Let $g\colon  |L|
\rightarrow \R^d$ be the map
from Lemma~\ref{l:map_g}. Our task is to show that $g$ is weakly injective
with respect to $K$.

Let $\gamma$ and $\delta$ be disjoint simplices of $K$. Let $\alpha$ be a
simplex of $L$ with $|\alpha| \subseteq |\gamma|$ and $\beta$ be a
simplex of $L$ with $|\beta| \subseteq |\delta|$. Then $\alpha$
and $\beta$ are remote in $L$ since in particular $|\alpha| \subseteq
|\gamma'|$ where $\gamma'$ is some simplex of $\sd \gamma$ and similarly with
$\beta$ and $\sd \delta$. Thus $g(|\sd \alpha|) \cap g(|\sd \beta|) =
\emptyset$ by Lemma~\ref{l:map_g}. Consequently $g(|\gamma|) \cap g(|\delta|) =
\emptyset$ for any choice of disjoint $\gamma$ and $\delta$. Therefore $g$ is
weakly injective.

We conclude by stating that Corollary~\ref{c:wi_emb} implies that $K$ PL embeds
into $\R^d$.
\end{proof}

\begin{remark}
Note that in the proof of Theorem~\ref{t:rep_emb} we only need that $L$ is a
``sufficiently fine'' subdivision in the following sense: if $\gamma$ and
$\delta$ are disjoint simplices of $K$ and if $\alpha$ and $\beta$ are
simplices of $L$ satisfying $|\alpha| \subseteq |\gamma|$, and $|\beta| \subseteq
|\delta|$, then $\alpha$ and $\beta$ are remote in $L$. Therefore,
Theorem~\ref{t:rep_emb} can be furthermore extended to such subdivisions.
\end{remark}






\section{Further questions}
We have proved that for $d \geq 5$ 
it is algorithmically undecidable whether a given simplicial complex is
topologically $d$-representable. In our proof we have
used simplicial complexes of dimension $d$. It is natural to ask whether the
recognition of topologically $d$-representable simplicial complexes becomes
algorithmic if we pose some additional restrictions on these complexes.

On the positive side, there is even a polynomial algorithm deciding whether a
given $d/2$-dimensional simplicial complex embeds into $\R^d$ (for $d \geq 6$
even, or $d= 2$).
This is true because Van Kampen's obstruction is a complete obstruction for
embeddability in this range and it is computable in a polynomial time;
see~\cite{matousek-tancer-wagner11} for more details. Therefore by 
Theorems~\ref{t:emb_rep} and \ref{t:rep_emb} we have the following corollary:\\

\begin{corollary}
\label{c:baryc_decidable}
Let $K$ be a simplicial complex of dimension $\frac{d}2$ with $d \geq 6$
even.
Then there is a polynomial time algorithm deciding whether $\sd K$ 
is topologically $d$-representable.\footnote{Theorem \ref{t:rep_emb} can be extended to the case $k=1$,
$d=2$ if we use Hanani-Tutte theorem instead of Haefliger-Weber theorem.
However, this is only a marginal improvement, therefore we do not include it
here separately. Then we could include the case $d=2$ in the corollary as well.}
\end{corollary}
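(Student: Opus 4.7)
The plan is to reduce the topological $d$-representability question for $\sd K$ to the PL embeddability question for $K$, and then invoke the polynomial-time algorithm for computing Van Kampen's obstruction cited in the paragraph preceding the corollary.

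First I would check that the metastable range condition is met. With $k = \dim K = d/2$, the inequality $k \leq (2d-3)/3$ becomes $3d/2 \leq 2d - 3$, i.e., $d \geq 6$. Since the hypothesis assumes $d \geq 6$ even, the pair $(k,d)$ lies in the metastable range, so Theorem~\ref{t:rep_emb} applies to $K$.

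Next I would establish the equivalence
\[
  \sd K \text{ is topologically $d$-representable} \iff K \text{ PL embeds into } \R^d.
\]
The forward direction is immediate from Theorem~\ref{t:rep_emb}, which says that if $\sd K$ (or any subdivision of $\sd K$) is topologically $d$-representable, then $K$ PL embeds into $\R^d$. For the converse, note that any PL embedding $f\colon |K| \to \R^d$ is also a PL embedding of $|\sd K|$, since $|K| = |\sd K|$ as topological spaces and $f$ is linear on some subdivision of $K$, which can be taken to refine $\sd K$. Thus $\sd K$ PL embeds into $\R^d$, and by Theorem~\ref{t:emb_rep} (applied to $\sd K$ itself, not to $K$) we conclude that $\sd K$ is topologically $d$-representable, even by balls.

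Finally, I would invoke the cited polynomial-time algorithm from~\cite{matousek-tancer-wagner11}: for $d \geq 6$ even and $K$ of dimension $d/2$, PL embeddability of $K$ into $\R^d$ is decidable in polynomial time via the computability of Van Kampen's obstruction (which is complete in the metastable range). Thus the algorithm for deciding topological $d$-representability of $\sd K$ is: on input $K$, compute Van Kampen's obstruction of $K$ and answer YES iff it vanishes. There is no real obstacle here; the corollary is essentially a direct combination of Theorems~\ref{t:emb_rep} and~\ref{t:rep_emb} with the known algorithmic result, and the only subtlety worth writing out is the verification that the dimensional bound $d/2 \leq (2d-3)/3$ holds precisely when $d \geq 6$.
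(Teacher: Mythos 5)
Your proposal is correct and follows exactly the paper's route: the paper derives the corollary directly from Theorem~\ref{t:emb_rep} and Theorem~\ref{t:rep_emb} combined with the polynomial-time computability of the (complete) Van Kampen obstruction for $d/2$-dimensional complexes in $\R^d$, $d\ge 6$ even. Your additional verifications --- that $k=d/2$ satisfies $k\le\frac{2d-3}{3}$ precisely when $d\ge 6$, and that a PL embedding of $K$ is also one of $\sd K$ --- are exactly the details the paper leaves implicit.
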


If $K$ is $k$-dimensional instead of specifically $\frac d2$-dimensional, it is
in general not known whether there is an algorithm deciding whether $K$
PL embeds into $\R^d$. However, based on work of \v{C}adek et
al~\cite{cadek-krcal-matousek-sergeraert-vokrinek-wagner11soda}, it is plausible to believe that this
embeddability question is decidable for all pairs $(k,d)$ in the metastable
range. If this is true, then Corollary~\ref{c:baryc_decidable} can be extended
(maybe without the polynomial time estimate) to the whole metastable range.

It should be emphasized that it is quite restrictive to look for an algorithm
with restrictions on the triangulation of the complex. Therefore it is
natural to ask what happens if we pose only dimensional restrictions:

\begin{question}
For which pairs of integers $k$ and $d$ is there an algorithm which recognizes
whether a given simplicial complex of dimension at most $k$ 
is topologically $d$-representable?
\end{question}

\begin{remark}
A simplicial complex
$K$ is topologically $d$-representable if and only if
 the disjoint union of $K$ and a
simplex of arbitrary high dimension is topologically $d$-representable.
Therefore `at most $k$' can be replaced with `exactly $k$' without changing the
outcome.
\end{remark}

Our main result says that the answer is no if $5 \leq d \leq k$.

If $d \geq 2k + 1$, then every simplicial complex of dimension at most $k$ is
topologically $d$-representable. This follows, for example, from
Theorem~\ref{t:emb_rep} and the fact that every $k$-dimensional simplicial
complex is even linearly embeddable into $\R^{2k+1}$.

If $d=1$, then it is not so hard to see that the answer is yes no matter what
is $k$, because topologically $1$-representable complexes are \emph{clique
complexes over interval graphs}.

For other pairs $(k,d)$ we do not know the answer. It would be especially
interesting if there was an algorithm in the whole metastable range.

\section*{Acknowledgment}
We would like to thank Thomas Goodwillie,
Roman Karasev,
Ji\v{r}{\'\i}  Matou\v{s}ek, Sergey Melikhov, 
Paul Siegel, Arkadiy Skopenkov and Uli Wagner for 
fruitful discussions and/or kind answers to our
questions.

\bibliographystyle{alpha}
\bibliography{/home/martin/clanky/bib/general}

\appendix
\section{Proof of Theorem~\ref{t:unrecog_acyclic}}
\label{app_acyclic}
Recall that our main result, Theorem~\ref{t:unrecog}, comes with two
supplementary variations, Theorems~\ref{t:unrecog_balls}
and~\ref{t:unrecog_acyclic}. In this section prove
Theorem~\ref{t:unrecog_acyclic}, the other two theorems being already proved.
We heavily rely on the notation from the previous parts 
of the paper, especially from Section~\ref{s:fund}.

To prove Theorem~\ref{t:unrecog_acyclic}, it clearly suffices to prove the following generalization of Proposition~\ref{p:fund}. (See the proof of Theorems~\ref{t:unrecog} and~\ref{t:unrecog_balls} in Section~\ref{sec:planofproof}.) 

\begin{proposition}
\label{p:fund_acyclic}
Let $i$ be such that $\Sigma_i$ has a nontrivial fundamental group. Then $C_i$
is not $d$-representable by an acyclic cover.
\end{proposition}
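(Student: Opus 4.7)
The plan is to mirror the proof of Proposition~\ref{p:fund}, replacing each appeal to the homotopical Nerve Theorem by Leray's cohomological Nerve Theorem for acyclic covers, which provides a natural isomorphism $H_*(\bigcup\G')\cong H_*(\mathrm{nerve}(\G'))$ for every sub-cover $\G'$ of an acyclic cover $\G$. Starting from a hypothetical acyclic cover $\G$ of $\R^d$ whose nerve is isomorphic to $C_i$, I import the notation $\BU$, $\BV$, $\BX$, $V_X$, $V_Y$, $L$, $M$ from Section~\ref{s:fund} verbatim.

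First I would verify that everything up through Claim~\ref{c:vx} survives the passage from good to acyclic covers. Applied to the sub-covers indexed by $U$ and $V$ (whose nerves are $C_i[U]=C_i[V]=\partial\Delta^d$), Leray's theorem shows $\BU$ and $\BV$ are homology $(d{-}1)$-spheres, which is the only input Lemma~\ref{l:two_components} actually needs in order to split $S^d\setminus\BV=V_X\sqcup V_Y$. The naturality statement of Claim~\ref{c:funct} remains valid because the Leray isomorphism is induced by a morphism of \v{C}ech double complexes that is manifestly functorial in subcover inclusions, so the purely homological proofs of Claims~\ref{c:notcover} and~\ref{c:vx} go through with no change, yielding $L\cup M=S^d$ with $L\cap M=\BU$ as before.

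The main obstacle, and the reason this case requires a separate appendix, is the final Seifert--van~Kampen step. In the good cover proof one identifies $\pi_1(L)$ with $\pi_1(B_i)\neq 0$ via the homotopical Nerve Theorem to contradict the triviality forced by van~Kampen. For acyclic covers Leray only produces $H_*(L)\cong H_*(B_i)=H_*(\mathrm{pt})$, and a direct integer Mayer--Vietoris calculation on $L\cup M=S^d$ yields no contradiction whatsoever: the invariant distinguishing the two alternatives of Theorem~\ref{t:novikov}, namely the fundamental group of $\Sigma_i$, is invisible to ordinary integer homology. The strategy I would explore to bridge this gap is to upgrade the Nerve Theorem to cohomology with non-trivial local coefficients --- pick a quotient $Q$ of $\pi_1(\Sigma_i)$ and the associated local system $\mathcal{L}$ on $|C_i|$ for which $H^*(B_i;\mathcal{L})\neq 0$, lift $\mathcal{L}$ compatibly to $\bigcup\G\subseteq\R^d$, and re-run the Mayer--Vietoris argument with twisted coefficients. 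The technical hurdle, which presumably accounts for the ``additional technical details'' alluded to in Section~\ref{sec:planofproof}, is to arrange that the pairwise intersections of $\G$ are $\mathcal{L}$-acyclic --- possibly after a combinatorial refinement of $\G$ that preserves the nerve --- so that the twisted Leray theorem actually applies and produces the desired contradiction.
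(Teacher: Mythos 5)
Your reduction of the homological parts of the argument to Leray's theorem is essentially what the paper does (the paper justifies the naturality in Claim~\ref{c:funct_acyclic} by a geometric construction -- Lemma~\ref{l:ind_acyclic} fattens the acyclic cover into a good cover in $\R^{d+1}$ with the same nerve and homology-preserving inclusions -- rather than by functoriality of a \v{C}ech double complex, but your route is defensible). The genuine gap is the final Seifert--van~Kampen step, which you correctly identify as the crux but then leave as a speculative programme. The twisted-coefficients strategy you sketch does not obviously work: $\Z$-acyclicity of the intersections $U_\sigma$ gives no control whatsoever over their fundamental groups, so there is no reason a nontrivial local system on $\bigcup_i U_i$ should restrict acyclically (or even trivially) to each $U_\sigma$, and no ``combinatorial refinement preserving the nerve'' is available to fix this. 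As written, your argument does not reach a contradiction.

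The paper closes this gap by a more elementary observation about the Mayer--Vietoris blowup $\Delta(\{U_i\})=\bigcup_{\sigma}\bigl(|\sigma|\times U_\sigma\bigr)$. For an \emph{arbitrary} open cover, the projection $\pr_{\bigcup_i U_i}\colon\Delta\to\bigcup_i U_i$ is a homotopy equivalence (Lemma~\ref{l:delta_homotopy}(a)), and if the cover sets are merely \emph{connected}, every loop in the nerve lifts through $\pr_N\colon\Delta\to|N|$ (Lemma~\ref{l:lift_loop}); neither fact needs the intersections to be contractible. Running these two facts through the commutative square relating $\BU\subset\BU\cup\BX$ to the nerve inclusion $U\subset U\cup X$, together with $\pi_1(U)=0$ and $\pi_1(U\cup X)\neq 0$, shows that $\pi_1(L\cap M)\to\pi_1(L)$ is not surjective (Claim~\ref{c:pi1_notsurjective}); van~Kampen applied to $L\cup M=S^d$ then yields the contradiction. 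So the fundamental group of $\Sigma_i$, invisible to integer homology as you note, is recovered not via a twisted nerve theorem but via the one piece of the homotopical nerve machinery that survives for arbitrary covers.
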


The proof of Proposition~\ref{p:fund} is given in Section~\ref{s:fund} above. We prove Proposition~\ref{p:fund_acyclic} by changing the necessary places from the proof of Proposition~\ref{p:fund}. 

\begin{proof}
Suppose $\pi_1(\Sigma_i)\neq 0$, but there is an acyclic cover $\{U_i\}$ representing $C_i$. We need to come to a contradiction.
The Nerve theorem used several times in the proof of Proposition~\ref{p:fund}
is inapplicable in the current situation. Our plan is
to trace every appearance of the Nerve theorem in the proof of Proposition~\ref{p:fund}, and reprove the conclusions derived from the Nerve theorem using a different argument. If all such conclusions are proved by arguments valid for the acyclic cover $U_i$, Proposition~\ref{p:fund_acyclic} is proved.

The proof of Proposition~\ref{p:fund} first uses
the Nerve theorem to show that $S^d\setminus\BV$ has exactly two components.
Here we can use Leray's homological version of the Nerve
theorem~\cite{leray45}; see, e.g., also Theorem~2.1 of~\cite{meshulam01}.

\begin{theorem}
\label{t:homology_acyclic}
Let $\{U_i\}$ be an acyclic cover in $\R^d$. Then the singular $\Z$-homology groups of $\bigcup_i U_i$ are isomorphic to those of the nerve $N(\{U_i\})$ of the cover.
\end{theorem}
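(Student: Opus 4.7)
The plan is to prove this classical result via the Mayer--Vietoris double complex associated to the cover $\mathcal{U} = \{U_i\}_{i \in I}$. Set
$$C_{p,q} = \bigoplus_{i_0 < i_1 < \cdots < i_p} C_q(U_{i_0} \cap \cdots \cap U_{i_p}; \Z),$$
where $C_q$ denotes singular chains, with vertical differential $\partial$ the singular boundary and horizontal differential $\delta$ the alternating-sum Čech restriction $C_q(U_{i_0}\cap\cdots\cap U_{i_p}) \to C_q(U_{i_0}\cap\cdots\cap \widehat{U_{i_k}}\cap\cdots\cap U_{i_p})$. I would then study the two spectral sequences of the total complex $\mathrm{Tot}(C_{*,*})$ obtained by filtering by rows and by columns, and show that they collapse respectively to the homology of the nerve and to the homology of $\bigcup_i U_i$.

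For the first spectral sequence, computing vertical homology first gives $E_1^{p,q} = \bigoplus H_q(U_{i_0}\cap\cdots\cap U_{i_p}; \Z)$. By the acyclicity hypothesis, $E_1^{p,q} = 0$ for $q > 0$, while on the row $q = 0$ the horizontal differential $\delta$ becomes exactly the simplicial coboundary of the nerve (one summand for each nonempty $(p{+}1)$-fold intersection, i.e.\ each $p$-simplex of $N(\mathcal{U})$). Hence $E_2^{p,0} = H_p(N(\mathcal{U}); \Z)$ and $E_2^{p,q} = 0$ for $q>0$, so the spectral sequence collapses and the total homology equals $H_*(N(\mathcal{U}); \Z)$.

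For the second spectral sequence I would first take horizontal (Čech) homology. The key lemma is that, for each fixed $q$, the augmented Čech--singular complex
$$0 \to C_q\bigl(\textstyle\bigcup_i U_i\bigr) \to \bigoplus_i C_q(U_i) \to \bigoplus_{i<j} C_q(U_i \cap U_j) \to \cdots$$
is exact. This is the standard generalized Mayer--Vietoris argument: using a partition of unity subordinate to $\mathcal{U}$ (available since any subset of $\R^d$ is paracompact) together with iterated barycentric subdivision of singular simplices and the small-simplices theorem, one builds a chain contraction. Consequently the horizontal homology vanishes except at $p = 0$, where it equals $C_q(\bigcup_i U_i)$, and taking vertical homology of this column gives $H_*(\bigcup_i U_i; \Z)$. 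Thus the total homology also equals $H_*(\bigcup_i U_i; \Z)$, which combined with the previous paragraph yields the desired isomorphism.

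The main obstacle is the exactness of the augmented Čech--singular complex: although classical, it requires the subdivision-and-partition-of-unity bookkeeping to interact correctly with both differentials. The rest is routine spectral-sequence accounting, and convergence is automatic because the cover is finite (so the double complex is concentrated in finitely many columns), which also allows one to work with either singular homology or the homotopy-colimit model without needing a separate comparison step.
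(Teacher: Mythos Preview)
The paper does not prove this theorem at all: it simply quotes it as Leray's homological Nerve theorem, with references to Leray~(1945) and to Theorem~2.1 of Meshulam. Your double-complex / spectral-sequence argument is exactly the modern incarnation of Leray's proof, so in substance you are supplying the classical argument the paper chose to cite rather than reproduce.

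The plan is sound, but your write-up contains a couple of slips worth fixing. First, with your horizontal differential $\delta\colon C_{p,q}\to C_{p-1,q}$ (inclusion-induced on intersections), the induced differential on the $q=0$ row is the simplicial \emph{boundary} of the nerve, not the coboundary; this is what gives $E_2^{p,0}\cong H_p(N(\mathcal U))$. Second, the augmented complex must run the other way,
\[
\cdots \longrightarrow \bigoplus_{i<j} C_q(U_i\cap U_j)\longrightarrow \bigoplus_i C_q(U_i)\longrightarrow C_q^{\mathcal U}\bigl(\textstyle\bigcup_i U_i\bigr)\longrightarrow 0,
\]
with the rightmost term the subcomplex of $\mathcal U$-small chains; there is no natural map from $C_q(\bigcup_i U_i)$ into $\bigoplus_i C_q(U_i)$. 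Exactness of this complex is the ``extra degeneracy'' argument (for each small simplex choose some $U_i$ containing it) and needs no partition of unity; the subdivision / small-simplices theorem is what lets you replace $C_*^{\mathcal U}$ by $C_*$ afterwards. With these corrections the argument goes through as you describe.
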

This allows to apply Lemma~\ref{l:two_components}(a) as in the original proof to conclude that $S^d\setminus \BV$ has two components.

The next places where the Nerve theorem is used are Claims~\ref{c:vx},~\ref{c:funct} and~\ref{c:notcover}.
Below we prove the following analogue of Claim~\ref{c:funct} (which is again most probably known, but we could not find a reference for it).

\begin{claim}
\label{c:funct_acyclic}
Let $A\subset B$ be two simplicial complexes
which are represented by 
acyclic covers, the cover representing $A$ 
being a subcover of the cover representing $B$.
Let $\mathbf{A}$ (resp.~$\mathbf{B}$)
be the union of all sets in the representation of $A$
(resp.~$B$). Then the following diagram is commutative,
in which the horizontal maps are inclusion-induced and the vertical maps
are isomorphisms.
$$
\begin{CD}
H_k(A)@>>>H_k(B)\\
@VV\cong V @VV\cong V\\
H_k(\mathbf{A})@>>>H_k(\mathbf{B})
\end{CD}
$$
\end{claim}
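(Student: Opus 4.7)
The plan is to mirror the proof of Claim~\ref{c:funct}, replacing each appeal to the homotopical Nerve theorem by Leray's homological version (Theorem~\ref{t:homology_acyclic}) at the level of the canonical ``blowup'' of a cover. For any open cover $\mathcal{U}$ of a subspace $Y\subseteq\R^d$, I would use Hatcher's construction~\cite[4G]{hatcher01} to produce a space $\Delta(\mathcal{U})$ together with natural projections $\pr_N\colon \Delta(\mathcal{U})\to |N(\mathcal{U})|$ and $\pr_Y\colon \Delta(\mathcal{U})\to Y=\bigcup\mathcal{U}$. Crucially, both projections are functorial in the cover: a subcover $\mathcal{U}'\subseteq\mathcal{U}$ yields an inclusion $\Delta(\mathcal{U}')\hookrightarrow\Delta(\mathcal{U})$ commuting with $\pr_N$ and $\pr_Y$.

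Applied to the acyclic covers representing $A$ and $B$, this produces the diagram
\[
\begin{CD}
A @>\subset>> B\\
@A\pr_N AA @A\pr_N AA\\
\Delta(A) @>\subset>> \Delta(B)\\
@V\pr_Y VV @V\pr_Y VV\\
\mathbf{A} @>\subset>> \mathbf{B}
\end{CD}
\]
and the task reduces to showing that each vertical arrow is an isomorphism on singular $\Z$-homology. The map $\pr_Y$ is in fact a homotopy equivalence for \emph{any} open cover of a paracompact space: the fiber over a point $x\in Y$ is the nerve of the subcover $\{U_i : x\in U_i\}$, a full simplex, and hence contractible independently of any acyclicity hypothesis. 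For $\pr_N$ I would invoke the Leray-type spectral sequence associated with the filtration of $\Delta(\mathcal{U})$ by the skeleta of $|N(\mathcal{U})|$, whose $E^2$-page reads
\[
E^2_{p,q} = H_p\bigl(N(\mathcal{U});\,\mathcal{H}_q\bigr),
\]
where $\mathcal{H}_q$ is the local system assigning to each intersection $U_{i_0}\cap\cdots\cap U_{i_p}$ its $q$-th singular homology. Acyclicity of the cover kills all rows $q>0$ and identifies $\mathcal{H}_0$ with the constant system $\Z$, so the sequence collapses at $E^2$ and the induced edge homomorphism $(\pr_N)_*\colon H_*(\Delta(\mathcal{U}))\to H_*(|N(\mathcal{U})|)$ is an isomorphism -- precisely the statement of Theorem~\ref{t:homology_acyclic}.

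Passing to $H_k$ in the displayed diagram and inverting the vertical isomorphisms then yields the claimed commutative square. The main obstacle I expect is checking that this collapse is natural with respect to the inclusion $\Delta(A)\hookrightarrow\Delta(B)$: the skeletal filtration is preserved under subcover inclusions, so the two spectral sequences form a compatible pair, and on the $E^\infty$-page the induced map becomes the inclusion-induced homomorphism $H_k(A)\to H_k(B)$ on the nerve side and $H_k(\mathbf{A})\to H_k(\mathbf{B})$ on the union side. This is purely formal once the spectral sequence is set up, but it is the step where the argument genuinely uses more than the bare statement of Theorem~\ref{t:homology_acyclic} and so must be spelled out rather than merely quoted.
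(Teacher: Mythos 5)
Your proposal is correct, but it takes a genuinely different route from the paper. You keep the single blowup diagram from Claim~\ref{c:funct} and upgrade the projection $\pr_N$ from a homotopy equivalence to a homology isomorphism by running the Mayer--Vietoris (Leray) spectral sequence of the skeletal filtration of $\Delta(\mathcal{U})$: acyclicity kills the rows $q>0$, the sequence degenerates, and naturality of the filtration under subcover inclusions gives the commutativity. The paper instead avoids spectral sequences altogether: Lemma~\ref{l:ind_acyclic} geometrically ``fattens'' the acyclic cover, one face of the nerve at a time, into a good cover $\{\bar U_i\}$ living in a higher-dimensional Euclidean space, with the same nerve and with all inclusions $U_\sigma\subseteq \bar U_{\bar\sigma}$ inducing homology isomorphisms (via the exact sequence of a pair and a suspension computation); it then reuses the homotopy-theoretic diagram of Claim~\ref{c:funct} verbatim for the good cover, appending one extra row of inclusions $\BA\subset\bar\BA$, $\BB\subset\bar\BB$ at the bottom. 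Your argument is more standard and more portable (it works for acyclic covers of any paracompact space and does not consume extra ambient dimensions), whereas the paper's construction stays at the level of elementary homotopy equivalences already set up in Lemma~\ref{l:delta_homotopy}. Two small points you should tighten if you pursue your route: contractibility of the fibers of $\pr_Y$ does not by itself prove it is a homotopy equivalence (just cite \cite[Proposition 4G.2]{hatcher01}, as the paper does), and the identification of the edge homomorphism of your spectral sequence with $(\pr_N)_*$ requires comparing against the skeletal spectral sequence of $|N(\mathcal{U})|$ itself, using that each $U_\sigma$ is connected; you flag this but it is the one step that must actually be written out.
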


The original proof of Claims~\ref{c:notcover} and~\ref{c:vx} become valid for an acyclic cover $\{U_i\}$ if we use Claim~\ref{c:funct_acyclic} instead of Claim~\ref{c:funct} and Theorem~\ref{t:homology_acyclic} instead of the Nerve theorem.

In the proof of Proposition~\ref{p:fund}, the passage after the statement of Claim~\ref{c:notcover} is the last place where the Nerve theorem is used.
There we introduced two sets $L=\BU\cup\BX$ and $M$ such that $L\cup M=S^d$
and $L\cap M=\BU$.
We know that $\pi_1(U)=0$ and $\pi_1(U\cup X)\neq 0$.
If $\{U_i\}$ is an acyclic cover, we first prove the claim
below.

\begin{claim}
\label{c:pi1_notsurjective}
The
inclusion-induced map $i: \pi_1(L\cap M)\to\pi_1(L)$ is not surjective.
\end{claim}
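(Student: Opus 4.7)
The plan is to exhibit a homomorphism from $\pi_1(L)$ to $\pi_1(\Sigma_i)$ that annihilates $\operatorname{im} i$ yet is itself surjective. Since $\pi_1(\Sigma_i)\neq 0$ by hypothesis, this immediately produces an element of $\pi_1(L)$ outside $\operatorname{im} i$.

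Concretely, I would fix a partition of unity $\{\psi_v\}_{v\in U\cup X}$ subordinate to the restricted cover $\{U_v\}_{v\in U\cup X}$ of $L$, and define the standard nerve map
$$
\phi\colon L\to |B_i|,\qquad \phi(x)=\sum_v\psi_v(x)\,v.
$$
By the collar structure of $C_i$ the induced subcomplex of $C_i$ on the vertex set $U$ equals $\partial\Delta^d$, so the restriction of the partition to $\{U_v\}_{v\in U}$ yields a compatible nerve map $\phi_0\colon \BU\to|\partial\Delta^d|$. Naturality of the Čech construction gives a commutative square
$$
\begin{CD}
\pi_1(\BU) @>i>> \pi_1(L)\\
@V(\phi_0)_*VV @VV\phi_*V\\
\pi_1(|\partial\Delta^d|) @>>> \pi_1(|B_i|).
\end{CD}
$$
Since $d\ge 5$ and $|\partial\Delta^d|\cong S^{d-1}$ is simply connected, the lower-left group vanishes, hence $\phi_*\circ i=0$.

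The heart of the argument will be to show that $\phi_*\colon\pi_1(L)\to\pi_1(|B_i|)=\pi_1(\Sigma_i)$ is surjective. Here acyclicity enters crucially: every nonempty intersection of cover elements is acyclic, hence has $\tilde H_0=0$, hence is path-connected (open subsets of $\R^d$ are locally path-connected). Given a loop $\gamma$ in $|B_i|$, I would simplicially approximate it by an edge loop $v_0,v_1,\dots,v_n=v_0$ in the $1$-skeleton of $B_i$, pick $y_j\in U_{v_j}\cap U_{v_{j+1}}$ for each edge, connect consecutive $y_{j-1},y_j$ by paths inside the path-connected $U_{v_j}$, and concatenate them to obtain a loop $\tilde\gamma\subset L$. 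As $\tilde\gamma$ passes through $U_{v_j}$, its image $\phi(\tilde\gamma)$ stays in the simplicial star $\st(v_j,B_i)$; deformation-retracting each star to its center then realizes a homotopy $\phi(\tilde\gamma)\simeq\gamma$ in $|B_i|$, so $\phi_*[\tilde\gamma]=[\gamma]$.

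Putting the pieces together, choose any nontrivial $g\in\pi_1(\Sigma_i)$ and a preimage $\tilde g\in\phi_*^{-1}(g)\subseteq\pi_1(L)$. Every element of $i(\pi_1(\BU))$ lies in $\ker\phi_*$, hence is distinct from $\tilde g$, so $i$ is not surjective. The main obstacle will be the $\pi_1$-surjectivity of $\phi_*$. In the good-cover case this is free from the Nerve theorem (where $\phi$ is a homotopy equivalence), but Leray's homological Nerve theorem (Theorem~\ref{t:homology_acyclic}) supplies no $\pi_1$ information, and the path-lifting argument above must be carried out carefully. This is the step that genuinely uses acyclicity---through path-connectedness of all nonempty cell intersections---and is precisely what distinguishes the acyclic case from the good-cover proof of Section~\ref{s:fund}.
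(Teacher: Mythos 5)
Your argument is correct and matches the paper's in substance: both proofs compare the inclusion $\pi_1(\BU)\to\pi_1(L)$ with the corresponding map of nerves $\pi_1(|\partial\Delta^d|)\to\pi_1(|B_i|)$, kill the image of $i$ using $\pi_1(S^{d-1})=0$, and get surjectivity onto $\pi_1(|B_i|)\neq 0$ by lifting edge loops through the cells, which are path-connected because they are acyclic open subsets of $\R^d$ --- the paper merely routes this comparison through the blowup space $\Delta(\{U_i\})$ and Lemma~\ref{l:lift_loop} rather than through a partition-of-unity nerve map. One small repair: the ``restriction'' of the partition of unity to $\{U_v\}_{v\in U}$ need not sum to $1$ (or even to a positive number) at points of $\BU$ covered only by $X$-cells, so $\phi_0$ should be built from a separate partition of unity subordinate to $\{U_v\}_{v\in U}$ as a cover of $\BU$, the square then commuting up to the straight-line homotopy inside the simplex spanned by $\{v\colon x\in U_v\}$.
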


Now we see from Seifert-van Kampen's theorem that the group
$\pi_1(L\cup M)$ has a quotient isomorphic to $\pi_1(L)/\mathrm{Im}\,
i$ which is non-zero by Claim~\ref{c:pi1_notsurjective}. On the other hand,
$\pi_1(L\cup M)=\pi_1(S^d)=0$, a contradiction.
Proposition~\ref{p:fund_acyclic} is proved modulo Claims~\ref{c:funct_acyclic} and~\ref{c:pi1_notsurjective}.
\end{proof}

To prove the remaining claims, let us recall an explicit construction
of $\Delta$-sets that appeared previously in the proof of Claim~\ref{c:funct}.
\begin{definition}
Let $\{U_i\}$ be an arbitrary (finite) open cover, i.e.,~a collection of open
sets in $\R^d$. Let $N = N(\{U_i\})$ be the nerve of this cover. For $\sigma
\in N$ we let $U_\sigma$ to denote the intersection of all $U_i$ corresponding
to the vertices of $\sigma$. For further use, we also set $U_{\emptyset} =
\bigcup_i U_i$.
We define $\Delta(\{U_i\})$ as a subset of $|N| \times
\bigcup_i U_i$ given by $\bigcup_{\sigma \in N} (|\sigma| \times U_{\sigma})$. 

There are two natural projections $\pr_{N}\colon\Delta(\{U_i\})\to |N|$
coming as the projection to the first factor and 
$\pr_{\bigcup_i U_i}\colon\Delta(\{U_i\})\to \bigcup_iU_i$
coming from the second factor.
\end{definition}



These projections yield homotopy equivalences as described in the following
lemma (we have already used these homotopy equivalences in the proof of
Claim~\ref{c:funct}).

\begin{lemma}
\label{l:delta_homotopy}

\begin{itemize}
\item[(a)]
For any cover $\{U_i\}$, the map $\pr_{\bigcup_iU_i}$ is a homotopy equivalence
\cite[Proposition 4G.2]{hatcher01},
\cite[proof of Theorem 3.21, Step 1]{prasolov06}.

\item[(b)]
For a good cover $\{U_i\}$, the map $\pr_N$ is a homotopy equivalence
\cite[Colollary 4G.3]{hatcher01},
\cite[proof of Theorem 3.21, Step 2]{prasolov06}.
\end{itemize}
\end{lemma}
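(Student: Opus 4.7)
The plan is to construct explicit homotopy inverses to $\pr_{\bigcup_i U_i}$ and $\pr_N$, following the standard approach behind Borsuk's nerve theorem. The geometric picture is that $\Delta(\{U_i\})$ sits over $\bigcup_i U_i$ with fibers that are always simplices, and over $|N|$ with fibers $U_\sigma$ that are contractible in the good-cover setting. Explicitly, for $x \in \bigcup_i U_i$ let $\sigma(x) \in N$ denote the simplex whose vertex set is $\{v_i : x \in U_i\}$ (this is indeed a simplex of $N$, since all these $U_i$ share the point $x$). Then the fiber of $\pr_{\bigcup_i U_i}$ over $x$ is $|\sigma(x)|$, and the fiber of $\pr_N$ over an interior point of $|\sigma|$ is $U_\sigma$. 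Both families of fibers are contractible in the respective settings, making both projections plausibly homotopy equivalences.

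For part (a), I would fix a partition of unity $\{\phi_i\}$ subordinate to $\{U_i\}$ on the paracompact space $\bigcup_i U_i$ and define a section
$$
s \colon \bigcup_i U_i \to \Delta(\{U_i\}), \qquad s(x) = \Bigl(\sum_i \phi_i(x)\, v_i,\; x\Bigr),
$$
where $v_i$ denotes the vertex of $|N|$ corresponding to $U_i$. Since the vertices $v_i$ with $\phi_i(x) > 0$ span a face of $\sigma(x)$, the first coordinate lies in $|\sigma(x)|$, so $s(x) \in \Delta(\{U_i\})$. By construction $\pr_{\bigcup_i U_i} \circ s = \id$. Conversely, for $(p,x) \in \Delta(\{U_i\})$ with $p \in |\sigma|$ and $x \in U_\sigma$ one has $\sigma \subseteq \sigma(x)$, so the straight-line homotopy
$$
H_t(p,x) = \Bigl((1-t)p + t \sum_i \phi_i(x)\, v_i,\; x\Bigr)
$$
stays inside $|\sigma(x)| \times U_{\sigma(x)} \subseteq \Delta(\{U_i\})$ and joins $\id$ to $s \circ \pr_{\bigcup_i U_i}$. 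Note that no contractibility assumption on $U_\sigma$ is used here, which is why part (a) holds for arbitrary open covers.

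For part (b), no analogous canonical section of $\pr_N$ exists, since a point of $|N|$ carries no distinguished lift into its fiber $U_\sigma$. My proposal is to build a section $t \colon |N| \to \Delta(\{U_i\})$ by induction on the skeleta of $|N|$: choose a point in each $U_{\{v_i\}}$ for vertices; over each edge $|\{v_i,v_j\}|$, extend using path-connectedness of $U_{\{v_i,v_j\}}$; and inductively extend over a $k$-simplex $|\sigma|$ by using contractibility of $U_\sigma$ to fill in a continuous map from $\partial |\sigma|$. A homotopy $t \circ \pr_N \simeq \id$ is then produced by the same skeleton-by-skeleton procedure applied to the mapping cylinder of $\pr_N$. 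The main technical obstacle is keeping all these choices coherent along faces and along the homotopy parameter; this coherence bookkeeping is precisely what is carried out in detail in the cited Hatcher and Prasolov references, and I would appeal directly to those rather than reproduce it.
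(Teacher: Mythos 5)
The paper does not actually prove this lemma --- it is stated purely as a citation to Hatcher and Prasolov --- so your proposal supplies more than the paper does, and the comparison is really with the standard proofs in those references. Your argument for part (a) is complete and correct as written: since $\{U_i\}$ is a finite open cover of a paracompact space a subordinate partition of unity exists, the section $s$ lands in $\Delta(\{U_i\})$ because $\{v_i : \phi_i(x)>0\}$ spans a face of $\sigma(x)$, and the straight-line homotopy stays inside the single piece $|\sigma(x)|\times U_{\sigma(x)}$ because $p\in|\sigma|$ and $x\in U_\sigma$ force $\sigma\subseteq\sigma(x)$; this is exactly the Segal/Hatcher argument (Step 1 in Prasolov). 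For part (b) your outline is the standard route, and you correctly identify the one genuine difficulty: the coherence of the skeleton-by-skeleton choices. Note that the section must be \emph{carrier-preserving}, i.e.\ of the form $t(p)=(p,u(p))$ with $u(|\sigma|)\subseteq U_\sigma$, and that $\pr_N^{-1}(|\sigma|)$ is not simply $|\sigma|\times U_\sigma$ (it also contains $|\tau|\times U_\tau$ for proper faces $\tau$, with $U_\tau\supsetneq U_\sigma$), so the inductive construction of the deformation of $\id_{\Delta}$ to $t\circ\pr_N$ has to be carried out relative to carriers; this is precisely the bookkeeping in Hatcher's 4G.3 and Prasolov's Step 2, to which you (like the paper) ultimately defer. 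Since the paper itself offers only the citation, your treatment is acceptable and, for part (a), strictly more informative.
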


We also need a supplementary construction turning an acyclic cover into a good
cover (in higher dimensional space) while keeping the nerve. The following
lemma summarize an induction step.

\begin{lemma}
\label{l:ind_acyclic}
Let $\{U_i\}_{i=1}^n$ be an acyclic cover in $\R^d$, $F$ be a filter on
$N = N(\{U_i\})$ (that is, $F \subseteq N$ and if $\sigma' \supseteq
\sigma \in F$, then $\sigma' \in F$) and
$\vartheta$ be a nonempty inclusionwise maximal element of $N\setminus F$.
Let us assume that $U_\sigma$ is contractible for every $\sigma \in F$. Then
there is an open cover $\{\hat U_i\}_{i=1}^n$ in $\R^{d+1}$ satisfying the
following properties.

\begin{enumerate}
\item $U_i \subseteq \hat U_i$ and this inclusion induces an isomorphism
between nerves $N$ and $\hat N := N(\{\hat U_i\})$.
\item $\hat U_{\hat \sigma}$ is contractible for every $\sigma \in \hat F := F
\cup \{\vartheta\}$ where $\hat \sigma$ is an image of $\sigma$ via the
isomorphism from property 1.
\item The inclusion $U_\sigma \subseteq \hat
U_{\hat \sigma}$ induces an isomorphism in all homology groups; in particular
the cover $\{\hat U_i\}$ is acyclic. (Here we also allow $\sigma =
\emptyset$, so the inclusion $\bigcup_i U_i \subseteq \bigcup_i \hat U_i$
induces an isomorphism on all homology groups as well.)
\end{enumerate}

\end{lemma}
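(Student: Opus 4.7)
The plan is to construct the new cover in $\R^{d+1} = \R^d \times \R$ by a product construction with an extra ``handle region'' in the positive $t$-direction. For $i \notin \vartheta$, set $\hat U_i := U_i \times (-1, 1)$. For $i \in \vartheta$, set $\hat U_i := (U_i \times (-1, 2)) \cup W$, where $W \subset \R^d \times (1, \infty)$ is an open contractible subset to be constructed, chosen so that $W \cap (\R^d \times (-1, 2)) = U_\vartheta \times (1, 2)$. Building such a $W$ is the heart of the proof and I postpone it to the last paragraph.

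Assuming $W$ exists, properties (1)--(3) are verified combinatorially. Using the identity $\bigcap_{i \in \sigma}(A_i \cup W) = \bigl(\bigcap_{i \in \sigma} A_i\bigr) \cup W$ (valid since $W$ lies in every $A_i \cup W$), and noting that $\hat U_j$ for $j \notin \vartheta$ sits in $t \in (-1, 1)$ and is therefore disjoint from $W$, I compute that $\hat U_\sigma = U_\sigma \times (-1, 1)$ whenever $\sigma \not\subseteq \vartheta$, and $\hat U_\sigma = (U_\sigma \times (-1, 2)) \cup W$ whenever $\sigma \subseteq \vartheta$. Nerve preservation follows immediately. Properties (2) and (3) for $\sigma \not\subseteq \vartheta$ reduce to the obvious fact that the inclusion $U_\sigma \hookrightarrow U_\sigma \times (-1, 1)$ is a homotopy equivalence. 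For $\sigma \subseteq \vartheta$, I apply Seifert--van Kampen and Mayer--Vietoris to $A := U_\sigma \times (-1, 2)$ and $B := W$, with $A \cap B = U_\vartheta \times (1, 2) \simeq U_\vartheta$ (using $U_\vartheta \subseteq U_\sigma$). When $\sigma \in F$, both $A \simeq U_\sigma$ and $B$ are contractible while $A \cap B \simeq U_\vartheta$ is acyclic; Mayer--Vietoris then yields trivial reduced homology of $\hat U_\sigma$, and van Kampen gives $\pi_1(\hat U_\sigma) = 0$. When $\sigma = \vartheta$, the inclusion $A \cap B \hookrightarrow A$ is itself a homotopy equivalence, so van Kampen's amalgamated product collapses $\pi_1(U_\vartheta)$ to the trivial group, while Mayer--Vietoris again kills all positive-degree homology. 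In both cases $\hat U_\sigma$ is simply connected and acyclic, hence contractible because open subsets of Euclidean space have CW homotopy type. The same Mayer--Vietoris computation, now only invoking acyclicity of $A$, $B$, and $A \cap B$, shows that $\hat U_\sigma$ is acyclic for every $\sigma \in N$ with $\sigma \subseteq \vartheta$ (even when $\sigma \notin F \cup \{\vartheta\}$), which together with path-connectedness of both $U_\sigma$ and $\hat U_\sigma$ implies property (3) in all cases. The $\sigma = \emptyset$ case follows from the same argument applied with $A$ replaced by $Y := \bigcup_{j \notin \vartheta}(U_j \times (-1, 1)) \cup \bigcup_{i \in \vartheta}(U_i \times (-1, 2))$, which deformation-retracts onto $\bigcup_i U_i$.

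The main obstacle is the construction of $W$. The required boundary condition forces $W$ to contain $U_\vartheta \times (1, 2) \simeq U_\vartheta$, which is acyclic but may carry a non-trivial (necessarily perfect) fundamental group, yet $W$ itself must be contractible. I would construct $W$ by realizing a Quillen plus-construction geometrically in the upper region $\R^d \times [2, \infty)$: starting from $W_0 := U_\vartheta \times (1, 2)$, choose a countable generating set $\{[\gamma_k]\}$ for $\pi_1(U_\vartheta)$, represent each generator by a loop in $U_\vartheta \times \{2 - \delta\}$ for small $\delta$, and attach thin open tubular neighborhoods of embedded $2$-disks whose interiors lie in $\R^d \times [2 - \delta, \infty)$ and whose boundaries realize these loops; then attach tubular neighborhoods of embedded $3$-disks bounding $2$-spheres that generate the $H_2$ created by the previous step. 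Since $d + 1 \geq 6$ in the setup of Theorem~\ref{t:unrecog_acyclic}, general position permits the $2$- and $3$-disks to be embedded pairwise disjointly, and their tubular neighborhoods can be made narrow enough that in the region $t \in (1, 2)$ they remain inside $U_\vartheta \times (1, 2)$, preserving the constraint $W \cap (\R^d \times (-1, 2)) = U_\vartheta \times (1, 2)$. The standard plus-construction calculation then shows that $W$ is simply connected and acyclic, hence contractible.
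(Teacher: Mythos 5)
Your overall architecture is the same as the paper's: thicken each $U_i$ to a product with an interval, and for the members of $\vartheta$ attach an extra contractible region $W$ over $U_\vartheta$ in the upper half of $\R^{d+1}$. Your combinatorial bookkeeping ($\hat U_\sigma = U_\sigma\times(-1,1)$ for $\sigma\not\subseteq\vartheta$, $\hat U_\sigma = (U_\sigma\times(-1,2))\cup W$ for $\sigma\subseteq\vartheta$), the nerve preservation, and the Mayer--Vietoris/van~Kampen computations are all correct (note only that the case ``$\sigma\in F$ and $\sigma\subseteq\vartheta$'' is vacuous, since $F$ is a filter and $\vartheta\notin F$). The problem is the step you yourself flag as the heart of the proof: the construction of $W$. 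Realizing a Quillen plus construction as an open subset of $\R^{d+1}$ is not carried out and has real gaps. The group $\pi_1(U_\vartheta)$ may be infinitely generated, so you need infinitely many thickened $2$- and $3$-disks; you would have to arrange them locally finitely so that the union is open and actually has the homotopy type of the CW plus construction, and you would have to produce the spherical generators of the relative $H_2$ created by the $2$-handles before attaching the $3$-handles --- none of this is addressed. Moreover, the general-position claim is false in the critical case $d=5$: two $3$-disks in general position in $\R^6$ intersect in isolated points ($3+3=6$), so they cannot be made disjoint by perturbation alone.

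The missing idea is that no surgery on $\pi_1(U_\vartheta)$ is needed at all: a \emph{cone} over $U_\vartheta$ is contractible regardless of its fundamental group. This is exactly what the paper does: for $i\in\vartheta$ it takes $\hat U_i := U_i\times(0,1)\cup\cone\bigl((*,2),\,U_\vartheta\times\{1\}\bigr)\cup B\bigl((*,2),\tfrac12\bigr)$, the small ball at the apex serving only to keep the set open. In your setup you can keep your cleaner open decomposition and simply take $W := \bigl(U_\vartheta\times(1,2]\bigr)\cup\cone\bigl((*,3),\,U_\vartheta\times\{2\}\bigr)\cup B\bigl((*,3),\tfrac12\bigr)$; this $W$ is open, contractible (retract the collar up to the cone's base, then contract the cone to its apex), and satisfies $W\cap(\R^d\times(-1,2))=U_\vartheta\times(1,2)$. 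With that substitution the rest of your argument goes through verbatim, and your Mayer--Vietoris along the honest open intersection $U_\vartheta\times(1,2)$ is in fact slightly tidier than the paper's quotient-space computation $H_k(\hat U_{\hat\sigma},U_\sigma\times(0,1))\cong\tilde H_k(\Sigma U_\vartheta)=0$.
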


\begin{proof}
We set
\begin{itemize}
\item $\hat U_i := U_i \times (0,1)$ if $U_i \notin \vartheta$;
\item $\hat U_i := U_i \times (0,1) \cup \cone((*,2), U_{\vartheta} \times
\{1\})
\cup B((*,2), \frac12)$ if $U_i \in \vartheta$. Here $*$ is an arbitrary (fixed) point of $\R^d$,
$\cone(a,X)$ denotes the cone with apex $a$ and basis $X$, and $B(c,r)$ denotes
the open ball with center $c$ and radius $r$. See Figure~\ref{f:ac_gc}. 
\end{itemize} 

\begin{figure}
\begin{center}
\includegraphics{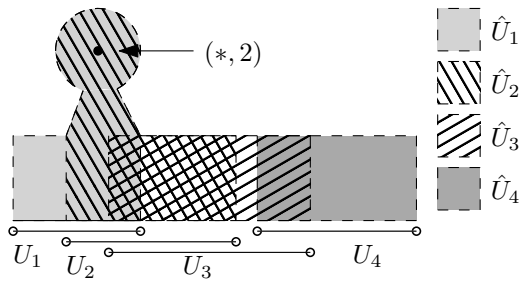}
\caption{Construction of $\{\hat U_i\}$ from $\{
U_i\}$ assuming $\vartheta = \{U_1,U_2\}$.}
\label{f:ac_gc}
\end{center}
\end{figure}

Now obviously $\hat U_i$ are open sets and $U_i \subseteq \hat U_i$ if we identify $U_i$ with $U_i \times
\{\frac12\}$. We consecutively check the properties. 

Given $\sigma = \{U_{i_1}, \dots, U_{i_k}\}$ we let $\hat \sigma := \{\hat
U_{i_1}, \dots, \hat U_{i_k}\}$. If $\sigma \in N$ then $\hat \sigma \in \hat N$ since $U_i \subseteq \hat U_i$. On the other hand, if $\hat \sigma = \{\hat U_{i_1}, \dots, \hat
U_{i_k}\}$ belongs to $\hat N$ then there is a witness $x \in \bigcap_k \hat
U_{i_k}$. Assuming $x = (x',x_{d+1}) \in \R^d \times \R$ we have either 
$x_{d+1} \in (0,1)$ which obviously implies $\sigma \in N$ or $x_{d+1} \in
[1,2.5)$ which implies $\hat \sigma \subseteq \hat \vartheta$; therefore $\sigma
\in N$.  

Next let us assume that $\sigma \in N$. If $\sigma \in F$, then $\hat U_{\hat
\sigma} = U_\sigma \times (0,1)$, therefore $\hat U_{\hat \sigma}$ is
contractible. We also have $\hat U_{\hat \vartheta} = 
U_{\vartheta} \times (0,1) \cup \cone((*,2) \cup B((*,2),\frac12)$ which is
contractible.

Finally, we show that the inclusion $U_\sigma \times (0,1) \subseteq \hat
U_{\hat \sigma}$
induces an isomorphism on homology groups which is sufficient since the
inclusion $U_i \subseteq U_i \times (0,1)$ obviously induces an isomorphism
(recalling identification of $U_i$ and $U_i \times \{\frac12\}$). 
Let us also assume that $\sigma \subseteq \vartheta$, otherwise $U_\sigma
\times (0,1) = \hat U_{\hat \sigma}$.
From the
exact sequence of the pair, it is sufficient to show that the homology of the
pair $(\hat U_{\hat \sigma}, U_{\sigma} \times (0,1))$ vanishes. We have
\begin{multline*}
H_k(\hat U_{\hat \sigma},U_{\sigma} \times (0,1))
\cong 
\tilde H_k(\hat U_{\hat \sigma}/(U_{\sigma} \times (0,1)))
\cong
\tilde H_k(\Sigma (U_{\vartheta}))
\cong 
\tilde H_{k-1}(U_{\vartheta})=0,
\end{multline*}
where $\Sigma$ denotes the suspension. The last equality holds because
$\{U_i\}$ is an acyclic cover.

\end{proof}

\begin{proof}[Proof of Claim~\ref{c:funct_acyclic}]
With the construction of $\{\hat U_i\}$ from $\{U_i\}$ at hand, we are ready to prove Claim~\ref{c:funct_acyclic}.

Starting with an acyclic cover $\{\hat U_i\}$ we first set $F = \emptyset$.
Then we repeatedly apply Lemma~\ref{l:ind_acyclic} adding to the filter an 
inclusionwise maximal $\vartheta$ which is not in the filter yet. After $|N| - 1$
steps we obtain an acyclic cover $\{\bar U_i\}$ satisfying the conclusion with
$\hat F = N \setminus \{\emptyset\}$, therefore this cover is a good cover (note that properties 1
and 3 remain valid when iterating the construction).

Let $\{A_i\}\subset\{B_i\}$ be the covers representing $A$ and $B$,
respectively. Let $\{\bar B_i\}$ be the good cover obtained by the above
construction, and $\{\bar A_i\}\subset \{\bar B_i\}$ be the subcover
consisting of those sets that correspond to the sets of $\{A_i\}$. Clearly, the
cover $\{\bar A_i\}$ inherits from $\{\bar B_i\}$ the three  properties listed above.
Let $\BA=\bigcup_i A_i$, $\bar \BA=\bigcup_i \bar A_i$, and $\BB$,
$\bar \BB$ be defined analogously.
We have the following commutative diagram:
$$
\begin{CD}
A @>\subset >> B\\
@A\pr_{A} AA @A\pr_B AA\\
\Delta(\{\bar A_i\}) @>\subset >> \Delta(\{\bar B_i\})\\
@V\pr_{\bar\BA}VV @V\pr_{\bar \BB}VV\\
\bar \BA @>\subset >> \bar\BB \\
@A\subset AA @A\subset AA\\
\BA @>\subset>> \BB
\end{CD}
$$
Consider the induced maps in homology. All vertical maps then become
isomorphisms. The lower two vertical maps are isomorphisms by the property
3 of Lemma~\ref{l:ind_acyclic}. The other four vertical maps are
isomorphisms by property 1 of the same lemma, by the fact that $\{\bar A_i\}$
and $\{\bar B_i\}$ are good covers and by Lemma~\ref{l:delta_homotopy}.  
\end{proof}

\begin{figure}[h]
\centering
\includegraphics{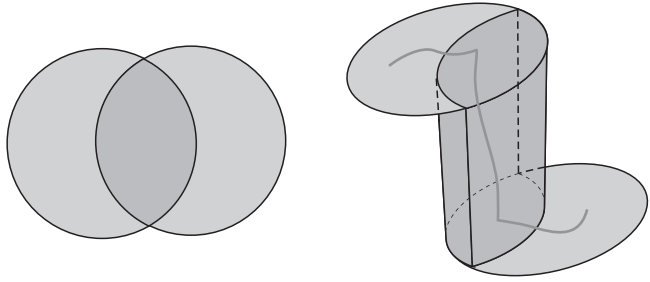}
\caption{Left: a cover $\{U_1,U_2\}$ consisting of two sets. Right: the space $\Delta(\{U_1,U_2\})$, and a gray path from the proof of Lemma~\ref{l:lift_loop}.}
\label{f:DeltaSpace}
\end{figure}

To prove Claim~\ref{c:pi1_notsurjective} we need the following lemma.

\begin{lemma}
\label{l:lift_loop}
Let $\{U_i\}$ be a cover such that all $U_i$ are  connected, and let $N(\{U_i\})$ be the nerve of $\{U_i\}$. For each $\gamma\in\pi_1(N(\{U_i\}))$, there is $\gamma '\in\pi_1(\Delta(\{U_i\}))$
such that $(\pr_{N(\{U_i\})})_* \gamma '=\gamma$.
\end{lemma}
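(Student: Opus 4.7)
The plan is to construct an explicit lift of a representative loop, exploiting the product structure on each piece $|\sigma|\times U_\sigma$ of $\Delta(\{U_i\})$. Fix a basepoint $* \in |N(\{U_i\})|$; without loss of generality we may assume that $*$ is a vertex, say $* = \{U_{i_0}\}$, and pick a basepoint $\tilde{*} \in \{*\}\times U_{i_0} \subset \Delta(\{U_i\})$ projecting to $*$ under $\pr_{N(\{U_i\})}$.

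Given $\gamma \in \pi_1(N(\{U_i\}), *)$, I would first apply cellular (equivalently, simplicial) approximation to homotope $\gamma$ into the $1$-skeleton of $|N(\{U_i\})|$. Then $\gamma$ is represented by a concatenation of oriented edges passing through a sequence of vertices $v_0 = *, v_1, \ldots, v_k = *$, where $v_j = \{U_{i_j}\}$ and each edge $e_j = [v_{j-1}, v_j]$ satisfies $U_{i_{j-1}} \cap U_{i_j}\neq\emptyset$.

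For each $j \in \{1,\ldots,k\}$ I would pick a point $p_j \in U_{i_{j-1}} \cap U_{i_j}$ and lift $e_j$ ``horizontally'' via $\tilde{e}_j(t) := ((1-t)v_{j-1} + t v_j,\, p_j)$, which lands in $|e_j|\times U_{e_j}\subset \Delta(\{U_i\})$. For consecutive indices $j$ and $j+1$ the endpoint $(v_j, p_j)$ of $\tilde{e}_j$ and the starting point $(v_j, p_{j+1})$ of $\tilde{e}_{j+1}$ both lie in the vertex fiber $\{v_j\}\times U_{i_j}\subset \Delta(\{U_i\})$. Since every connected open subset of $\R^d$ is path-connected, I can join them by a ``vertical'' path inside that fiber; analogous vertical segments in $\{v_0\}\times U_{i_0}$ connect $\tilde{*}$ to $(v_0, p_1)$ and $(v_0, p_k)$ back to $\tilde{*}$. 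Concatenating all the horizontal lifts with these vertical connectors produces a loop $\gamma'$ in $\Delta(\{U_i\})$ based at $\tilde{*}$ whose $\pr_{N(\{U_i\})}$-image is $\gamma$ with constant loops inserted at the vertices $v_j$, and hence is homotopic to $\gamma$ in $|N(\{U_i\})|$.

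The main conceptual point is that $\pr_{N(\{U_i\})}$ need not be a fibration, so unique path-lifting is unavailable; however, the explicit description of $\Delta(\{U_i\})$ as a union of products $|\sigma|\times U_\sigma$ furnishes enough room to perform $1$-dimensional lifts, provided the vertex fibers $\{v\}\times U_i$ are path-connected --- which is precisely the hypothesis of the lemma. I do not foresee any essential obstacle beyond this bookkeeping; in particular, since the statement asks only for surjectivity (rather than injectivity) of $(\pr_{N(\{U_i\})})_*$, we need not worry about comparing different lifts of the same loop.
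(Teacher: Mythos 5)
Your proposal is correct and follows essentially the same route as the paper's proof: restrict to the $1$-skeleton, lift each edge horizontally into $|e|\times U_e$, and splice the lifts with vertical paths in the path-connected vertex fibers. The only difference is that you spell out explicitly the details that the paper delegates to a figure.
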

\begin{proof}
Every element $\gamma\in\pi_1(N(\{U_i\}))$ can be realized by a loop, also denoted by $\gamma$, that belongs to the 1-skeleton of $N(\{U_i\})$. Such loop can be divided into pieces: each piece is an oriented 1-dimensional edge of $N(\{U_i\})$. For every such $e$, there is a path in $\Delta(\{U_i\})$, shown on Figure~\ref{f:DeltaSpace}, whose projection is $e$. Because every $U_i$ is connected, the paths in $\Delta(\{U_i\})$ can be joined together to form the loop $\gamma '$ whose projection is $\gamma$.
\end{proof}

\begin{proof}[Proof of Claim~\ref{c:pi1_notsurjective}]
We use notation from Section~\ref{s:fund}.
In particular,  $U=N(\{\BU_i\})$ is the nerve,
$\BU=\bigcup_i \BU_i$. (The sets of the cover are denoted by bold letters in Section~\ref{s:fund}.) Abusing notation, we will write $\Delta(U)$ instead of $\Delta(\{\BU_i\})$. We apply similar notation to $\{X_i\}$. 
Recall that $L\cap M=\BU$, $L=\BU\bigcup \BX$, and also that $\pi_1(U)=0$, $\pi_1(U\cup X)\neq 0$.
Consider the following commutative diagram:
$$
\begin{CD}
U @>\subset>>U\cup X\\
@A\pr_{U} AA @A\pr_{U\cup X} AA\\
\Delta(U)@>>>\Delta(U\cup X)\\
@V\pr_{\BU}V\sim V @V\pr_{\BU\cup \BX}V\sim V
\\
\BU @>\subset>>\BU\cup\BX
\end{CD}
$$
Take any element $\gamma\neq 0$ in $\pi_1(U\cup X)$.
By  Lemma~\ref{l:lift_loop}, find a loop $\gamma '$ in $\pi_1(\Delta(U\cup X))$
such that $\pr_{U\cup X}(\gamma')=\gamma$.
By Lemma~\ref{l:delta_homotopy}(a) the lower vertical maps are homotopy equivalences.
So if $\pi_1(U)\to \pi_1(U\cup X)$ were surjective, there would exist $\alpha\in \pi_1(\Delta(U))$ whose image under the inclusion-induced map $\pi_1(\Delta(U))\to\pi_1(\Delta(U\cup X))$ equals $\gamma'$.
By commutativity of the upper square of the diagram, we see that
$\gamma=0$ because $\pi_1(U)=0$, a contradiction.
\end{proof}


%
%
%
%


\end{document}